\documentclass[journal,onecolumn,12pt,twoside]{IEEEtran}

\usepackage{graphicx,multicol}
\usepackage{color,soul}
\usepackage{mathtools}
\usepackage{amsmath, amssymb}
\usepackage{amsthm} 
\usepackage{mdframed}
\usepackage{mathrsfs}
\usepackage{cite}
\usepackage{soul}
\usepackage{algorithmic}
\usepackage{array}
\usepackage[font=small]{caption}
\usepackage{subcaption}
\usepackage{cases}
\usepackage{multirow}
\usepackage[draft]{hyperref}
\usepackage[norelsize, linesnumbered, ruled, lined, boxed, commentsnumbered]{algorithm2e}
\usepackage{setspace}
\usepackage[normalem]{ulem}
\usepackage{empheq}
 \usepackage{tikz,lipsum,lmodern}
\usepackage{diagbox}
\usepackage[most]{tcolorbox}
\SetKwInput{KwInput}{Input}                
\SetKwInput{KwOutput}{Output}              

%
%
%

\newtheorem{theorem}{Theorem}
\newtheorem{definition}{Definition}

\newtheorem{proposition}{Proposition}
\newtheorem{lemma}{Lemma}
\newtheorem{rem}{Remark}

\newcommand{\nn}{\nonumber}  
\newcommand{\bieee}{\begin{eqnarray}{rCl}}
\newcommand{\eieee}{\end{eqnarray}}


\makeatletter 
\pretocmd\@bibitem{\color{black}\csname keycolor#1\endcsname}{}{\fail}
\newcommand\citecolor[1]{\@namedef{keycolor#1}{\color{blue}}}
\makeatother

\hyphenation{op-tical net-works semi-conduc-tor}
\setstretch{1.49}
\linespread{1.478}

\begin{document}

\title{Constellation Design for Non-Coherent Fast-Forward Relays to Mitigate Full-Duplex Jamming Attacks}

\author{Vivek~Chaudhary and
        Harshan~Jagadeesh
\thanks{V. Chaudhary and H. Jagadeesh are with the Department
of Electrical Engineering, Indian Institute of Technology, Delhi, 110016, India e-mail: (chaudhary03vivek@gmail.com, jharshan@ee.iitd.ac.in).}
\thanks{Parts of this work have been presented in IEEE Globecom, 2021, Madrid, Spain \cite{my_GCOM}.}}

\maketitle

\begin{abstract}
With potential applications to short-packet communication, we address communication of low-latency messages in fast-fading channels under the presence of a reactive jammer. Unlike a traditional jammer, we assume a full-duplex (FD)  jammer capable of detecting pre-existing countermeasures and subsequently changing the target frequency band. To facilitate reliable communication amidst a strong adversary, we propose non-coherent fast-forward full-duplex relaying scheme wherein the victim uses a helper in its vicinity to fast-forward its messages to the base station, in addition to ensuring that the countermeasures are undetected by the FD adversary. Towards designing the constellations for the proposed scheme, we identify that existing non-coherent constellation for fast-fading channels are not applicable owing to the cooperative nature of the fast-forward scheme. As a result, we formulate an optimization problem of designing the non-coherent constellations at the victim and the helper such that the symbol-error-probability at the base station is minimized. We theoretically analyze the optimization problem and propose several strategies to compute near-optimal constellations based on the helper's data-rate and fast-forwarding abilities. We show that the proposed constellations provide near-optimal error performance and help the victim evade jamming. Finally, we also prove the scheme’s efficacy in deceiving the countermeasure detectors at the jammer.
\end{abstract}

\begin{IEEEkeywords}
\centering
Jamming, non-coherent communication, fast-forward relays, full-duplex.
\end{IEEEkeywords}

\IEEEpeerreviewmaketitle

\section{Introduction}
The next generation of wireless networks are pitched to enable new services by providing ultra-reliable and low-latency communication links, such as control of critical infrastructure, autonomous vehicles, and medical procedures. These applications often have mission-critical updates and use short-packet communication with low-rate signalling, e.g. control channel messages (PUCCH) in 5G \cite[Sec.6.3.2]{standard}, and status updates in IoT \cite{SP_DnF}. Since these packets have strict latency constraints, it makes them susceptible to security threats. One popular attack model is the jamming attack, because of which the receiver is unable to decode the packet resulting in deadline violations. Although traditional countermeasures, such as Frequency Hopping (FH) were designed to mitigate jamming attacks, they might not be effective against advanced jamming attacks executed by sophisticated radio devices. Therefore, there is a need to envision new threat models by sophisticated radios and propose strong countermeasures against them to facilitate low-latency communication for the victim.

Among several radio-technologies that have risen in the recent past, the two prominent ones are (i) Full-Duplex (FD) radios with advanced Self-Interference Cancellation (SIC) methods \cite{FD1,FD2,FD3,FD4,FD5,FD6,FD7}, and (ii) Cognitive radios with advanced radio-frequency chains that scan across a wide range of frequency bands. Using these developments, in-band Full-Duplex Cognitive Radio (FDCR) \cite{FDCR1,FDCR2,FDCR3,FDCR4} have been  introduced to scan and transmit in the vacant frequency bands simultaneously, thus improving the network throughput. In line with the motivation of our work, FDCRs have also been studied from an adversarial viewpoint. In particular, \cite{my_PIMRC} and \cite{my_TCCN} introduce an attack model, wherein the adversary, with the help of a \emph{jam-and-measure} FDCR, injects jamming energy on the victim's frequency band and also monitors its energy level after the jamming attack. Owing to the use of jam-and-measure FDCRs, \cite{my_PIMRC} and \cite{my_TCCN} also point out that the state-of-art countermeasures, like FH are ineffective, since the attacker can detect that the victim has vacated the jammed frequency band. As a consequence, they also propose several countermeasures wherein the victim node seeks assistance from a Fast-Forward FD (FFFD) \cite{FD8} relay to instantaneously forward its messages to the base station without getting detected by the FDCR. With the use of fast-forward relays, the countermeasures capture the best-case benefits in terms of facilitating low-latency communication for the victim node. 

Inspired by \cite{my_PIMRC} and \cite{my_TCCN}, we identify that FDCRs can also scan multiple frequencies while executing a \emph{jam-and-measure} attack on the victim's frequency. Subsequently, this can allow the adversary to compute a correlation measure between the symbols on the victim's frequency and other frequencies thereby detecting repetition coding across frequencies, such as the FFFD based countermeasures in \cite{my_PIMRC} and \cite{my_TCCN}. Thus, new countermeasures must be designed to mitigate adversaries which can scan multiple frequencies, in addition to monitoring the energy level on the jammed frequency band. We also point out that the modulation techniques designed as part of the countermeasures depend on the wireless environment. For instance, in slow-fading channels, coherent modulation based countermeasures must be designed by allowing the receiver to learn the Channel State Information (CSI) through pilots. However, acquiring CSI using pilots is difficult when channel conditions vary rapidly over time. As a result, non-coherent modulation based countermeasures must be designed when jam-and-measure attacks are executed in fast-fading channels, thereby allowing the receiver to decode the information symbols without instantaneous CSI. From the above discussion, we identify that the countermeasures proposed in \cite{my_PIMRC} and \cite{my_TCCN} are not applicable for fast-fading channels, thereby opening up new problem statements in designing non-coherent modulation based countermeasures. 

\subsection{Contribution}
In this work, we design non-coherent modulation based countermeasures to mitigate jamming attacks by FDCRs. Amongst various non-coherent modulation techniques, we use energy detection based Amplitude Shift Keying (ASK) due to its higher spectral efficiency. Towards this end, we summarize the contribution of this work as follows: 

\begin{enumerate}
\item We envisage an attack model wherein the adversary uses an FDCR to jam a victim that has low-latency symbols to communicate with the base station. The salient feature of the adversary is that it can scan multiple frequencies in the network while executing a jamming attack on the victim's frequency. In particular, the adversary uses an Energy Detector (ED) and a Correlation Detector (CD) to detect the state-of-art countermeasures. (See Sec.~\ref{sec:systemmodel})

\item As a countermeasure against the proposed threat, we propose a Non-Coherent FFFD (NC-FFFD) relaying scheme, wherein an FFFD helper assists the victim by instantaneously fast-forwarding victim's message along with its message to the base station. The proposed NC-FFFD scheme also uses a Gold-sequence based scrambler to cooperatively pour energy on the victim's frequency in order to evade detection by ED and CD. With On-Off Keying (OOK) at the victim and $M-$ary ASK at the helper, we propose an approximate joint maximum a posteriori  decoder to compute the closed-form expression of symbol error probability for the NC-FFFD scheme. We then formulate an optimization problem of minimizing the SEP over the victim's and the helper's energy levels, subject to a modified average energy constraint at the helper. Subsequently, we solve the optimization problem for $M=2$ and then generalise it for $M>2$. (See Sec.~\ref{sec:NCFFFD},~\ref{sec:optimization})

\item We also consider the case when fast-forwarding at the helper is not instantaneous, i.e., imperfect fast-forwarding. Here, we propose Delay Tolerant NC-FFFD (DT NC-FFFD) scheme, where we solve the optimization problem similar to $M\geq 2$ by upper bounding the energy contributed by the victim by a small number. We show that the error performance of DT NC-FFFD scheme is independent of the delays introduced due to imperfect fast-forwarding. For all the cases, we provide strong analytical results and based on these results, we provide a family of algorithms to obtain near-optimal solutions to the optimization problem. (See Sec.~\ref{sec:DT_NC-FFFD})

\item Finally, through various analytical and simulation results, we show that despite having robust detectors, the adversary cannot detect the proposed mitigating scheme with high probability. (See Sec.~\ref{sec:Covert})
\end{enumerate}

\begin{figure}
\vspace{-0.25in}
\centering
\includegraphics[scale = 0.23]{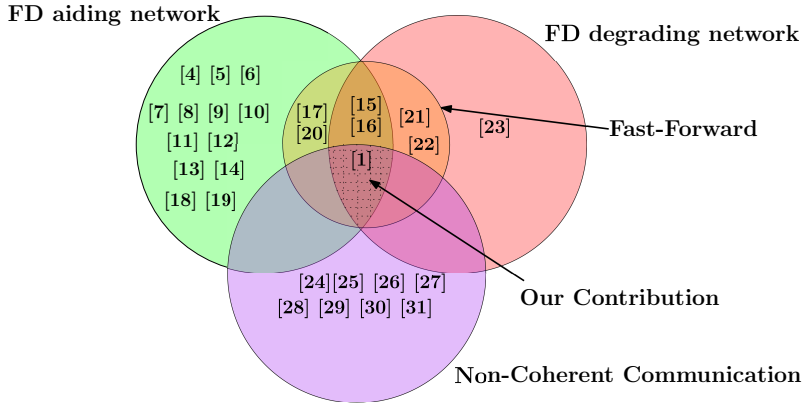}
\caption{\label{fig:venn} Novelty of our work w.r.t. existing contributions.}
\end{figure}

\subsection{Related Work and Novelty}
FD radios have found their applications in aiding   \cite{my_PIMRC,my_TCCN,FD8}, \cite{Aid_FD_1,Aid_FD_2,Aid_FD_3} as well as degrading  \cite{my_PIMRC}, \cite{my_TCCN}, \cite{Foe_FD_1,Foe_FD_2,Foe_FD_3} a network's performance. Along the lines of \cite{my_PIMRC} and \cite{my_TCCN}, this work also uses FD radios at both the adversary and the helper node. However, in contrast, the threat model of this work is stronger than the one in \cite{my_PIMRC} and \cite{my_TCCN} as it can scan multiple frequencies to measure correlation between symbols on different frequencies. Furthermore, the FD radio at the helper in this work implements non-coherent modulation as against coherent modulation in \cite{my_PIMRC} and \cite{my_TCCN}. Our work can be viewed as a constellation design problem for a NC-FFFD strategy. In literature, \cite{ranjan,NC-p2p1,Goldsmith2,NC-p2p2,NC_Gao,new_ref} address the problem of constellation design for point-to-point Single-Input Multiple-Output (SIMO) non-coherent systems. Further, \cite{Goldsmith1}, \cite{Joint-MAC} study the constellation design for non-coherent Multiple Access Channel (MAC). However, due to distributed setting, our work cannot be viewed as a direct extension of \cite{ranjan,NC-p2p1,Goldsmith2,NC-p2p2,NC_Gao, new_ref,Goldsmith1,Joint-MAC}, as pointed in Fig.~\ref{fig:venn}. Some preliminary results on the NC-FFFD scheme have been presented by us in \cite{my_GCOM}, where we compute the optimal energy levels at the victim and the helper for $M=2$. In addition, the results of this work are generalisable for $M\geq 2$. Further, we provide solutions for imperfect fast-forwarding at the helper and also present an extensive analysis on the covertness of the proposed schemes.

\section{System Model}
\label{sec:systemmodel}

We consider a \emph{crowded} network wherein multiple nodes communicate with a base station on orthogonal frequencies. In the context of this work, crowded network implies that all the nodes use orthogonal frequency bands to communicate with the base station such that the number of frequency bands is equal to the number of nodes in the network. Fig.~\ref{fig:NCFFFD}a captures one simple instantiation of the network where two nearby nodes, Alice and Charlie communicate with a multi-antenna base station, Bob. The uplink frequencies of Alice and Charlie are centred around $f_{AB}$ and $f_{CB}$, respectively. Alice is a single-antenna transmitter that has low-rate and low-latency messages to communicate with Bob. In contrast, Charlie, which is a Fast-Forward Full-Duplex (FFFD) node with  $N_{C}$ receive-antennas and a single transmit-antenna, has arbitrary data-rate messages to communicate with no latency constraints. Here, fast-forwarding \cite{FD8} refers to Charlie's capability to instantaneously manipulate the received symbols on its uplink frequency and then multiplex them along with its information symbols to the base station. The mobility conditions of the network are such that the wireless channels from Alice to Bob, and from Charlie to Bob experience fast-fading with a coherence-interval of one symbol duration. Therefore, both Alice and Charlie use non-coherent Amplitude Shift Keying (ASK) for uplink communication. In particular, since Alice has low data-rate messages, she uses the On-Off Keying (OOK) scheme. On the other hand, since Charlie transmits at arbitrary data-rates, he uses an $M$-ary ASK scheme, for some $M = 2^{m}$, with $m \geq 1$. 
\begin{figure}[t]
\vspace{-0.25in}
\centering
\includegraphics[width = 0.73\textwidth, height = 0.3\textwidth]{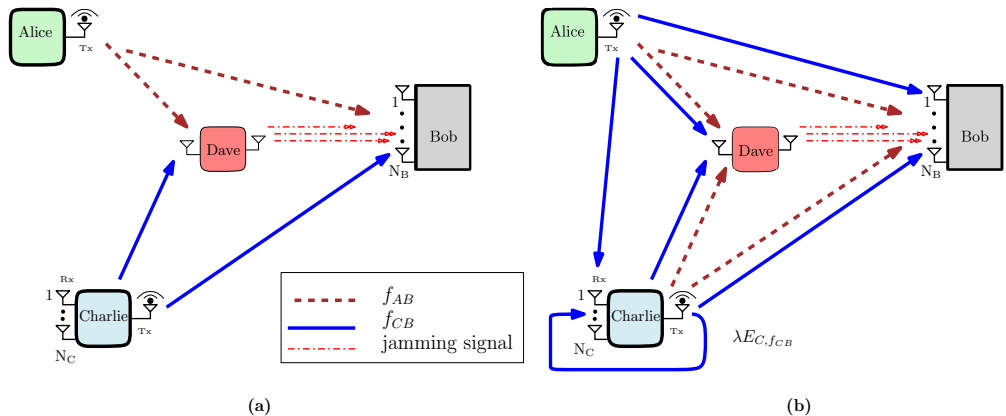}
\caption{\label{fig:NCFFFD}(a) A network model consisting legitimate nodes Alice and Charlie communicating with Bob, on $f_{AB}$, and $f_{CB}$, respectively. Dave is the FD adversary, jamming $f_{AB}$. He also measures the energy level on $f_{AB}$ and computes the correlation between the symbols on $f_{AB}$ and $f_{CB}$. (b) System model of NC-FFFD relaying scheme.}
\end{figure}

Within the same network, we also consider an adversary, named Dave, who is a cognitive jamming adversary equipped with an FD radio that constantly jams $f_{AB}$ and also monitors it to detect any countermeasures. We assume that Dave can learn Alice's frequency band by listening to the radio resource assignment information broadcast from the base station \cite{PRB}. To monitor $f_{AB}$ for any possible countermeasures, Dave uses an Energy Detector (ED), which measures the average energy level on $f_{AB}$. Furthermore, assuming that Dave does not have the knowledge of helper's frequency band, he uses a Correlation Detector (CD) that estimates the correlation between the symbols on $f_{AB}$ and all other frequencies in the network. To envision a practical adversarial model, we assume that Dave's FD radio experiences residual SI. From the above threat model, we note that Alice cannot use frequency hopping to evade the jamming attack due to two reasons: (i) the absence of vacant frequency bands in the uplink, and moreover, (ii) an ED at Dave restricts Alice to completely switch her operating frequency. This is because, if Alice switches her operating frequency, Dave measures a significant dip in the energy level of $f_{AB}$, thereby detecting a countermeasure. Other than frequency hopping, if Alice implements repetition coding using frequency-diversity techniques, where she replicates her messages on $f_{AB}$ and another frequency (say $f_{CB}$), simultaneously, then the CD at Dave detects a high correlation between the symbols on both the frequencies. Subsequently, a detection by either ED or CD compels Dave to jam $f_{CB}$ thereby degrading the network's performance. Therefore, Alice must use a countermeasure that helps her to communicate reliably with Bob while deceiving ED and CD at Dave.

For ease of understanding, in Table~\ref{tab:notations}, we have provided the notations that appear in the rest of the paper. In the next section, we present a communication setting wherein Alice seeks assistance from Charlie to evade the jamming attack whilst deceiving the ED and the CD at Dave.

\begin{table}[!htb]
    \caption{\label{tab:notations}FREQUENTLY OCCURRING NOTATIONS}
    \begin{minipage}[t]{.5\linewidth}
      \centering
      \scalebox{0.8}{
         \begin{tabular}{ | m{2em} m{8cm} | } 
  \hline
  $N_{C}$  & Receive antennas at Charlie  
  \\ 
  $N_{B}$ & Receive antennas at Bob  
  \\ 
  $M$ & Charlie's constellation size 
  \\ 
  $E_{A}$ & Alice's OOK symbol
  \\
  $E_{C}$ & Charlie's multiplexed symbol
  \\
  $\mathbf{r}_{C}$ & $N_{C}\times 1$ receive vector at Charlie
  \\
  $\Omega_{i}$ & Energy received at Charlie corresponding to Alice's $i^{th}$ symbol
  \\
  $\mathbf{r}_{B}$ & $N_{B}\times 1$ receive vector at Bob
  \\
  $\mathbf{0}_{N_{C}}$ & $N_{C}\times 1$ vector of zeros
  \\
   $\mathbf{I}_{N_{C}}$ & $N_{C}\times N_{C}$ Identity matrix
  \\
  $S_{\ell}$ & Sum energy received at Bob on $f_{CB}$
  \\ \hline
   \end{tabular}
   }
    \end{minipage}%
    \begin{minipage}[t]{.5\linewidth}
      \centering
      \scalebox{0.8}{
        \begin{tabular}{ | m{2em} m{8cm} | } 
  \hline
   $\nu$ & Detection threshold at Charlie
   \\
  $\rho_{\ell\ell^{*}}$ & Detection threshold between $S_{\ell}$ and $S_{\ell^{*}}$
  \\
  $\lambda$ & Residual self-interference
  \\
  $\alpha$ & Energy splitting factor
  \\
  $L$ & Number of symbols received at Dave
  \\
  $E_{C,f_{AB}}$ & Avg. transmit energy of Charlie on $f_{CB}$
  \\
  $E_{D,f_{AB}}$ & Avg. receive energy of Dave on $f_{AB}$
  \\
  $r_{D}(l)$ & $l^{th}$ symbol received at Dave
  \\
  $\mathbf{P}_{FA}$ & Probability of false-alarm at Dave before implementing the countermeasure.
  \\
   $\mathbf{P}_{D}$ & Probability of detection at Dave after implementing the countermeasure.
   \\ \hline
   \end{tabular}
   }
    \end{minipage}
    
\end{table}

\section{Non-Coherent FastForward Full-Duplex Relaying Scheme (NC-FFFD)}
\label{sec:NCFFFD}

In order to help Alice evade the jamming attack, we propose a Non-Coherent Fast-Forward Full-Duplex (NC-FFFD) relaying scheme, described as follows: Bob directs Alice to broadcast her OOK symbols on $f_{CB}$ with $(1-\alpha)$ fraction of her energy, where $\alpha\in (0,1)$ is a design parameter. Since Charlie also has symbols to communicate to Bob, he uses his in-band FD radio to receive Alice's symbols on $f_{CB}$, decodes them, multiplexes them to his symbols, and then \emph{fast-forwards} them on $f_{CB}$, such that the average energy of the multiplexed symbols is $(1+\alpha)/2$ fraction of his original average energy. As a result, Bob observes a MAC on $f_{CB}$, and attempts to decode Alice's and Charlie's symbols jointly. To deceive the ED at Dave, the average energy level on $f_{AB}$ needs to be the same as before implementing the countermeasure. Therefore, Alice and Charlie use a Gold sequence-based scrambler as a pre-shared key to cooperatively transmit dummy OOK symbols on $f_{AB}$ by using residual $\alpha/2$ and $(1-\alpha)/2$ fractions of their average energies, respectively. Note that the use of dummy OOK symbols on $f_{AB}$ assists in deceiving the CD at Dave. In the next section, we discuss the signal model of the NC-FFFD scheme on $f_{CB}$ so as to focus on reliable communication of Alice's symbols with the help of Charlie.

\subsection{Signal Model} 
\label{ssec:signalmodel}
Before implementing the NC-FFFD relaying scheme, Alice transmits her OOK symbols with energy $E_{A} \in \{0, 1\}$, such that $E_{A}=0$ and $E_{A}=1$ correspond to symbols $i=0$ and $i=1$, respectively. Similarly, Charlie transmits his symbols using an $M-$ary ASK scheme with average energy $1$. When implementing the NC-FFFD relaying scheme, as illustrated in Fig.~\ref{fig:NCFFFD}b, Alice transmits her OOK symbols with energy $(1-\alpha)E_{A}$, for some $\alpha \in (0, 1)$ on $f_{CB}$. With this modification, the average transmit energy of Alice on $f_{CB}$, denoted by $\mathrm{E}_{A,f_{CB}}$, is $\mathrm{E}_{A,f_{CB}} = (1-\alpha)/2$. Since Charlie is an in-band FD radio, the received baseband vector at Charlie on $f_{CB}$ is,
\bieee
\mathbf{r}_{C} = \mathbf{h}_{AC}\sqrt{(1-\alpha)E_{A}} + \mathbf{h}_{CC} + \mathbf{n}_{C},\label{eq:rc}
\eieee
\noindent where $\mathbf{h}_{AC}\sim{\cal CN}\left(\mathbf{0}_{N_{C}},\sigma_{AC}^{2}\mathbf{I}_{N_{C}}\right)$ is $N_{C}\times 1$ channel vector. Further, $\mathbf{h}_{CC}\sim{\cal CN}\left(\mathbf{0}_{N_{C}},\lambda\mathrm{E}_{C,f_{CB}}\mathbf{I}_{N_{C}}\right)$ denotes the SI channel of the FD radio at Charlie \cite{my_TCCN}. Finally, $\mathbf{n}_{C}\sim{\cal CN}\left(\mathbf{0}_{N_{C}},N_{o}\mathbf{I}_{N_{C}}\right)$ is the $N_{C}\times 1$ Additive White Gaussian Noise (AWGN) vector.

As a salient feature of the NC-FFFD scheme, Charlie uses $\mathbf{r}_{C}$ to instantaneously decode Alice's information symbol, and then transmits an energy level $E_{C}$ on $f_{CB}$, which is a function of Alice's decoded symbol and its information symbol. If $\hat{i}_{C}$ and $j\in\{1,\cdots,M\}$ denote Alice's decoded symbol and Charlie's information symbol, respectively, then the energy level, $E_{C}$ is given by
\begin{equation}
E_{C} =
\begin{cases}
\epsilon_{j} & \text{if } \hat{i}_{C}=0,
\\
\eta_{j} & \text{if } \hat{i}_{C}=1.
\end{cases}
\label{eq:multiplexing_symbol}
\end{equation}
Here $\{\epsilon_{j}, \eta_{j} ~|~ j = 1, \cdots, M\}$, which represent the set of energy levels corresponding to different combinations of $\hat{i}_{C}$ and $j$, are the parameters under design consideration. Note that Charlie transmits $M$ energy levels corresponding to each value of $\hat{i}_{C}$. Towards designing $\{\epsilon_{j}, \eta_{j}\}$, the energy levels are such that, $0\leq\epsilon_{1}<\cdots<\epsilon_{M}$,  $0\leq\eta_{1}<\cdots<\eta_{M}$, and $\epsilon_{j} < \eta_{j}$, if $j$ is odd and $\epsilon_{j} > \eta_{j}$, if $j$ is even.

Given that Alice contributes an average energy of $(1-\alpha)/2$ on $f_{CB}$, Charlie is constrained to transmit his multiplexed symbols with an average energy of $(1+\alpha)/2$ so that the average energy on $f_{CB}$ continues to be unity. Thus, the average energy constraint on Charlie, denoted by $\mathrm{E}_{C,f_{CB}}$ is,
\bieee
\mathrm{E}_{C,f_{CB}} = \dfrac{1}{2M}\sum_{j=1}^{M}(\epsilon_{j}+\eta_{j}) &=& \dfrac{1+\alpha}{2}.\label{eq:new_constaint}
\eieee
Finally, upon transmission of the energy level $E_{C}$ from Charlie, Bob observes a multiple access channel on $f_{CB}$ from Alice and Charlie. Thus, the $N_{B}\times 1$ receive vector at Bob is,
\bieee
\mathbf{r}_{B} = \mathbf{h}_{AB}\sqrt{(1-\alpha)E_{A}} + \mathbf{h}_{CB}\sqrt{E_{C}} + \mathbf{n}_{B},\label{eq:rb}
\eieee
\noindent where $\mathbf{h}_{AB}\sim{\cal CN}\left(\mathbf{0}_{N_{B}},\sigma_{AB}^{2}\mathbf{I}_{N_{B}}\right)$, $\mathbf{h}_{CB}\sim{\cal CN}\left(\mathbf{0}_{N_{B}},\sigma_{CB}^{2}\mathbf{I}_{N_{B}}\right)$, and $\mathbf{n}_{B}\sim{\cal CN}\left(\mathbf{0}_{N_{B}},N_{o}\mathbf{I}_{N_{B}}\right)$ are the Alice-to-Bob link, Charlie-to-Bob link and the AWGN vector at Bob. We assume that all the channel realizations and noise samples are statistically independent. We also assume that only the channel statistics and not the instantaneous realizations of $\mathbf{h}_{AB}$ and $\mathbf{h}_{CB}$  are known to Bob. Similarly, only the channel statistics and not the instantaneous realizations of $\mathbf{h}_{AC}$ are known to Charlie. Further, due to the proximity of Alice and Charlie, we assume $\sigma_{AC}^{2}>\sigma_{AB}^{2}$ to capture higher Signal-to-Noise Ratio (SNR) in the Alice-to-Charlie link compared to Charlie-to-Bob link. Henceforth, throughout the paper, various noise variance at Charlie and Bob are given by $\text{SNR} = N_{o}^{-1}$ and $\sigma_{AB}^{2} = \sigma_{CB}^{2} = 1$. 

Given that Alice-to-Bob and Charlie-to-Bob channels are non-coherent, Bob must use $\mathbf{r}_{B}$ in \eqref{eq:rb} to jointly decode the information symbols of both Alice and Charlie. Towards that direction, in the next section, we study the distribution on $\mathbf{r}_{B}$ conditioned on their information symbols.
  
\subsection{The Complementary Energy Levels and Distribution of $\mathbf{r}_{B}$}
\label{ssec:com_energy}

Based on the MAC in \eqref{eq:rb}, $\mathbf{r}_{B}$ is sampled from an underlying multi-dimensional Gaussian distribution whose parameters depend on $i$, $j$, and $\hat{i}_{C}$. If $e$ denotes the error event at Charlie, then, $e=0$, if $i=\hat{i}_{C}$ and $e=1$, if $i\neq \hat{i}_{C}$.
Recall that for a given $j$, Charlie transmits $\epsilon_{j}$ or $\eta_{j}$ corresponding to $\hat{i}_{C}=0$ and $\hat{i}_{C}=1$, respectively. Therefore, Bob receives $\mathbf{r}_{B}$ sampled from two different sets with $2M$ multi-dimensional Gaussian distributions corresponding to $e=0$ and $e=1$. For example, assume that Alice transmits symbol $i=1$, and it gets decoded as $\hat{i}_{C}=0$ at Charlie. According to \eqref{eq:multiplexing_symbol}, Charlie transmits the energy level $\epsilon_{j}$, and as a result, each component of $\mathbf{r}_{B}$ is sampled from a circularly symmetric complex Gaussian distribution with mean zero and variance $1-\alpha+\epsilon_{j}+N_{o}$. On the other hand, if Charlie had decoded the symbol correctly, each component of $\mathbf{r}_{B}$ would be sampled from a circularly symmetric complex Gaussian distribution with mean zero and variance $1-\alpha + \eta_{j}+N_{o}$. To obtain these variance values, we have used the fact that $\mathbf{h}_{AB}\sim{\cal CN}\left(\mathbf{0}_{N_{B}},\mathbf{I}_{N_{B}}\right)$, $\mathbf{h}_{CB}\sim{\cal CN}\left(\mathbf{0}_{N_{B}},\mathbf{I}_{N_{B}}\right)$, and $\mathbf{n}_{B}\sim{\cal CN}\left(\mathbf{0}_{N_{B}},N_{o}\mathbf{I}_{N_{B}}\right)$. Overall, using \eqref{eq:rb}, the distribution of $\mathbf{r}_{B}$ is given as,
\bieee
\mathbf{r}_{B}\sim
\begin{cases}
{\cal CN}\left(\mathbf{0}_{N_{B}},(\epsilon_{j} + N_{o})\mathbf{I}_{N_{B}}\right) & \text{if } i=0,e=0,
\\
{\cal CN}\left(\mathbf{0}_{N_{B}},(\eta_{j} + N_{o})\mathbf{I}_{N_{B}}\right) & \text{if } i=0,e=1,
\\
{\cal CN}\left(\mathbf{0}_{N_{B}},(1-\alpha+\eta_{j} + N_{o})\mathbf{I}_{N_{B}}\right) & \text{if } i=1,e=0,
\\
{\cal CN}\left(\mathbf{0}_{N_{B}},(1-\alpha+\epsilon_{j} + N_{o})\mathbf{I}_{N_{B}}\right) & \text{if } i=1,e=1,
\end{cases}
\label{eq:rb_distribution1}
\eieee
\noindent where we have substituted $E_{A}\!=\!0$ and $E_{A}\!=\!1$, for $i=0$ and $i=1$, respectively, and $\sigma_{AB}^{2}=\sigma_{CB}^{2}=1$ in \eqref{eq:rb}. From \eqref{eq:rb_distribution1}, it is clear that the sum of the energy levels transmitted by Alice and Charlie characterizes all the possible distributions from which $\mathbf{r}_{B}$ is sampled.
 
We now define an index $\ell$ that is a one-to-one function of the  transmit pair $(i,j)$, such that 
\bieee
\ell = \frac{1}{2}\left[(-1)^{ij}\left(4j(1-i) + 4i(-1)^{j}+(-1)^{j+i}-1\right)\right].\label{eq:def_l}
\eieee
\noindent Since $(i,j)\in\{0,1\}\times\{1,\cdots, M\}$, we have $\ell\in\{1,\cdots, 2M\}$. We also define two sets of energy levels, denoted by $\mathcal{S}=\{S_{\ell}~\vert~\ell = 1,\cdots,2M\}$ and $\mathcal{\overline{S}}=\{\overline{S}_{\ell}~\vert~ \ell=1,\cdots,2M\}$ that correspond to the sum of energy levels jointly contributed by Alice and Charlie, and the AWGN at Bob when $e=0$ and $e=1$, respectively. In particular, the $\ell^{th}$ element of $\mathcal{S}$ and $\mathcal{\overline{S}}$ are given by
\bieee
S_{\ell} \triangleq  \left(1-\alpha+\eta_{j}\right)i+\epsilon_{j}(1-i)+N_{o} \text{ and } \overline{S}_{\ell} \triangleq  \left(1-\alpha+\epsilon_{j}\right)i+\eta_{j}(1-i)+N_{o}.\label{eq:map2}
\eieee
\noindent Since $\mathcal{\overline{S}}$, corresponds to the sum of energy levels when $e=1$, we refer to $\mathcal{\overline{S}}$ as the set of complementary energy levels. Note that there is one-to-one correspondence between the elements of $\mathcal{S}$ and $\mathcal{\overline{S}}$, and the distributions in \eqref{eq:rb_distribution1} corresponding to $e=0$ and $e=1$, respectively. Also, note that $\mathcal{S}$ is such that $S_{1}<S_{2}<\cdots<S_{2M-1}<S_{2M}$. To exemplify the sum of energy levels that characterises $\mathbf{r}_{B}$ at Bob, in Fig.~\ref{fig:consexample}, we present the elements of $\mathcal{S}$ and $\mathcal{\overline{S}}$ for $M=2,4$. 
\begin{figure}[t]
\vspace{-0.25in}
\centering
\includegraphics[scale = 0.35]{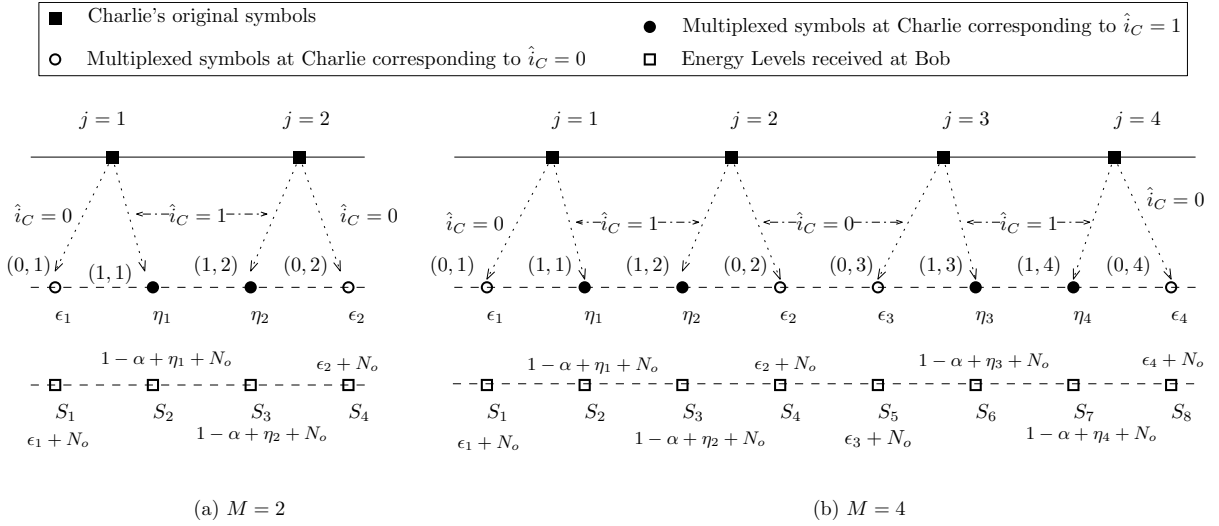}
\caption{\label{fig:consexample}Illustration of multiplexing at Charlie and corresponding energy levels received at Bob.}
\end{figure}

\subsection{Joint Maximum A Posteriori (JMAP) decoder for NC-FFFD Relaying Scheme}
\label{ssec:JMAP}

Due to the decode-multiplex-and-forward nature of the NC-FFFD scheme, we first compute the error-rates introduced by Charlie while decoding Alice's symbols, and then compute the joint error-rates at Bob. Since Alice-to-Charlie link is non-coherent, Charlie uses energy detection to decode Alice's symbols. If $f\left(\mathbf{r}_{C}\vert i\right)$ is the PDF of $\mathbf{r}_{C}$ conditioned on the Alice's symbol, $i$, then the Maximum Likelihood (ML) decision rule for detection is
\bieee
\hat{i}_{C} = \arg\underset{i\in\{0,1\}}{\max\ }\ln\left\{f\left(\mathbf{r}_{C}\vert i\right)\right\} = \arg\underset{i\in\{0,1\}}{\min\ }N_{C}\ln(\pi\Omega_{i}) + \dfrac{\mathbf{r}_{C}^{H}\mathbf{r}_{C}}{\Omega_{i}},\label{eq:rule_rc}
\eieee
\noindent where $(\mathbf{r}_{C}\vert i)\sim{\cal CN}\left(\mathbf{0}_{N_{C}}, \Omega_{i}\mathbf{I}_{N_{C}}\right)$, such that $\Omega_{0} =  \left(\lambda\frac{(1+\alpha)}{2}+N_{o}\right)$ and $\Omega_{1} =  \left(\sigma_{AC}^{2}(1-\alpha) + \lambda\frac{(1+\alpha)}{2}+N_{o}\right)$ are the variance of the received symbol, when $i=0$ and $i=1$, respectively. Here, $(\cdot)^{H}$ denotes the Hermitian operator. Using the first principles, the energy detection threshold at Charlie, denoted by $\nu$, is given as, $N_{C}\frac{\Omega_{0}\Omega_{1}}{\Omega_{0}-\Omega_{1}}\ln\left(\frac{\Omega_{0}}{\Omega_{1}}\right)$. Using $\nu$, it is straightforward to prove the next theorem that presents the probability of error at Charlie in decoding Alice's symbols.

\begin{theorem}
\label{th:P01P10}
If $P_{ik}$ denotes the probability of decoding symbol $i$ as symbol $k$, for $i,k=0,1$, then  $P_{01} = \frac{\Gamma\left(N_{C}, \frac{\nu}{\Omega_{0}}\right)}{\Gamma(N_{C})}$ and $P_{10} = \frac{\gamma\left(N_{C}, \frac{\nu}{\Omega_{1}}\right)}{\Gamma(N_{C})}$, where $\gamma(\cdot,\cdot)$, $\Gamma(\cdot,\cdot)$, and $\Gamma(\cdot)$ are incomplete lower, incomplete upper, and complete Gamma functions, respectively. 
\end{theorem}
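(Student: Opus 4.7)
The plan is to reduce the ML rule in \eqref{eq:rule_rc} to a single-sided threshold test on the energy statistic $\mathbf{r}_{C}^{H}\mathbf{r}_{C}$, and then evaluate the tail probabilities of that statistic under each hypothesis using the fact that $\mathbf{r}_{C}^{H}\mathbf{r}_{C}/\Omega_{i}$ is a sum of i.i.d.\ squared magnitudes of complex Gaussians.

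First, I would rewrite the ML rule. Setting the two log-likelihood expressions equal and using $\Omega_{1}>\Omega_{0}$ (which follows from $\sigma_{AC}^{2}(1-\alpha)>0$), algebraic manipulation shows that the rule reduces to \emph{decide} $\hat{i}_{C}=1$ if $\mathbf{r}_{C}^{H}\mathbf{r}_{C}>\nu$ and $\hat{i}_{C}=0$ otherwise, where
\begin{equation*}
\nu \;=\; N_{C}\,\frac{\Omega_{0}\Omega_{1}}{\Omega_{0}-\Omega_{1}}\ln\!\left(\frac{\Omega_{0}}{\Omega_{1}}\right).
\end{equation*}
The sign check is the one place requiring care: $\Omega_{0}-\Omega_{1}<0$ while $\ln(\Omega_{0}/\Omega_{1})<0$, so $\nu>0$, consistent with an energy-detection threshold.

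Next, I would characterize the distribution of the test statistic under each hypothesis. Conditional on $i$, the components of $\mathbf{r}_{C}$ are i.i.d.\ $\mathcal{CN}(0,\Omega_{i})$, so $|r_{C,k}|^{2}/\Omega_{i}$ is exponentially distributed with unit mean. Therefore $Y_{i}\triangleq \mathbf{r}_{C}^{H}\mathbf{r}_{C}/\Omega_{i}$ is the sum of $N_{C}$ independent unit-rate exponentials, i.e.\ $Y_{i}\sim\mathrm{Gamma}(N_{C},1)$ with density $y^{N_{C}-1}e^{-y}/\Gamma(N_{C})$.

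Finally, I would compute the two conditional error probabilities directly:
\begin{equation*}
P_{01}=\Pr\!\left(\mathbf{r}_{C}^{H}\mathbf{r}_{C}>\nu \,\middle|\, i=0\right)=\Pr\!\left(Y_{0}>\tfrac{\nu}{\Omega_{0}}\right)=\frac{\Gamma\!\left(N_{C},\tfrac{\nu}{\Omega_{0}}\right)}{\Gamma(N_{C})},
\end{equation*}
\begin{equation*}
P_{10}=\Pr\!\left(\mathbf{r}_{C}^{H}\mathbf{r}_{C}\leq \nu \,\middle|\, i=1\right)=\Pr\!\left(Y_{1}\leq \tfrac{\nu}{\Omega_{1}}\right)=\frac{\gamma\!\left(N_{C},\tfrac{\nu}{\Omega_{1}}\right)}{\Gamma(N_{C})},
\end{equation*}
which is exactly the claimed expression. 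The calculation is essentially routine once the ML rule is reduced; the only mildly delicate step is the algebra that turns the log-likelihood comparison into the single-threshold form and tracks the sign of $\Omega_{0}-\Omega_{1}$ so that the inequality direction in each conditional probability is correct.
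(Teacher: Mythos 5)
Your proof is correct and follows exactly the route the paper intends: the paper omits the proof as ``straightforward'' given the threshold $\nu$, and your reduction of the ML rule in \eqref{eq:rule_rc} to the test $\mathbf{r}_{C}^{H}\mathbf{r}_{C}\gtrless\nu$ (with the sign of $\Omega_{0}-\Omega_{1}$ handled correctly) followed by the $\mathrm{Gamma}(N_{C},1)$ tail computation is precisely that routine argument.
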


\begin{lemma}
\label{lm:P10P01_alpha}
The terms $P_{01}$ and $P_{10}$ are increasing functions of $\alpha$ for a given SNR, $N_{C}$, and $\lambda$. 
\end{lemma}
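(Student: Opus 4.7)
The plan is to reduce the $\alpha$-dependence of both $P_{01}$ and $P_{10}$ to a single scalar quantity $r \triangleq \Omega_1/\Omega_0 > 1$, and then invoke monotonicity of the incomplete gamma functions in their second arguments. First I would substitute the closed-form expression for $\nu$ from just before Theorem~\ref{th:P01P10} to obtain
\begin{equation}
\frac{\nu}{\Omega_0} = N_C\,\frac{r \ln r}{r-1} \triangleq N_C f(r), \qquad \frac{\nu}{\Omega_1} = N_C\,\frac{\ln r}{r-1} \triangleq N_C g(r).
\end{equation}
This is the crucial simplification: the entire $\alpha$-dependence of the two error probabilities is funnelled through $r$, so the lemma reduces to studying the monotonicity of a composition.

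Next I would show that $r$ is strictly decreasing in $\alpha$. Writing $r = 1 + \sigma_{AC}^2(1-\alpha)/\Omega_0(\alpha)$ and differentiating, the numerator of $dr/d\alpha$ simplifies to $-\sigma_{AC}^2(\lambda + N_o)$, which is unambiguously negative. This step bypasses any worry about whether $\Omega_1$ itself is monotone in $\alpha$ (whose sign depends on whether $\sigma_{AC}^2 \gtrless \lambda/2$); only the behaviour of the ratio is needed.

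Then I would establish the two auxiliary monotonicity facts on $(1,\infty)$: $f$ is strictly increasing and $g$ is strictly decreasing. For $f$, differentiation gives $f'(r) = (r - 1 - \ln r)/(r-1)^2$, and the standard inequality $\ln r < r-1$ for $r>1$ makes this positive. For $g$, I would show $g'(r) = (1 - 1/r - \ln r)/(r-1)^2$ and verify that $h(r) \triangleq 1 - 1/r - \ln r$ satisfies $h(1)=0$ and $h'(r) = (1-r)/r^2 < 0$ for $r>1$, hence $g'(r) < 0$. Combining with the previous paragraph, $\nu/\Omega_0$ is decreasing in $\alpha$ while $\nu/\Omega_1$ is increasing in $\alpha$.

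Finally I would close the argument by using that $x \mapsto \Gamma(N_C, x)$ is strictly decreasing and $x \mapsto \gamma(N_C, x)$ is strictly increasing on $(0,\infty)$; together with the two preceding monotonicities, these immediately yield $dP_{01}/d\alpha > 0$ and $dP_{10}/d\alpha > 0$. The only mildly non-routine step is verifying the monotonicity of $f$ and $g$, but since both reduce to the elementary inequality $\ln r < r-1$ on $(1,\infty)$, I do not anticipate a real obstacle; the remaining manipulations are straightforward.
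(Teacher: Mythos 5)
Your proof is correct and follows essentially the same route as the paper: both funnel the $\alpha$-dependence through the ratio $\Omega_{1}/\Omega_{0}$ (the paper works with $\theta = r-1$), show it is decreasing in $\alpha$ by differentiation, and then invoke monotonicity of $\ln r/(r-1)$ together with monotonicity of the incomplete gamma functions. Your write-up is slightly more complete in that it treats $P_{01}$ explicitly via $f(r)=r\ln r/(r-1)$ (the paper dismisses it with ``similar argument'') and your derivative numerator $-\sigma_{AC}^{2}(\lambda+N_{o})$ is the correct one, whereas the paper states $-N_{o}\sigma_{AC}^{2}$ --- a harmless slip since both are negative.
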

\begin{proof}
Consider the expression of $P_{10}$ as given in Theorem~\ref{th:P01P10}. The ratio, $\nu/\Omega_{1}$ can be rewritten as, $\frac{\nu}{\Omega_{1}} = N_{C}\frac{\ln\left(1+\theta\right)}{\theta}$, where $\theta =\frac{ \left(\Omega_{1}-\Omega_{0}\right)}{\Omega_{0}}$. Differentiating $\theta$ w.r.t. $\alpha$, we get, $-\frac{N_{o}\sigma_{AC}^{2}}{\left(N_{o} + \lambda\frac{1+\alpha}{2}\right)^{2}}$. Since $\frac{d\theta}{d\alpha}<0$, as $\alpha$ increases $\theta$ decreases. Further, when $\theta$ decreases, $N_{C}\frac{\ln(1+\theta)}{\theta}$ increases. Therefore, $\frac{\nu}{\Omega_{1}}$ is an increasing function of $\alpha$. Finally, since  $\gamma\left(N_{C}, \frac{\nu}{\Omega_{1}}\right)$ is an increasing function of $\frac{\nu}{\Omega_{1}}$, $P_{10}$ is an increasing function of $\alpha$. Using similar argument, we can prove that $P_{01}$ is also an increasing function of $\alpha$.
\end{proof}
Along the similar lines of Lemma~\ref{lm:P10P01_alpha}, the following lemma is also straightforward to prove.
\begin{lemma}
\label{lm:P10P01_nc}
The terms $P_{01}$ and $P_{10}$ are decreasing functions of $N_{C}$ for a fixed SNR, $\alpha$, and $\lambda$.
\end{lemma}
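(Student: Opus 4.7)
The plan is to mirror the proof of Lemma~\ref{lm:P10P01_alpha}. Starting from $P_{01} = \Gamma(N_{C}, \nu/\Omega_{0})/\Gamma(N_{C})$ and $P_{10} = \gamma(N_{C}, \nu/\Omega_{1})/\Gamma(N_{C})$ in Theorem~\ref{th:P01P10}, I would substitute $\nu = N_{C}\,\frac{\Omega_{0}\Omega_{1}}{\Omega_{0}-\Omega_{1}}\ln(\Omega_{0}/\Omega_{1})$ and reuse the change of variables $\theta = (\Omega_{1}-\Omega_{0})/\Omega_{0} > 0$ from Lemma~\ref{lm:P10P01_alpha}. A short simplification gives $\nu/\Omega_{0} = N_{C}\, a(\theta)$ and $\nu/\Omega_{1} = N_{C}\, b(\theta)$, with $a(\theta) = (1+\theta)\ln(1+\theta)/\theta$ and $b(\theta) = \ln(1+\theta)/\theta$. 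The elementary inequalities $\ln(1+\theta) < \theta < (1+\theta)\ln(1+\theta)$ for $\theta>0$ yield $b(\theta) < 1 < a(\theta)$; and since SNR, $\alpha$, $\lambda$ are held fixed, so is $\theta$, making $a$ and $b$ constants independent of $N_{C}$.

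Next, I would recast the regularized Gamma tails probabilistically. Writing $X \sim \mathrm{Gamma}(N_{C},1)$ as a sum $\sum_{k=1}^{N_{C}} Y_{k}$ of i.i.d.\ $\mathrm{Exp}(1)$ variables $Y_{k}$, the two quantities of interest become
\begin{equation*}
P_{01} \;=\; \Pr\!\left(\bar Y_{N_{C}} > a\right), \qquad P_{10} \;=\; \Pr\!\left(\bar Y_{N_{C}} < b\right),
\end{equation*}
where $\bar Y_{N_{C}} = \tfrac{1}{N_{C}}\sum_{k=1}^{N_{C}} Y_{k}$ has mean $1$. These are tails of a sample mean at a \emph{fixed} deviation from its mean, on opposite sides. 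From here the natural tool is Cram\'er's theorem, equivalently the Chernoff bound for sums of i.i.d.\ exponentials: its rate function $I(x) = x - 1 - \ln x$ is strictly positive for $x \neq 1$, so $P_{01}\le e^{-N_{C}I(a)}$ and $P_{10}\le e^{-N_{C}I(b)}$. This captures the expected diversity-combining picture: as more receive antennas are added, the empirical energy concentrates around its conditional mean on each hypothesis, so both the mis-detection and the false-alarm probabilities shrink.

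The main obstacle, in my view, is promoting this exponential decay to \emph{strict} monotonicity in the discrete index $N_{C}$. Unlike Lemma~\ref{lm:P10P01_alpha}, where only one argument of the regularized Gamma function moved with the parameter of interest, here the threshold $\nu/\Omega_{i}$ itself scales linearly in $N_{C}$, so the effects of increasing the shape parameter and of increasing the argument pull the probability in opposite directions and the net movement must be disentangled. To close this, I would use the Poisson--Gamma identity $\Gamma(n,x)/\Gamma(n) = \Pr\!\left(\mathrm{Poisson}(x)\le n-1\right)$ to rewrite $P_{01}$ and $P_{10}$ as Poisson tails whose rates $N_{C}a$ and $N_{C}b$ drift linearly with $N_{C}$, and then examine the one-step ratios $P_{01}(N_{C}{+}1)/P_{01}(N_{C})$ and $P_{10}(N_{C}{+}1)/P_{10}(N_{C})$. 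Showing that each ratio lies below one reduces, after algebraic manipulation, to sign statements that follow from $a>1$ and $b<1$ combined with the log-concavity of the Poisson mass function; this is the step I expect to require the most care.
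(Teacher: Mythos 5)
Your setup is sound and your diagnosis of the difficulty is sharper than the paper's own treatment: the paper gives no argument at all for this lemma (it simply asserts it follows "along similar lines" of Lemma~1), whereas you correctly observe that those lines do not transfer, because here the thresholds $\nu/\Omega_{0}=N_{C}\,a(\theta)$ and $\nu/\Omega_{1}=N_{C}\,b(\theta)$ scale with $N_{C}$, so the shape parameter and the argument of the regularized Gamma functions move together. Your reformulation $P_{01}=\Pr(\bar Y_{N_{C}}>a)$, $P_{10}=\Pr(\bar Y_{N_{C}}<b)$ with $b<1<a$ and the Chernoff/Cram\'er bound are correct, but, as you concede, they only give decaying upper bounds, not monotonicity of the probabilities themselves.

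The genuine gap is the final step, which you leave as a plan, and as described it cannot be closed: strict one-step monotonicity does not follow from $a>1$ and $b<1$ alone. Concretely, $P_{01}(N_{C}=1)=e^{-a}$ while $P_{01}(N_{C}=2)=e^{-2a}(1+2a)$, and $e^{-a}(1+2a)>1$ for all $1<a<a^{*}$, where $a^{*}\approx 1.256$ solves $e^{a}=1+2a$; so $P_{01}$ \emph{increases} from $N_{C}=1$ to $N_{C}=2$ whenever $a(\theta)=(1+\theta)\ln(1+\theta)/\theta<a^{*}$, i.e.\ whenever $\theta=\sigma_{AC}^{2}(1-\alpha)/\bigl(\lambda\tfrac{1+\alpha}{2}+N_{o}\bigr)\lesssim 0.61$. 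This is not a vacuous corner of the parameter space: it occurs at low SNR or for $\alpha$ close to $1$ (e.g.\ with $\sigma_{AC}^{2}=4$ and the delay-tolerant choice $1-\alpha=\Delta_{\text{DT}}N_{o}$, $\Delta_{\text{DT}}=0.1$, one gets $\theta\approx0.4$, $a\approx1.18$). Hence the "sign statements that follow from $a>1$ and $b<1$ combined with log-concavity" cannot all hold; the $P_{10}$ half (where $b<1<a^{*}$ makes the first step work) may survive, but for $P_{01}$ any correct proof must either restrict the regime (e.g.\ $\theta$ bounded away from $0$, equivalently sufficiently high Alice-to-Charlie SNR, so that $a\geq a^{*}$) or settle for a weaker conclusion such as eventual monotonicity or decay as $N_{C}\to\infty$, which is what your large-deviations bound actually delivers. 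As it stands, the proposal is an honest but incomplete sketch of a claim that, in the generality stated, is false at small $N_{C}$.
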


Using $P_{01}$ and $P_{10}$ at Charlie, we study the performance of non-coherent decoder at Bob. With $i \in \{0, 1\}$ and $j \in \{1, 2, \ldots, M\}$ denoting Alice's and Charlie's information symbols, respectively, we define a transmit pair as the two-tuple $(i,j)$. Based on $\mathbf{r}_{B}$ in \eqref{eq:rb}, the JMAP decoder at Bob is
\bieee
\hat{i},\hat{j} = \arg\underset{i\in\{0,1\},j\in\{1,\cdots,M\}}{\max}g\left(\mathbf{r}_{B}\vert (i,j)\right),\label{eq:JMAP}
\eieee
\noindent where $g\left(\mathbf{r}_{B}\vert (i,j)\right)$ is the PDF of $\mathbf{r}_{B}$, conditioned on $i$ and $j$. However, note that due to errors introduced by Charlie in decoding Alice's symbols, $g(\cdot)$ is a Gaussian mixture for each realization of $i$. The conditional PDF of $g\left(\mathbf{r}_{B}\vert (i,j)\right)$ for $i = 0,1$ is,
\bieee
g\left(\mathbf{r}_{B}\vert (i,j)\right) &=& P_{ii}g\left(\mathbf{r}_{B}\vert (i,j), e=0\right)+ 
 P_{i\overline{i}}g\left(\mathbf{r}_{B}\vert (i,j), e=1\right),\label{eq:JMAP_GM1}
\eieee 
\noindent where $g\left(\mathbf{r}_{B}\vert (i,j), e=0\right)$ and $g\left(\mathbf{r}_{B}\vert (i,j), e=1 \right)$ are the PDFs given in \eqref{eq:rb_distribution1} and $\overline{i}$ is the complement of $i$. Since solving the error performance of the JMAP decoder using the Gaussian mixtures in \eqref{eq:JMAP_GM1} is non-trivial, we approximate the JMAP decoder by only considering the dominant terms in the summation of \eqref{eq:JMAP_GM1} \cite{my_TCCN} to obtain
\bieee
\hat{i},\hat{j} = \arg\underset{i\in\{0,1\},j\in\{1,\cdots,M\}}{\max\ }\tilde{g}\left(\mathbf{r}_{B}\vert (i,j), e=0\right),\label{eq:JD}
\eieee
\noindent where $\tilde{g}\left(\mathbf{r}_{B}\vert (i,j),e=0\right)$ is the first term on the RHS of \eqref{eq:JMAP_GM1}. Henceforth, we refer to the above decoder as the Joint Dominant (JD) decoder. To showcase the accuracy of the approximation in \eqref{eq:JD}, we tabulate the error-rates for arbitrary energy levels and $\alpha$ for JMAP and JD decoders in Table~\ref{tab:approximation_JMAP_JD}. We compute the relative-error between error-rates of JMAP and JD decoder as, $\left\vert\frac{{P\textsubscript{JMAP}}-{P\textsubscript{JD}}}{{P\textsubscript{JMAP}}}\right\vert$ and show that the maximum relative error is within $5.55\%$. Therefore, in the next section, we discuss the error analysis using JD decoder.
\begin{table}[!h]
\caption{\label{tab:approximation_JMAP_JD} ERROR-RATES AT BOB WHEN USING JMAP DECODER AND JD DECODER FOR $M=2$}
\vspace{-0.25cm}
\begin{center}
\scalebox{0.85}{
\begin{tabular}{|ccccc|}
\hline
\multicolumn{5}{|c|}{$N_{C}=1$, $N_{B}=8$}                                                                                                                         \\ \hline
\multicolumn{1}{|c|}{SNR} & \multicolumn{1}{c|}{$\{\epsilon_{1},\epsilon_{2},\eta_{1},\eta_{2},\alpha\}$} & \multicolumn{1}{c|}{$P_{\text{JMAP}}$} & \multicolumn{1}{c|}{$P_{\text{JD}}$} & rel. error \\ \hline
\multicolumn{1}{|c|}{5 dB}    & \multicolumn{1}{c|}{$\{0, 1\text{e}^{-6},0.3052,2.6421, 0.4736\}$}& \multicolumn{1}{c|}{$3.06\times 10^{-1}$}& \multicolumn{1}{c|}{$3.23\times 10^{-1}$}& $5.55\times 10^{-2}$\\ 
\hline
\multicolumn{1}{|c|}{14 dB}    & \multicolumn{1}{c|}{$\{0,1\text{e}^{-6},0.5554,3.0750,0.8152\}$}& \multicolumn{1}{c|}{$8.32\times 10^{-2}$}& \multicolumn{1}{c|}{$8.42\times 10^{-2}$}& $1.20\times 10^{-2}$\\ 
\hline
\multicolumn{1}{|c|}{25 dB}    & \multicolumn{1}{c|}{$\{ 0,1\text{e}^{-6},0.4382,3.4008,0.9195\}$} & \multicolumn{1}{c|}{$1.88\times 10^{-2}$}& \multicolumn{1}{c|}{$1.90\times 10^{-2}$} & $1.06\times 10^{-2}$\\ 
\hline
\multicolumn{5}{|c|}{$N_{C}=2$, $N_{B}=4$}                                                                                                                        \\ \hline
\multicolumn{1}{|c|}{SNR} & \multicolumn{1}{c|}{$\{\epsilon_{1},\epsilon_{2},\eta_{1},\eta_{2},\alpha\}$} & \multicolumn{1}{c|}{$P_{\text{JMAP}}$} & \multicolumn{1}{c|}{$P_{\text{JD}}$} & rel. error \\ \hline
\multicolumn{1}{|c|}{5 dB}    & \multicolumn{1}{c|}{$\{ 0,1\text{e}^{-6},0.4334,2.7135,0.5734\}$}& \multicolumn{1}{c|}{$3.735\times 10^{-1}$}& \multicolumn{1}{c|}{$3.782\times 10^{-1}$}& $1.25\times 10^{-2}$\\ 
\hline
\multicolumn{1}{|c|}{14 dB}& \multicolumn{1}{c|}{$\{0,1\text{e}^{-6},0.5353,3.1645,0.8499\}$}& \multicolumn{1}{c|}{$1.32\times 10^{-1}$}            & \multicolumn{1}{c|}{$1.33\times 10^{-1}$}& $7.57\times 10^{-4}$ \\ \hline
\multicolumn{1}{|c|}{25 dB} & \multicolumn{1}{c|}{$\{0,1\text{e}^{-6},0.3228,3.6082,0.9655\}$}& \multicolumn{1}{c|}{$2.43\times 10^{-2}$}            & \multicolumn{1}{c|}{$2.47\times 10^{-2}$} & $1.64\times 10^{-2}$\\ \hline
\end{tabular}
}
\end{center}
\end{table}

\subsection{Joint Dominant (JD) Decoder for NC-FFFD Relaying Scheme}
\label{ssec:JD}
From \eqref{eq:def_l}, we observe that there exist a one-to-one correspondence between $(i, j)$ and $\ell$. Thus, the JD decoder in \eqref{eq:JD} can be rewritten as, $\hat{\ell} = \arg\underset{\ell \in\{1,\ldots, 2M\}}{\max\ }\tilde{g}\left(\mathbf{r}_{B}\vert \ell, e=0\right)$. Henceforth, a transmit pair jointly chosen by Alice and Charlie will be denoted by the index $\ell \in \{1, 2, \ldots, 2M\}$. As a consequence, the JD decoder only considers  the likelihood functions corresponding to the $2M$ dominant energy levels in $\mathcal{S}$ with the assumption that no decoding error is introduced by Charlie. Let $\bigtriangleup_{\substack{\ell\rightarrow \ell^{*}\\ \ell \neq \ell^{*}}}$ denotes the event when Bob incorrectly decodes an index $\ell$ to $\ell^{*}$ such that $\ell \neq \ell^{*}$. Then, $\Pr\left(\bigtriangleup_{\substack{\ell\rightarrow \ell^{*}\\ \ell \neq \ell^{*}}}\right)=\Pr\left(\tilde{g}\left(\mathbf{r}_{B}\vert\ell, e=0\right)\leq \tilde{g}\left(\mathbf{r}_{B}\vert \ell^{*}, e=0\right)\right)$.
To characterize $\Pr\left(\bigtriangleup_{\substack{\ell\rightarrow \ell^{*}\\ \ell \neq \ell^{*}}}\right)$, one should determine the energy detection threshold between the energy levels corresponding to ${\ell}$ and ${\ell^{*}}$. Towards this direction, we use the following lemma that computes the energy detection threshold between $S_{\ell}$ and $S_{\ell^{*}}$.

\begin{lemma}
\label{lm:rho}
If $S_{\ell}$ denotes the energy level jointly contributed by Alice and Charlie corresponding to the transmitted index $\ell$ and $S_{\ell^{*}}$ denotes the energy level corresponding to the decoded index $\ell^{*}$ such that $\ell \neq \ell^{*}$, then the probability of the event $\bigtriangleup_{\substack{\ell\rightarrow \ell^{*}\\ \ell \neq \ell^{*}}}$ is given by $\Pr\left(\bigtriangleup_{\substack{\ell\rightarrow \ell^{*}\\ \ell \neq \ell^{*}}}\right) = \Pr(\mathbf{r}_{B}^{H}\mathbf{r}_{B} \geq \rho_{\ell,\ell^{*}})$, where the threshold $\rho_{\ell,\ell^{*}}$ is given by, $\rho_{\ell,\ell^{*}} \approx N_{B}\frac{S_{\ell}S_{\ell^{*}}}{S_{\ell^{*}}-S_{\ell}}\ln\left(\frac{S_{\ell^{*}}}{S_{\ell}}\right)$. 
\end{lemma}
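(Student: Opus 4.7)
The plan is to unwind the pairwise error event $\bigtriangleup_{\ell\rightarrow \ell^{*}}$ directly from the JD decoder's decision rule. By the defining equation immediately preceding the lemma, this event occurs exactly when $\tilde{g}(\mathbf{r}_B \mid \ell, e=0) \leq \tilde{g}(\mathbf{r}_B \mid \ell^*, e=0)$. From \eqref{eq:rb_distribution1} together with the definition of $S_\ell$ in \eqref{eq:map2}, conditioned on $e=0$ the vector $\mathbf{r}_B$ is $\mathcal{CN}(\mathbf{0}_{N_B}, S_\ell \mathbf{I}_{N_B})$. So I would substitute the $N_B$-dimensional Gaussian PDF $\tilde{g}(\mathbf{r}_B \mid \ell, e=0) = (\pi S_\ell)^{-N_B} \exp(-\mathbf{r}_B^H \mathbf{r}_B / S_\ell)$ on both sides of the likelihood inequality, which turns the pairwise error condition into a purely algebraic inequality in the sufficient statistic $\mathbf{r}_B^H \mathbf{r}_B$.

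Next, taking logarithms and collecting the data-dependent and data-independent terms on opposite sides yields an inequality of the form
\[
N_B \ln\!\left(\frac{S_{\ell^*}}{S_\ell}\right) \leq \mathbf{r}_B^H \mathbf{r}_B \cdot \frac{S_{\ell^*} - S_\ell}{S_\ell\, S_{\ell^*}}.
\]
Under the ordering $S_{\ell^*} > S_\ell$ (which, by construction of the indexing in Section~\ref{ssec:com_energy}, holds when $\ell^{*} > \ell$), dividing through preserves the direction of the inequality and produces $\mathbf{r}_B^H \mathbf{r}_B \geq N_B \frac{S_\ell S_{\ell^*}}{S_{\ell^*}-S_\ell} \ln(S_{\ell^*}/S_\ell)$. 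This is exactly the claimed $\rho_{\ell,\ell^{*}}$ and the form $\Pr(\mathbf{r}_B^H\mathbf{r}_B \geq \rho_{\ell,\ell^{*}})$. Structurally, the derivation mirrors the one that produced the energy threshold $\nu$ at Charlie in the paragraph preceding Theorem~\ref{th:P01P10}, only with the receive dimension $N_B$ and variance pair $(S_\ell, S_{\ell^{*}})$ playing the role of $(N_C, \Omega_0, \Omega_1)$.

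The main subtlety — and the likely origin of the $\approx$ in the statement — is reconciling this clean single-threshold region with the exact multi-hypothesis decision region of the JD decoder. Strictly speaking, the error region for index $\ell$ is an intersection of $2M-1$ pairwise half-lines, and when $S_{\ell^*} < S_\ell$ the direction of the inequality flips so that the pairwise region becomes $\{\mathbf{r}_B^H\mathbf{r}_B \leq \rho_{\ell,\ell^{*}}\}$. I expect the statement to absorb this asymmetry by effectively restricting attention to the dominant pairwise comparison with the nearest upper competitor, an approximation that becomes tight at moderate-to-high SNR given the strict ordering $S_1 < S_2 < \cdots < S_{2M}$. Beyond fixing this pairing convention, the proof requires no additional machinery: it is essentially a single log-likelihood manipulation followed by an algebraic rearrangement.
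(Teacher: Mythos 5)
Your central manipulation --- writing the pairwise event as a likelihood-ratio inequality between the two conditional densities, taking logarithms, and solving for the sufficient statistic $\mathbf{r}_{B}^{H}\mathbf{r}_{B}$ --- is exactly the paper's route, and the algebra producing $N_{B}\frac{S_{\ell}S_{\ell^{*}}}{S_{\ell^{*}}-S_{\ell}}\ln\left(\frac{S_{\ell^{*}}}{S_{\ell}}\right)$ is the same. There is, however, one concrete omission: by the paper's definition, $\tilde{g}\left(\mathbf{r}_{B}\vert (i,j), e=0\right)$ is the \emph{first term} of the mixture in \eqref{eq:JMAP_GM1}, i.e.\ the Gaussian density $\left(\pi S_{\ell}\right)^{-N_{B}}\exp\left(-\mathbf{r}_{B}^{H}\mathbf{r}_{B}/S_{\ell}\right)$ weighted by $P_{ii}$, the probability that Charlie decodes Alice's bit correctly --- not the bare density you substitute. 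Carrying those weights through the identical computation gives the exact threshold $\rho_{\ell,\ell^{*}}=\frac{S_{\ell}S_{\ell^{*}}}{S_{\ell^{*}}-S_{\ell}}\left(N_{B}\ln\left(\frac{S_{\ell^{*}}}{S_{\ell}}\right)+\ln\left(\frac{P_{ii}}{P_{i^{*}i^{*}}}\right)\right)$, and the ``$\approx$'' in the lemma is precisely the step of discarding $\ln\left(\frac{P_{ii}}{P_{i^{*}i^{*}}}\right)$: it is identically zero when $i=i^{*}$, and approximately zero when $i\neq i^{*}$ because $P_{00}\approx P_{11}$. Having dropped the weights at the outset, your derivation yields an exact equality and you are then forced to invent a different source for the approximation.

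That invented source --- the multi-hypothesis structure of the JD decoder, the intersection of pairwise half-lines, and the flipped inequality when $S_{\ell^{*}}<S_{\ell}$ --- is not what this lemma approximates. The lemma is a purely pairwise statement, implicitly taken with $S_{\ell^{*}}>S_{\ell}$ (and subsequently applied only to adjacent pairs $\ell^{*}=\ell+1$), so the sign issue you raise is resolved by the ordering convention rather than by an approximation; the assembly of the full error region from the thresholds $\rho_{\ell,\ell+1}$ is done in the text after the lemma, not inside it. In short: right approach and right final formula, but restore the $P_{ii}$, $P_{i^{*}i^{*}}$ weights from the definition of $\tilde{g}$ in \eqref{eq:JD} and identify $\ln\left(\frac{P_{ii}}{P_{i^{*}i^{*}}}\right)\approx 0$ as the approximation actually being made.
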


\begin{proof}
Bob uses JD decoder and compares the conditional PDF of $\mathbf{r}_{B}$ conditioned on $\ell$ and $\ell^{*}$ as,
\bieee
\Pr\left(\bigtriangleup_{\substack{\ell\rightarrow \ell^{*}\\ \ell \neq \ell^{*}}}\right) &=& \Pr\left(\dfrac{\tilde{g}\left(\mathbf{r}_{B}\vert \ell^{*}, e=0\right)}{\tilde{g}\left(\mathbf{r}_{B}\vert \ell, e=0\right)}\leq 1\right)=
\Pr\left(\dfrac{\dfrac{P_{i^{*}i^{*}}}{\left(\pi S_{\ell^{*}}\right)^{N_{B}}}\exp\left(-\frac{\mathbf{r}_{B}^{H}\mathbf{r}_{B}}{S_{\ell^{*}}}\right)}{\dfrac{P_{ii}}{\left(\pi S_{\ell}\right)^{N_{B}}}\exp\left(-\frac{\mathbf{r}_{B}^{H}\mathbf{r}_{B}}{S_{\ell}}\right)}\leq 1\right)= \Pr\left(\mathbf{r}_{B}^{H}\mathbf{r}_{B}\geq \rho_{\ell,\ell^{*}}\right),\nn
\eieee
\noindent where $\rho_{\ell,\ell^{*}}=\frac{S_{\ell}S_{\ell^{*}}}{S_{\ell^{*}}-S_{\ell}}\left(N_{B}\ln\left(\frac{S_{\ell^{*}}}{S_{\ell}}\right) + \ln\left(\frac{P_{ii}}{P_{i^{*}i^{*}}}\right)\right)$ and $P_{ii}$ and $P_{i^{*}i^{*}}$ are a priori probabilities of index $\ell$ and $\ell^{*}$, respectively. It is straightforward that when $i=i^{*}$, $\ln\left(\frac{P_{ii}}{P_{i^{*}i^{*}}}\right)=0$. Further, since $\ln\left(\frac{P_{00}}{P_{11}}\right)\approx 0$ for $N\geq 1$, when $i\neq i^{*}$, we have $\ln\left(\frac{P_{ii}}{P_{i^{*}i^{*}}}\right)=\ln\left(\frac{P_{00}}{P_{11}}\right)\approx 0$ and $\ln\left(\frac{P_{ii}}{P_{i^{*}i^{*}}}\right)=\ln\left(\frac{P_{11}}{P_{00}}\right)\approx 0$, for $i=0$ and $i=1$, respectively. Thus, $\rho_{\ell,\ell^{*}}\approx N_{B}\frac{S_{\ell}S_{\ell^{*}}}{S_{\ell^{*}}-S_{\ell}}\ln\left(\frac{S_{\ell^{*}}}{S_{\ell}}\right)$.
\end{proof}

Since $S_{1}<S_{2}\cdots<S_{2M-1}<S_{2M}$, the set of relevant thresholds for the JD decoder are $\{\rho_{\ell,\ell + 1}, \ell = 1, 2, \ldots, 2M-1\}$. Therefore, based on the  received energy $\mathbf{r}_{B}^{H}\mathbf{r}_{B}$, the JD decoder for detecting $\hat{\ell}$ can be realized using an energy detector as, $\hat{\ell}=\ell$, if  $\rho_{\ell - 1,\ell} < \mathbf{r}_{B}^{H}\mathbf{r}_{B} \leq \rho_{\ell,\ell + 1 }$, where $\rho_{0,1}=0$ and $\rho_{2M,\infty}=\infty$.

Using $\hat{\ell}$, the average Symbol Error Probability (SEP), denoted by $P_{e}$, is given by, $P_{e} = \frac{1}{2M} \sum_{\ell = 1}^{2M} P_{e, \ell}$, where $P_{e,\ell} = \Pr\left(\ell\neq\ell^{*}\right)$ is the probability that Bob decodes a transmitted index $\ell$ as $\ell^{*}$, where $\ell\neq\ell^{*}$. Since, the decision of the energy detector is based on the received energies at Bob, we notice that sum energy levels can be from $\mathcal{S}$, when $e=0$ or $\mathcal{\overline{S}}$, when $e=1$. Therefore, $P_{e,\ell} = \Pr(e=0)\Pr\left(\ell\neq\ell^{*}\vert e=0\right) + \Pr(e=1)\Pr\left(\ell\neq\ell^{*}\vert e=1\right)$. Thus, we have
\begin{equation}
P_{e, \ell}=
\begin{cases}
P_{00}P_{e, S_{\ell}} + P_{01}P_{e, \overline{S}_{\ell}} & \text{if }\ell(\mathrm{mod}4)\leq 1,
\\
P_{11}P_{e, S_{\ell}} + P_{10}P_{e, \overline{S}_{\ell}} & \text{if } \text{otherwise},
\end{cases}
\label{eq:Pe_formal2}
\end{equation}
\noindent where $P_{e, S_{\ell}}$ and $P_{e, \overline{S}_{\ell}}$ are the terms associated with erroneous decision in decoding $\ell$, when $e=0$ and $e=1$, respectively. Since $\mathbf{r}_{B}^{H}\mathbf{r}_{B}$ is gamma distributed, we get $P_{e, S_{\ell}}$ as given in \eqref{eq:errors_dominant}. 
\begin{small}
\begin{equation}
P_{e,S_{\ell}}=
\begin{cases}
1-\Pr\left(\mathbf{r}_{B}^{H}\mathbf{r}_{B}\leq\rho_{1, 2}\vert e=0\right) = \dfrac{\Gamma\left(N_{B}, \frac{\rho_{1,2}}{S_{1}}\right)}{\Gamma\left(N_{B}\right)} & \text{for }\ell=1,
\\
1-\Pr\left(\rho_{\ell-1,\ell}\leq\mathbf{r}_{B}^{H}\mathbf{r}_{B}\leq\rho_{\ell, \ell+1}\vert e=0\right) = \dfrac{\gamma\left(N_{B}, \frac{\rho_{\ell-1,\ell}}{S_{\ell}}\right)}{\Gamma\left(N_{B}\right)} + \dfrac{\Gamma\left(N_{B}, \frac{\rho_{\ell,\ell+1}}{S_{\ell}}\right)}{\Gamma\left(N_{B}\right)} & \text{for } 2\leq\ell\leq 2M-1,
\\
1-\Pr\left(\mathbf{r}_{B}^{H}\mathbf{r}_{B}>\rho_{2M-1,2M}\vert e=0\right) = \dfrac{\gamma\left(N_{B}, \frac{\rho_{2M-1,2M}}{S_{2M}}\right)}{\Gamma\left(N_{B}\right)} & \text{for } \ell=2M.
\end{cases}
\label{eq:errors_dominant}
\end{equation}
\end{small}
\noindent Since Bob uses the same thresholds to compute $P_{e, \overline{S}_{\ell}}$, we obtain the expression of $P_{e, \overline{S}_{\ell}}$, by replacing $S_{\ell}$ by $\overline{S}_{\ell}$ in \eqref{eq:errors_dominant}. Finally, substituting \eqref{eq:Pe_formal2}, \eqref{eq:errors_dominant}, and corresponding $P_{e, \overline{S}_{\ell}}$ in $P_{e}$, we get,
\begin{multline}
P_{e} = \frac{1}{2M}\left[ \sum_{\ell_{1} = 1}^{M}\left(P_{00}P_{e, S_{\frac{1}{2}\left(4\ell_{1}+(-1)^{\ell_{1}}-1\right)}} + P_{01}P_{e, \overline{S}_{\frac{1}{2}\left(4\ell_{1}+(-1)^{\ell_{1}}-1\right)}}\right)\right.\\
\ \left. + \sum_{\ell_{2} = 1}^{M}\left(P_{11}P_{e, S_{\frac{1}{2}\left((-1)^{\ell_{2}}\left(4(-1)^{\ell_{2}}\ell_{2} + (-1)^{\ell_{2}+1}-1\right)\right)}} + P_{10}P_{e, \overline{S}_{\frac{1}{2}\left((-1)^{\ell_{2}}\left(4(-1)^{\ell_{2}}\ell_{2} + (-1)^{\ell_{2}+1}-1\right)\right)}}\right)\right].\label{eq:Pe} 
\end{multline}

\section{Optimization of Energy Levels}
\label{sec:optimization}
In this section, we formulate an optimization problem in order to compute the optimal energy levels at Alice and Charlie. In particular, as given in \eqref{opt}, we fix $N_{C}$ and $N_{B}$ and then optimise the energy levels, $\{\epsilon_{j},\eta_{j}\}$, and $\alpha$ so as to minimise the SEP subject to the  energy constraint in \eqref{eq:new_constaint}.
\begin{mdframed}
\bieee
\underset{\epsilon_{1},\cdots,\epsilon_{M}, \eta_{1},\cdots,\eta_{M}, \alpha}{\min} \quad &  & P_{e}\label{opt}\\
 \text{subject to:} \quad & &\sum_{j=1}^{M}(\epsilon_{j}+\eta_{j}) = M(1+\alpha), \epsilon_{1}<\cdots<\epsilon_{M}, \eta_{1}<\cdots<\eta_{M}, 0<\alpha<1, \nn\\
 & & \epsilon_{j}<\eta_{j} \text{ for }j\in\{1,3,\cdots, 2M-1\}, \epsilon_{j}>\eta_{j} \text{ for } j\in\{2,4,\cdots, 2M\}.\nn
\eieee 
\end{mdframed}
\noindent One can solve the above optimization problem by first formulating the Lagrangian and then solving the system of $2M+2$ non-linear equations. Since solving a system of non-linear equations is complex in general, we use an alternate approach for minimising $P_{e}$ using its analytical structure, as discussed in the next section. We first discuss the optimization of energy levels for $M=2$ and then propose a generalised approach of $M=2^{m}$ such that $m > 1$.

\subsection{Optimization of Energy Levels for $M=2$}
\label{ssec:Globecom}
The expression of SEP in \eqref{eq:Pe} when $M=2$ is given as,
\bieee
 P_{e}\! =\! \dfrac{1}{4}\left(P_{00}\left(P_{e,S_{1}}\! +\!P_{e,S_{4}}\right) \!+\! P_{11}\left(P_{e,S_{2}}\! +\!P_{e,S_{3}}\right)\! +\! P_{01}\left(P_{e,\overline{S}_{1}}\! +\!P_{e,\overline{S}_{4}}\right)\! +\! P_{10}\left(P_{e,\overline{S}_{2}}\! +\! P_{e,\overline{S}_{3}}\right)\right).\label{eq:Pe_M2}
\eieee
Instead of using $P_{e}$ for optimization problem, we use an upper-bound on $P_{e}$, where we upper-bound $P_{e,\overline{S}_{1}}\!\leq\! P_{e,\overline{S}_{4}}\!\leq\! P_{e,\overline{S}_{2}}\!\leq \! P_{e,\overline{S}_{3}}\!\leq\! 1$, such that,
\bieee
 P_{e}\leq P_{e}^{\prime}\triangleq \dfrac{1}{4}\left(P_{00}\left(P_{e,S_{1}}\! +\!P_{e,S_{4}}\right) \!+\! P_{11}\left(P_{e,S_{2}}\! +\!P_{e,S_{3}}\right)\! +\! 2\left(P_{01}+P_{10}\right)\right).\label{eq:Pe_M2U}
\eieee
\noindent Henceforth, we optimise the energy levels, $\epsilon_{1}$, $\epsilon_{2}$, $\eta_{1}$, and $\eta_{2}$ and $\alpha$ so as to minimise $P_{e}^{\prime}$.\footnote{Later through simulation results, we show that, optimizing \eqref{eq:Pe_M2U} gives us near-optimal results.} Thus, the modified optimization problem when $M=2$ is,
\bieee
\underset{\epsilon_{1},\epsilon_{2}, \eta_{1},\eta_{2}, \alpha}{\min} \quad &  & P_{e}^{\prime}\label{opt:M2}\\
 \text{subject to:} \quad & &\epsilon_{1}+\epsilon_{2}+\eta_{1}+\eta_{2} = 2(1+\alpha), \epsilon_{1}<\epsilon_{2}, \eta_{1}<\eta_{2},0<\alpha<1, \epsilon_{1}<\eta_{1}<\eta_{2}<\epsilon_{2}.\nn
\eieee 
In order to minimise $P_{e}^{\prime}$, it is clear that we must minimise each $P_{e,S_{\ell}}$, for $\ell=1,\cdots,4$ in \eqref{opt:M2}. Towards this direction, in the next lemma, we show that when $\epsilon_{1}=0$, $P_{e,S_{1}}$ is minimum.

\begin{lemma}\label{lm:epsilon1}
The expression $P_{e,S_{1}} = \dfrac{\Gamma\left(N_{B}, \frac{\rho_{1,2}}{S_{1}}\right)}{\Gamma\left(N_{B}\right)}$ is minimum when $\epsilon_{1}=0$.
\end{lemma}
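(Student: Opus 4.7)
The plan is to reduce the minimization of $P_{e,S_1}$ to a one-dimensional monotonicity argument in $\epsilon_1$. Since the upper incomplete gamma function $\Gamma(N_B,\cdot)$ is strictly decreasing on $(0,\infty)$, it will suffice to prove that the argument $\rho_{1,2}/S_1$ is a strictly decreasing function of $\epsilon_1$ on $[0,\infty)$, so that its maximum (and hence the minimum of $P_{e,S_1}$) is attained at $\epsilon_1=0$.

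First, I would use Lemma~\ref{lm:rho} to rewrite $\rho_{1,2}/S_1 = N_B\,g(t)$ with $t := S_2/S_1>1$ and $g(t) = (t\ln t)/(t-1)$. A short calculation gives $g'(t) = (t-1-\ln t)/(t-1)^2$, which is strictly positive for every $t>1$ by the elementary inequality $\ln t < t-1$. Thus $g$ is strictly increasing on $(1,\infty)$, and the problem reduces to showing that $t$ is strictly decreasing in $\epsilon_1$.

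Next, I would isolate the dependence of $S_1$ and $S_2$ on $\epsilon_1$ using \eqref{eq:map2}. The transmit pair $(i,j)=(0,1)$ yields $S_1 = \epsilon_1 + N_o$, so $S_1$ is strictly increasing in $\epsilon_1$. The feasibility constraints $\epsilon_1 < \eta_1$, $\epsilon_1 < \epsilon_2$, and $0<\alpha<1$ imply that the second-smallest sum-energy level is $S_2 = \min(\epsilon_2,\, 1-\alpha+\eta_1) + N_o$, which does not involve $\epsilon_1$. Chaining these observations, $t = S_2/S_1$ is strictly decreasing in $\epsilon_1$, hence so is $\rho_{1,2}/S_1$, and therefore $P_{e,S_1}$ is strictly increasing in $\epsilon_1 \ge 0$, with minimum at $\epsilon_1=0$.

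The main obstacle I anticipate is justifying that $S_2$ does not depend on $\epsilon_1$: this rests on correctly identifying which transmit pair contributes the second-smallest sum-energy under the ordering $S_1<S_2<S_3<S_4$ and the feasibility constraints from \eqref{opt:M2}. Once that identification is pinned down, the rest of the argument is a clean composition of scalar monotonicities, so there is no need to handle the equality constraint \eqref{eq:new_constaint} via Lagrange multipliers here: the budget can be reallocated among $\{\epsilon_2,\eta_1,\eta_2\}$ in subsequent lemmas after $\epsilon_1=0$ is fixed.
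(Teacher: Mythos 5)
Your proposal is correct and follows essentially the same route as the paper: exploit the fact that the upper incomplete Gamma function is decreasing in its second argument, so $P_{e,S_1}$ is minimized by maximizing $\rho_{1,2}/S_1$, which is achieved at the smallest $S_1=\epsilon_1+N_o$, i.e., $\epsilon_1=0$. The only difference is that you explicitly verify the intermediate monotonicity (that $t\mapsto t\ln t/(t-1)$ is increasing, with $S_2$ independent of $\epsilon_1$), a step the paper asserts without detail; this adds rigor but not a new idea.
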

\begin{proof}
The expression of $P_{e,S_{1}}$ is an upper incomplete Gamma function.  Since upper incomplete Gamma function is a decreasing function of the second parameter, $\Gamma\left(N_{B}, \frac{\rho_{1,2}}{S_{1}}\right)$ is a decreasing function of $\frac{\rho_{1,2}}{S_{1}}$. Therefore, $P_{e,S_{1}}$ is minimum when $\frac{\rho_{1,2}}{S_{1}}$ is maximum and $\frac{\rho_{1,2}}{S_{1}}$ is maximum when $S_{1}$ is minimum. Since $S_{1}=\epsilon_{1}+N_{o}$, $S_{1}$ is minimum when $\epsilon_{1}=0$. This completes the proof.
\end{proof}
\begin{lemma}
\label{lm:P12P21}
At high SNR, $P_{e,S_{1}}\ll 1$ and $P_{e,S_{2}}\approx  \dfrac{\Gamma\left(N_{B}, \frac{\rho_{2,3}}{S_{2}}\right)}{\Gamma\left(N_{B}\right)}$. 
\end{lemma}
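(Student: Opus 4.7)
The plan is to leverage Lemma \ref{lm:epsilon1}, which fixes $\epsilon_{1}=0$ in the optimal design, so that $S_{1}=\epsilon_{1}+N_{o}=N_{o}$. With this substitution and the threshold formula $\rho_{\ell,\ell^{*}}\approx N_{B}\frac{S_{\ell}S_{\ell^{*}}}{S_{\ell^{*}}-S_{\ell}}\ln\!\bigl(\frac{S_{\ell^{*}}}{S_{\ell}}\bigr)$ from Lemma \ref{lm:rho}, both claims reduce to examining the asymptotic behavior of $\rho_{1,2}/S_{1}$ and $\rho_{1,2}/S_{2}$ as $N_{o}\to 0$.

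For the first claim $P_{e,S_{1}}\ll 1$: starting from $P_{e,S_{1}}=\Gamma(N_{B},\rho_{1,2}/S_{1})/\Gamma(N_{B})$ in \eqref{eq:errors_dominant}, I would compute
\[
\frac{\rho_{1,2}}{S_{1}} \;=\; N_{B}\frac{S_{2}}{S_{2}-S_{1}}\ln\!\frac{S_{2}}{S_{1}} \;=\; N_{B}\,\frac{\eta_{1}+N_{o}}{\eta_{1}}\,\ln\!\frac{\eta_{1}+N_{o}}{N_{o}},
\]
using $S_{2}=\eta_{1}+N_{o}$. As $N_{o}\to 0$ with $\eta_{1}$ bounded away from zero, this ratio grows without bound (logarithmically in $\mathrm{SNR}$). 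Since the upper incomplete gamma function decays exponentially in its second argument (writing $\Gamma(N_{B},x)=(N_{B}-1)!\,e^{-x}\sum_{k=0}^{N_{B}-1}x^{k}/k!$ for integer $N_{B}$), the exponential factor dominates the polynomial and $P_{e,S_{1}}\to 0$ at high SNR.

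For the second claim, I would begin with the explicit formula from \eqref{eq:errors_dominant} for $\ell=2$,
\[
P_{e,S_{2}} \;=\; \frac{\gamma\!\left(N_{B},\,\rho_{1,2}/S_{2}\right)}{\Gamma(N_{B})} \;+\; \frac{\Gamma\!\left(N_{B},\,\rho_{2,3}/S_{2}\right)}{\Gamma(N_{B})},
\]
and show the first summand vanishes at high SNR. Substituting the threshold gives $\rho_{1,2}/S_{2}=N_{B}\frac{S_{1}}{S_{2}-S_{1}}\ln(S_{2}/S_{1})=N_{B}\frac{N_{o}}{\eta_{1}}\ln((\eta_{1}+N_{o})/N_{o})$. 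As $N_{o}\to 0$, the linear factor $N_{o}/\eta_{1}$ overwhelms the logarithmic growth, so $\rho_{1,2}/S_{2}\to 0$. Using the small-argument asymptotic $\gamma(N_{B},x)\sim x^{N_{B}}/N_{B}$ as $x\to 0$, the first summand is negligible, yielding the stated approximation.

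The principal subtlety, rather than any calculation, is justifying that the optimal $\eta_{1}$ stays bounded away from $0$ as $N_{o}\to 0$; otherwise the collapse $S_{2}\to S_{1}$ could invalidate both limits. I would argue this briefly by noting that if $\eta_{1}$ scaled with $N_{o}$, then $S_{1}$ and $S_{2}$ would be indistinguishable at Bob's energy detector and the SEP would not decrease with SNR, contradicting optimality of the design. Under this mild regularity on the optimizer, both asymptotic conclusions follow cleanly from the gamma-function tail behaviors above.
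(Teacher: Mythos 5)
Your route is essentially the paper's: set $\epsilon_{1}=0$ via Lemma~\ref{lm:epsilon1} so that $S_{1}=N_{o}$, then show $\rho_{1,2}/S_{1}$ is large and $\rho_{1,2}/S_{2}$ is small, and finish with the tail behaviour of the upper and lower incomplete gamma functions. The only methodological difference is that the paper argues non-asymptotically, sandwiching $\frac{\ln(1+\kappa_{1})}{\kappa_{1}}$ with $\frac{2}{2+\kappa_{1}}\leq\frac{\ln(1+\kappa_{1})}{\kappa_{1}}\leq\frac{2+\kappa_{1}}{2+2\kappa_{1}}$ for $\kappa_{1}=(S_{1}-S_{2})/S_{2}$ and using $S_{1}\approx 0$ to obtain the numerical bounds $\Gamma\left(N_{B},2N_{B}\right)/\Gamma\left(N_{B}\right)\ll 1$ and $\gamma\left(N_{B},N_{B}/2\right)/\Gamma\left(N_{B}\right)\ll 1$, whereas you take limits as $N_{o}\to 0$; both rest on the same mechanism.

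There is, however, a concrete slip: $S_{2}\neq\eta_{1}+N_{o}$. By \eqref{eq:map2} and the index map used in \eqref{eq:Pe}, $\ell=2$ corresponds to $(i,j)=(1,1)$, so $S_{2}=1-\alpha+\eta_{1}+N_{o}$; for $M=2$ the levels $\epsilon_{j}+N_{o}$ are $S_{1}$ and $S_{4}$, not $S_{2}$. Your expressions for $\rho_{1,2}/S_{1}$ and $\rho_{1,2}/S_{2}$ should therefore carry $1-\alpha+\eta_{1}$ in place of $\eta_{1}$. The qualitative argument survives — indeed it becomes easier, since $S_{2}-S_{1}=1-\alpha+\eta_{1}\geq 1-\alpha>0$ irrespective of $\eta_{1}$ — but the ``principal subtlety'' you raise (that the optimal $\eta_{1}$ must stay bounded away from zero) is an artifact of the wrong substitution: what is actually needed is that $S_{2}$ stays bounded away from zero, i.e., $\alpha$ bounded away from $1$, and the lemma is invoked pointwise for given feasible $(\eta_{1},\alpha)$ to build the surrogate objective \eqref{eq:Pe_app} \emph{before} any optimization. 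For the same reason, your appeal to optimality of the design to control $\eta_{1}$ is circular: one cannot justify the approximation that defines the objective by properties of its minimizer. With $S_{2}$ corrected and the regularity condition restated as ``$S_{2}$ bounded away from zero (fixed $\alpha<1$),'' your proof is sound and matches the paper's in substance.
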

\begin{proof}
We first prove that $P_{e,S_{1}}\ll 1$. We have $P_{e,S_{1}}=\frac{\Gamma\left(N_{B}, \frac{\rho_{1,2}}{S_{1}}\right)}{\Gamma\left(N_{B}\right)}$. The ratio $\frac{\rho_{1,2}}{S_{1}}$ is expressed as, $N_{B}\frac{\ln(1+\kappa_{1})}{\kappa_{1}}$, where $\kappa_{1}=(S_{1}-S_{2})/S_{2}$. Further. since $S_{1}<S_{2}$, $-1<\kappa_{1}<0$. Also, the ratio $\frac{\ln(1+\kappa_{1})}{\kappa_{1}}$ follows the inequalities, $\frac{2}{2+\kappa_{1}}\leq\frac{\ln(1+\kappa_{1})}{\kappa_{1}}\leq \frac{2+\kappa_{1}}{2+2\kappa_{1}}$, for $\kappa > -1$. Therefore, $\frac{\Gamma\left(N_{B}, \frac{2N_{B}}{2+\kappa_{1}}\right)}{\Gamma\left(N_{B}\right)}\geq\frac{\Gamma\left(N_{B}, \frac{\rho_{1,2}}{S_{1}}\right)}{\Gamma\left(N_{B}\right)}\geq \frac{\Gamma\left(N_{B}, N_{B}\frac{2+\kappa_{1}}{2+2\kappa_{1}}\right)}{\Gamma\left(N_{B}\right)}$, where the second inequality is because $\Gamma\left(N_{B}, \frac{\rho_{1,2}}{S_{1}}\right)$ is a decreasing function of $\frac{\rho_{1,2}}{S_{1}}$. Thus, $\frac{\Gamma\left(N_{B}, \frac{\rho_{1,2}}{S_{1}}\right)}{\Gamma\left(N_{B}\right)}\leq \frac{\Gamma\left(N_{B}, \frac{2N_{B}}{2+\kappa_{1}}\right)}{\Gamma\left(N_{B}\right)} = \frac{\Gamma\left(N_{B}, 2N_{B}\right)}{\Gamma\left(N_{B}\right)}\ll 1$. Since $S_{1}\approx 0$ at high SNR, $2/(2+\kappa_{1}) = 2S_{2}/(S_{1}+S_{2})\approx 2$ and therefore, we have the second inequality. This proves the first part of Lemma. On similar lines, we can prove that at high SNR, the term $\frac{\gamma\left(N_{B}, \frac{\rho_{1,2}}{S_{2}}\right)}{\Gamma\left(N_{B}\right)}\leq\frac{\gamma\left(N_{B}, \frac{N_{B}}{2}\right)}{\Gamma\left(N_{B}\right)}$, thus, $\frac{\gamma\left(N_{B}, \frac{N_{B}}{2}\right)}{\Gamma\left(N_{B}\right)}\ll 1$ and therefore, we have $P_{e,S_{2}} \approx \frac{\Gamma\left(N_{B}, \frac{\rho_{2,3}}{S_{2}}\right)}{\Gamma\left(N_{B}\right)}$.
\end{proof}
Using the results of Lemma~\ref{lm:P12P21}, the expression of $P_{e}^{\prime}$ is approximated as,
\bieee
P_{e}^{\prime}\approx\dfrac{1}{4}\left(P_{00}P_{e,S_{4}} \!+\! P_{11}\left(P_{e,S_{2}}\! +\!P_{e,S_{3}}\right)\! +\! 2\left(P_{01}+P_{10}\right)\right).\label{eq:Pe_app}
\eieee
From \eqref{opt:M2} we have 5 variables, resulting in a 5-dimensional search space to find the optimal set $\{\epsilon_{1},\epsilon_{2},\eta_{1},\eta_{2},\alpha\}$. Using the result of Lemma~\ref{lm:epsilon1}, we have $\epsilon_{1}=0$. Further, rearranging the sum energy constraint, we express $\epsilon_{2}$ as a function of $\eta_{1}$, $\eta_{2}$, and $\alpha$, therefore, $\epsilon_{2} = 2(1+\alpha)-(\eta_{1}+\eta_{2})$. Thus, the search space is reduced to 3 dimensions. Through simulations we observe that, when we fix $\eta_{1}$ and $\alpha$, $P_{e}^{\prime}$ exhibits unimodal nature w.r.t. $\eta_{2}$. Similarly, $P_{e}^{\prime}$ is unimodal w.r.t. $\alpha$, when we fix $\eta_{1}$ and $\eta_{2}$. The variation of $P_{e}^{\prime}$, the increasing terms of $P_{e}^{\prime}$, and the decreasing terms of $P_{e}^{\prime}$, w.r.t. $\eta_{2}$ and $\alpha$ are shown in Fig.~\ref{fig:unimodal_eta2} and Fig.~\ref{fig:unimodal_alpha}, respectively. Further, we also observe that the unique mode in both the cases is very close to the intersection of increasing and decreasing terms of $P_{e}^{\prime}$. Therefore, in the next two theorems, we prove that the increasing and decreasing terms of $P_{e}^{\prime}$ w.r.t. $\eta_{2}$ and $\alpha$, have a unique intersection that is close to the local minima of $P_{e}^{\prime}$.

\begin{figure}[!htb]
\vspace{-0.25in}
    \centering
    \begin{minipage}[t]{.48\textwidth}
        \centering
        \includegraphics[width = 0.66\textwidth, height = 0.6\linewidth]{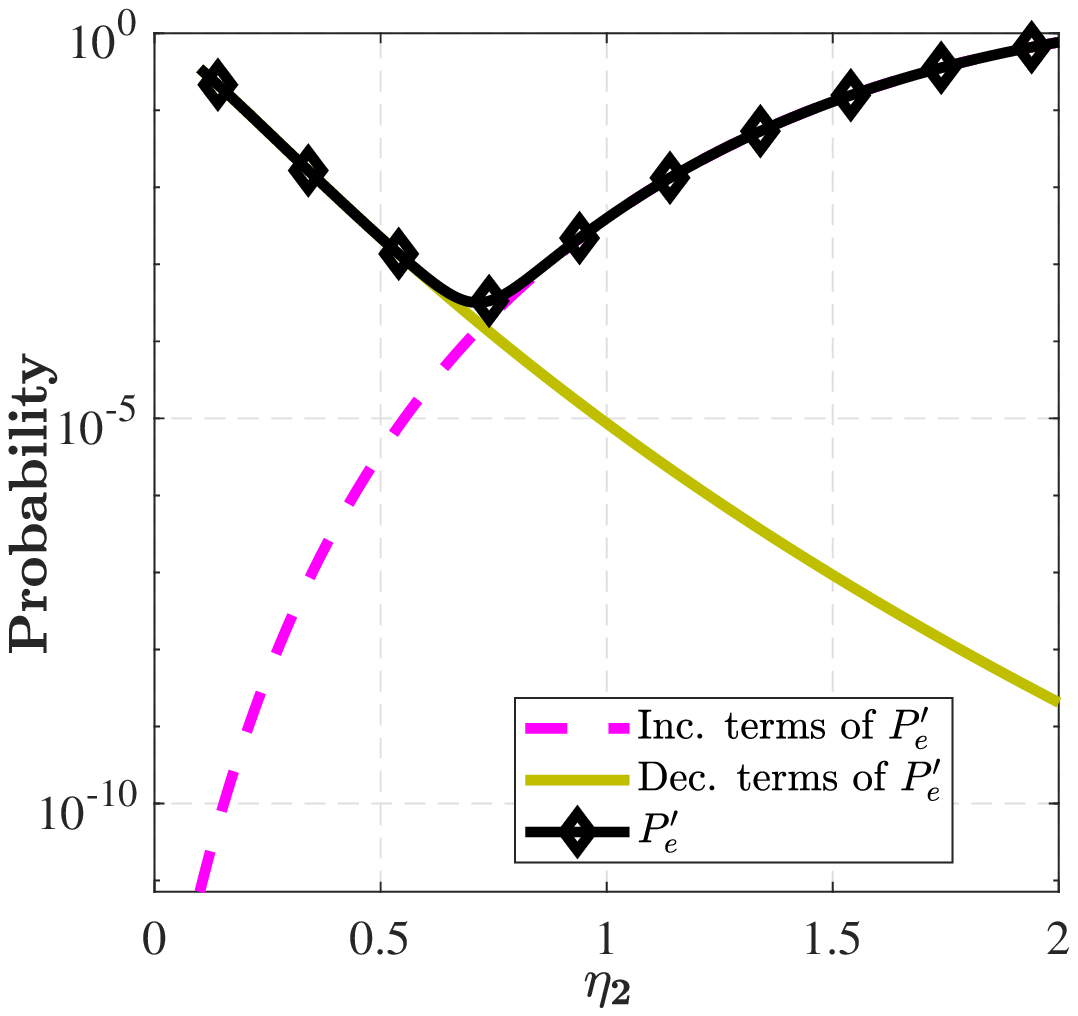}
        \caption{\label{fig:unimodal_eta2}  Variation of $P_{e}^{\prime}$, its increasing and decreasing terms as a function of $\eta_{2}$, when $\eta_{1}$ and $\alpha$ are fixed.}
    \end{minipage}%
        \hfill
    \begin{minipage}[t]{0.48\textwidth}
        \centering
        \includegraphics[width = 0.66\textwidth, height = 0.6\linewidth]{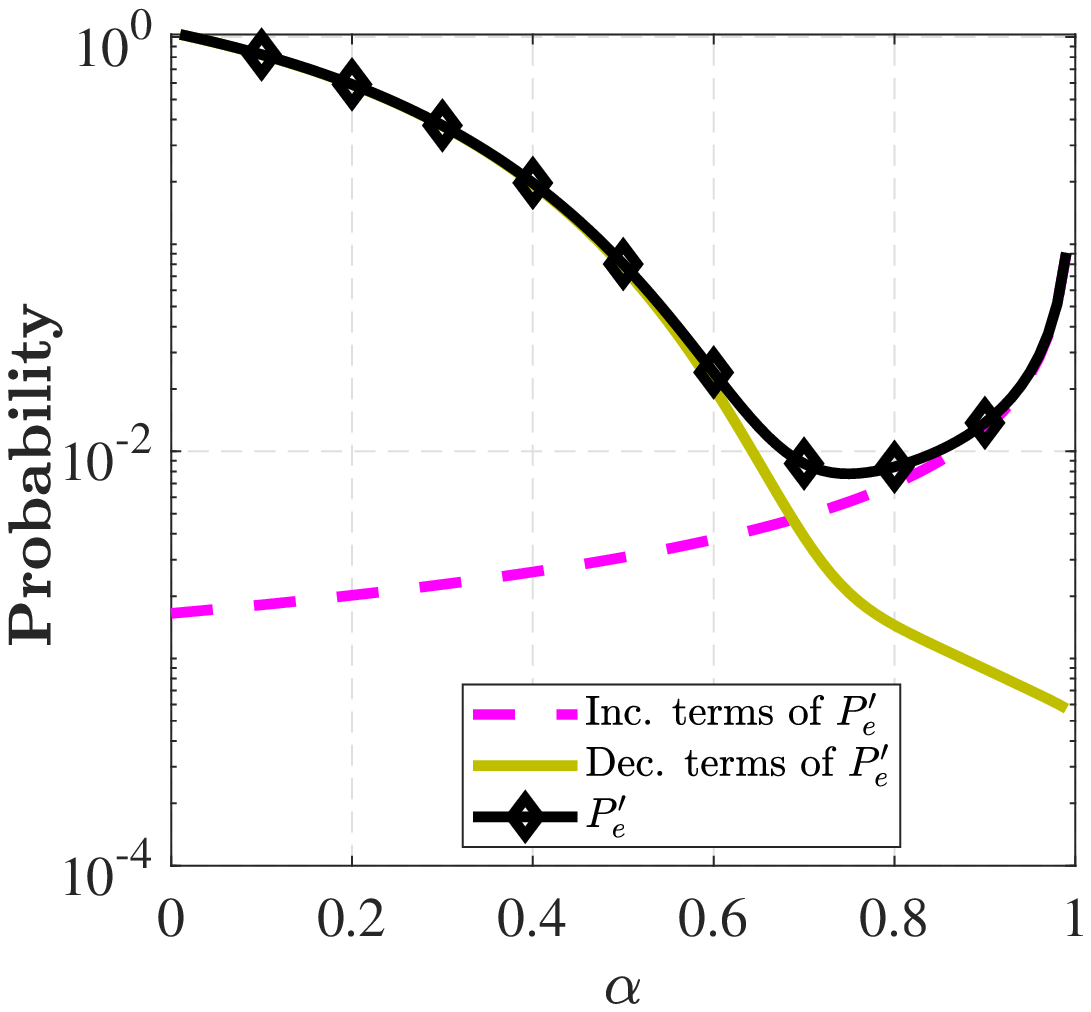}
\caption{\label{fig:unimodal_alpha}Variation of $P_{e}^{\prime}$, its increasing and decreasing terms as a function of $\alpha$, when $\eta_{1}$ and $\eta_{2}$ are fixed.}
    \end{minipage}
\end{figure}
\begin{theorem}
\label{th:Pe_eta2}
For a given $\eta_{1}$ and $\alpha$, the increasing and decreasing terms in $P_{e}^{\prime}$ intersect only once for $\eta_{2}\in\left(\eta_{1},1+\alpha-0.5\eta_{1}\right)$. 
\end{theorem}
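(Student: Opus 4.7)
The plan is to fix $\eta_1$ and $\alpha$ and view $P_e^{\prime}$ as a one-variable function of $\eta_2$. Using Lemma~\ref{lm:epsilon1} to set $\epsilon_1 = 0$ together with the energy constraint $\epsilon_2 = 2(1+\alpha) - \eta_1 - \eta_2$, the only $\eta_2$-dependent quantities in the approximation \eqref{eq:Pe_app} are $P_{e,S_2}$, $P_{e,S_3}$, and $P_{e,S_4}$, since $P_{01}$ and $P_{10}$ depend solely on $\alpha$. In the canonical ordering $S_1 = N_o$, $S_2 = 1-\alpha+\eta_1+N_o$, $S_3 = 1-\alpha+\eta_2+N_o$, $S_4 = \epsilon_2+N_o$, the quantity $S_3$ is strictly increasing and $S_4$ is strictly decreasing in $\eta_2$, while $S_1$ and $S_2$ are constants.

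The core monotonicity step is to exploit the identities $\rho_{\ell,\ell+1}/S_\ell = N_B\, h_1(x)$ and $\rho_{\ell,\ell+1}/S_{\ell+1} = N_B\, h_2(x)$, where $x = S_{\ell+1}/S_\ell > 1$, $h_1(x) = x\ln x/(x-1)$ is strictly increasing, and $h_2(x) = \ln x/(x-1)$ is strictly decreasing on $(1,\infty)$. With $x_{23} = S_3/S_2$ strictly increasing in $\eta_2$ and $x_{34} = S_4/S_3$ strictly decreasing in $\eta_2$, a chain-rule argument combined with the monotonicities of $\gamma(N_B,\cdot)$ and $\Gamma(N_B,\cdot)$ yields a term-by-term classification of \eqref{eq:errors_dominant}: $P_{e,S_2}$ and the $\gamma(N_B,\rho_{2,3}/S_3)/\Gamma(N_B)$ half of $P_{e,S_3}$ are strictly decreasing in $\eta_2$, whereas $P_{e,S_4}$ and the $\Gamma(N_B,\rho_{3,4}/S_3)/\Gamma(N_B)$ half of $P_{e,S_3}$ are strictly increasing. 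Denoting the decreasing sum by $f_D(\eta_2)$ and the increasing sum by $f_I(\eta_2)$, the difference $f_I - f_D$ is strictly increasing on $(\eta_1, 1+\alpha - \eta_1/2)$, and hence vanishes at most once.

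For existence of the crossing I would examine the two endpoints. As $\eta_2 \to \eta_1^+$ we have $x_{23} \to 1$, so both $\rho_{2,3}/S_2$ and $\rho_{2,3}/S_3$ converge to $N_B$; the $\gamma$-term in $P_{e,S_3}$ then saturates toward its maximum and the $\Gamma$-term in $P_{e,S_2}$ tends to the same limit, making $f_D$ dominate $f_I$ since $x_{34}$ remains bounded away from $1$ at this boundary. As $\eta_2 \to (1+\alpha - \eta_1/2)^-$, the constraint $\epsilon_2 = \eta_2$ forces $x_{34} \to 1$, so the $\gamma(N_B,\rho_{3,4}/S_4)/\Gamma(N_B)$ contribution to $f_I$ saturates while $f_D$ stays bounded, giving $f_I > f_D$. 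Combining the intermediate-value theorem with the strict monotonicity of $f_I - f_D$ pins down exactly one intersection in the open interval.

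The main obstacle will be the term $\Gamma(N_B,\rho_{3,4}/S_3) = \Gamma(N_B, N_B\, h_1(x_{34}))$: it depends on $\eta_2$ through both $S_3$ and $S_4$ simultaneously, so the sign of its derivative must be tracked through two compositions, with careful use of the fact that both $h_1$ is increasing and $x_{34}$ is decreasing to get the net increasing behaviour. A secondary subtlety is ensuring that at each endpoint the non-saturating side remains strictly below the common saturation level $\gamma(N_B,N_B)/\Gamma(N_B)$; this reduces to verifying that $x_{34}$ is bounded away from $1$ at $\eta_2 = \eta_1^+$ (which holds because $\epsilon_2 > \eta_2$ is assumed strict throughout the open interval) and that $x_{23}$ is bounded away from $1$ at $\eta_2 = (1+\alpha - \eta_1/2)^-$ (which holds because $\eta_2 > \eta_1$ there).
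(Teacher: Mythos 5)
Your overall strategy is the same as the paper's: approximate $P_{e}^{\prime}$ by \eqref{eq:Pe_app}, classify $P_{e,S_{2}}$ and the $\gamma\left(N_{B},\rho_{2,3}/S_{3}\right)$ part of $P_{e,S_{3}}$ as decreasing in $\eta_{2}$ and $P_{e,S_{4}}$ and the $\Gamma\left(N_{B},\rho_{3,4}/S_{3}\right)$ part as increasing (the paper does this through the ratio $\ln(1+\kappa)/\kappa$, which is exactly your $h_{1},h_{2}$ bookkeeping), and then show the order of the two sums reverses at the extremes of $\eta_{2}\in(\eta_{1},1+\alpha-0.5\eta_{1})$ so that monotonicity forces a single crossing. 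Your left-endpoint argument ($x_{23}\to 1$, hence the decreasing sum tends to $\left[\Gamma(N_{B},N_{B})+\gamma(N_{B},N_{B})\right]/\Gamma(N_{B})=1$ while the increasing sum stays small) is precisely the paper's computation.

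There is, however, a concrete slip at the right endpoint. You claim that $\eta_{2}\to 1+\alpha-0.5\eta_{1}$ forces $x_{34}=S_{4}/S_{3}\to 1$ because $\epsilon_{2}\to\eta_{2}$ there. But $S_{3}=1-\alpha+\eta_{2}+N_{o}$ carries Alice's $1-\alpha$ contribution while $S_{4}=\epsilon_{2}+N_{o}$ does not, so at $\epsilon_{2}=\eta_{2}$ you get $x_{34}\to(\eta_{2}+N_{o})/(1-\alpha+\eta_{2}+N_{o})<1$, not $1$. In fact $S_{3}=S_{4}$ already at $\eta_{2}=\tfrac{1}{2}(1+3\alpha-\eta_{1})<1+\alpha-0.5\eta_{1}$, beyond which the ordering $S_{3}<S_{4}$ underlying \eqref{eq:errors_dominant} and the thresholds $\rho_{3,4}$ no longer holds. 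The fix is easy and keeps your structure intact: either take the effective right extreme to be the point $S_{3}=S_{4}$ (where $x_{34}\to1$ genuinely and the increasing sum saturates at $1$), or observe that once $x_{34}\leq 1$ one has $\rho_{3,4}/S_{3}\leq N_{B}\leq\rho_{3,4}/S_{4}$, so the increasing sum is at least $1$ while the decreasing sum remains $\ll 1$, which is all the order-reversal argument needs. (The paper's own evaluation of the right-extreme limit as exactly $1$ implicitly makes the same $S_{3}\to S_{4}$ identification, so your corrected version is what the argument actually requires.)
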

\begin{proof}
We first determine the increasing and decreasing terms of $P_{e}^{\prime}$. Towards this direction, we first analyse the behaviour of each term in \eqref{eq:Pe_app}, i.e., $P_{e,S_{2}}$, $P_{c,S_{3}}$, and $P_{e,S_{4}}$ as a function of $\eta_{2}$, where
\bieee
P_{e,B2} = \dfrac{\Gamma\left(N_{B}, \frac{\rho_{2,3}}{S_{2}}\right)}{\Gamma\left(N_{B}\right)},\ \ P_{e,S_{3}} = \dfrac{\Gamma\left(N_{B}, \frac{\rho_{3,4}}{S_{3}}\right)}{\Gamma\left(N_{B}\right)} + \dfrac{\gamma\left(N_{B}, \frac{\rho_{2,3}}{S_{3}}\right)}{\Gamma\left(N_{B}\right)},\ \ P_{c,S_{4}} = \dfrac{\gamma\left(N_{B}, \frac{\rho_{3,4}}{S_{4}}\right)}{\Gamma\left(N_{B}\right)}.\nn
\eieee
\noindent Consider the term $P_{e,S_{2}}$, where the ratio $\frac{\rho_{2,3}}{S_{2}}$ is given by, $N_{B}\frac{\ln(1+\kappa_{3})}{\kappa_{3}}$, where $\kappa_{3}=(S_{2}-S_{3})/S_{3}$. Since $S_{2}<S_{3}$,  $\kappa_{3}<0$. Differentiating $\kappa_{3}$ w.r.t. $\eta_{2}$ we get $-S_{1}/S_{2}^{2}$. Therefore, as $\eta_{2}$ increases, $\kappa_{3}$ decreases. Since $\ln(1+\kappa_{3})/\kappa_{3}$ is a decreasing function of $\kappa_{3}$, as $\kappa_{3}$ decreases, $N_{B}{\ln(1+\kappa_{3})}/{\kappa_{3}}$ increases. Finally, since $\frac{\Gamma\left(N_{B}, \frac{\rho_{2,3}}{S_{2}}\right)}/{\Gamma\left(N_{B}\right)}$ is a decreasing function of ${\rho_{2,3}}/{S_{2}}$, $P_{e,B2}$ decreases with increasing ${\ln(1+\kappa_{3})}/{\kappa_{3}}$. Therefore, $P_{e,S_{4}}$ is a decreasing function of $\eta_{2}$.

On similar lines, we can prove that $\frac{\gamma\left(N_{B}, \frac{\rho_{2,3}}{S_{3}}\right)}{\Gamma\left(N_{B}\right)}$ is also a decreasing function of $\eta_{2}$. In contrast, the terms, $\frac{\Gamma\left(N_{B}, \frac{\rho_{3,4}}{S_{3}}\right)}{\Gamma\left(N_{B}\right)}$ and $\frac{\gamma\left(N_{B}, \frac{\rho_{3,4}}{S_{4}}\right)}{\Gamma\left(N_{B}\right)}$ are increasing functions of $\eta_{2}$.

To prove that the increasing and decreasing terms intersect only once, we can prove that the order of increasing and decreasing terms reverses at extreme values of $\eta_{2}\in(\eta_{1}, (1+\alpha-0.5\eta_{1}))$. Thus, we evaluate the sum of decreasing terms at left extreme, i.e., $\eta_{2}\rightarrow\eta_{1}$ and right extreme, i.e., $\eta_{2}\rightarrow(1+\alpha-0.5\eta_{1})$,
\bieee
\lim_{\eta_{2}\rightarrow\eta_{1}}\dfrac{\Gamma\left(N_{B}, \frac{\rho_{2,3}}{S_{2}}\right)}{\Gamma\left(N_{B}\right)} + \frac{\Gamma\left(N_{B}, \frac{\rho_{2,3}}{S_{3}}\right)}{\Gamma\left(N_{B}\right)} = 1 \text{ and } \lim_{\eta_{2}\rightarrow(1+\alpha-0.5\eta_{1})}\frac{\Gamma\left(N_{B}, \frac{\rho_{2,3}}{S_{2}}\right)}{\Gamma\left(N_{B}\right)} + \frac{\Gamma\left(N_{B}, \frac{\rho_{2,3}}{S_{3}}\right)}{\Gamma\left(N_{B}\right)} \ll 1.\nn
\eieee
\noindent Similarly, we evaluate the sum of increasing terms at left extreme and right extremes of $\eta_{1}$,
\bieee
\lim_{\eta_{2}\rightarrow\eta_{1}}\frac{\Gamma\left(N_{B}, \frac{\rho_{3,4}}{S_{3}}\right)}{\Gamma\left(N_{B}\right)} + \frac{\gamma\left(N_{B}, \frac{\rho_{3,4}}{S_{4}}\right)}{\Gamma\left(N_{B}\right)} \ll 1, \text{ and }\ \lim_{\eta_{2}\rightarrow(1+\alpha-0.5\eta_{1})} \frac{\Gamma\left(N_{B}, \frac{\rho_{3,4}}{S_{3}}\right)}{\Gamma\left(N_{B}\right)} + \frac{\gamma\left(N_{B}, \frac{\rho_{3,4}}{S_{4}}\right)}{\Gamma\left(N_{B}\right)} = 1.\nn
\eieee

The above discussion is summarised as,
\begin{equation*}
\begin{cases}
\dfrac{\Gamma\left(N_{B}, \frac{\rho_{2,3}}{S_{2}}\right)}{\Gamma\left(N_{B}\right)} + \dfrac{\Gamma\left(N_{B}, \frac{\rho_{2,3}}{S_{3}}\right)}{\Gamma\left(N_{B}\right)} > \dfrac{\Gamma\left(N_{B}, \frac{\rho_{3,4}}{S_{3}}\right)}{\Gamma\left(N_{B}\right)} + \dfrac{\gamma\left(N_{B}, \frac{\rho_{3,4}}{S_{4}}\right)}{\Gamma\left(N_{B}\right)}, & \text{if $\eta_{2}\rightarrow\eta_{1}$},\\
\dfrac{\Gamma\left(N_{B}, \frac{\rho_{2,3}}{S_{2}}\right)}{\Gamma\left(N_{B}\right)} + \dfrac{\Gamma\left(N_{B}, \frac{\rho_{2,3}}{S_{3}}\right)}{\Gamma\left(N_{B}\right)} < \dfrac{\Gamma\left(N_{B}, \frac{\rho_{3,4}}{S_{3}}\right)}{\Gamma\left(N_{B}\right)} + \dfrac{\gamma\left(N_{B}, \frac{\rho_{3,4}}{S_{4}}\right)}{\Gamma\left(N_{B}\right)}, & \text{if $\eta_{2}\rightarrow(1+\alpha-0.5\eta_{1})$}.
\end{cases}
\end{equation*}
\end{proof}

\begin{theorem}
\label{th:Pe_alpha}
For a given $\eta_{1}$ and $\eta_{2}$, the increasing and decreasing terms in $P_{e}^{\prime}$ intersect only once for $\alpha\in\left(0,1\right)$. 
\end{theorem}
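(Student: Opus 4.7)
The plan is to mirror the proof of Theorem~\ref{th:Pe_eta2}, but with $\alpha$ as the free variable while $\eta_{1}$ and $\eta_{2}$ remain fixed. The first step is to understand how each building block of $P_{e}^{\prime}$ in \eqref{eq:Pe_app} depends on $\alpha$. By Lemma~\ref{lm:epsilon1} we have $\epsilon_{1}=0$, so the energy constraint forces $\epsilon_{2}=2(1+\alpha)-\eta_{1}-\eta_{2}$, which is strictly increasing in $\alpha$. Consequently $S_{4}=\epsilon_{2}+N_{o}$ is strictly increasing in $\alpha$, while $S_{2}=1-\alpha+\eta_{1}+N_{o}$ and $S_{3}=1-\alpha+\eta_{2}+N_{o}$ are strictly decreasing in $\alpha$; $S_{1}=N_{o}$ is constant. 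In parallel, Lemma~\ref{lm:P10P01_alpha} guarantees that $P_{01}$ and $P_{10}$ are strictly increasing in $\alpha$, and hence $P_{00}=1-P_{01}$ and $P_{11}=1-P_{10}$ are strictly decreasing.

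Next, I would classify the individual sub-terms that appear in $P_{e,S_{2}}$, $P_{e,S_{3}}$, and $P_{e,S_{4}}$ using the same auxiliary substitution $\kappa_{\ell}=(S_{\ell}-S_{\ell+1})/S_{\ell+1}$ and chain-rule argument as in the proof of Theorem~\ref{th:Pe_eta2}. The expected outcome is that the dominant piece of $P_{e,S_{2}}$, namely $\Gamma(N_{B},\rho_{2,3}/S_{2})/\Gamma(N_{B})$, together with $\gamma(N_{B},\rho_{2,3}/S_{3})/\Gamma(N_{B})$ appearing in $P_{e,S_{3}}$, are decreasing in $\alpha$ (since both $S_{2}$ and $S_{3}$ shrink while $\eta_{2}-\eta_{1}$ is fixed, so $\rho_{2,3}$ grows relative to $S_{2},S_{3}$); whereas $\Gamma(N_{B},\rho_{3,4}/S_{3})/\Gamma(N_{B})$ from $P_{e,S_{3}}$ and $P_{e,S_{4}}=\gamma(N_{B},\rho_{3,4}/S_{4})/\Gamma(N_{B})$ are increasing in $\alpha$ (because $S_{4}-S_{3}$ widens as $\alpha$ grows). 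Combined with the monotonicity of $P_{01}$ and $P_{10}$ from Lemma~\ref{lm:P10P01_alpha}, one obtains a clean split of $P_{e}^{\prime}$ into a sum of strictly increasing and strictly decreasing pieces of $\alpha$.

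The final step is the endpoint analysis. As $\alpha\to 0^{+}$, the term $2(P_{01}+P_{10})/4$ vanishes and $\epsilon_{2}\to 2-\eta_{1}-\eta_{2}$, so $S_{4}$ is only moderately separated from $S_{3}$, forcing the increasing $\rho_{3,4}$-pieces to be small while the decreasing $\rho_{2,3}$-pieces are close to one. In the opposite regime $\alpha\to 1^{-}$, the separation between $S_{4}$ and $S_{3}$ is maximal, so the $\rho_{3,4}$-pieces approach one, and $P_{01},P_{10}$ attain their maximum by Lemma~\ref{lm:P10P01_alpha}; simultaneously $S_{2}-S_{3}$ is fixed but $S_{2},S_{3}$ themselves are small, so the $\rho_{2,3}$-pieces decay. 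Exhibiting these two opposite inequalities between the sum of increasing and the sum of decreasing terms, and invoking continuity plus the strict monotonicity of each side, will yield a unique intersection in $\alpha\in(0,1)$ by the intermediate value theorem.

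The main obstacle I foresee is \emph{not} the endpoint computation, but rather cleanly establishing the strict monotonicity of the mixed $\gamma$/$\Gamma$ sub-terms that make up $P_{e,S_{3}}$: the threshold $\rho_{2,3}$ depends on two simultaneously decreasing quantities $(S_{2},S_{3})$, while $\rho_{3,4}$ depends on a decreasing $S_{3}$ and an increasing $S_{4}$. Unlike Theorem~\ref{th:Pe_eta2}, here two distinct channels of $\alpha$-dependence (through $\epsilon_{2}$ via the constraint, and through the $1-\alpha$ factor in Alice's energy) interact, and one must verify that no cancellation occurs in any single sub-term's net monotonicity. Once this classification is secure, the argument becomes essentially parallel to that of Theorem~\ref{th:Pe_eta2}.
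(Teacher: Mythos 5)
Your high-level strategy (split $P_{e}^{\prime}$ into pieces monotone in $\alpha$ and show the order of the increasing and decreasing sums reverses at the endpoints) is the same as the paper's, and your handling of $P_{01},P_{10}$ (increasing, by Lemma~\ref{lm:P10P01_alpha}), of $P_{00},P_{11}$, and of the $\rho_{2,3}$-sub-terms is correct. The genuine gap is your classification of the $\rho_{3,4}$-sub-terms: you claim $\Gamma\left(N_{B},\rho_{3,4}/S_{3}\right)/\Gamma(N_{B})$ and $P_{e,S_{4}}=\gamma\left(N_{B},\rho_{3,4}/S_{4}\right)/\Gamma(N_{B})$ are increasing in $\alpha$ ``because $S_{4}-S_{3}$ widens,'' but the direction is backwards: widening the gap makes confusing $S_{3}$ with $S_{4}$ \emph{less} likely. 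Concretely, with $u=S_{4}/S_{3}$, which is strictly increasing in $\alpha$ (since $S_{4}=2(1+\alpha)-\eta_{1}-\eta_{2}+N_{o}$ grows while $S_{3}=1-\alpha+\eta_{2}+N_{o}$ shrinks), one has $\rho_{3,4}/S_{3}=N_{B}\,u\ln u/(u-1)$, which is increasing in $u$, and $\rho_{3,4}/S_{4}=N_{B}\ln u/(u-1)$, which is decreasing in $u$; since $\Gamma(N_{B},\cdot)$ decreases and $\gamma(N_{B},\cdot)$ increases in the second argument, both sub-terms \emph{decrease} in $\alpha$. Hence all of $P_{e,S_{2}},P_{e,S_{3}},P_{e,S_{4}}$ are decreasing in $\alpha$ --- exactly what the paper establishes ``on similar lines of Theorem~\ref{th:Pe_eta2}'' --- and the only increasing contribution in \eqref{eq:Pe_app} is $2(P_{01}+P_{10})$.

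This error propagates into your endpoint analysis: the $\rho_{3,4}$-pieces are largest (not small) as $\alpha\to 0^{+}$, where $S_{4}-S_{3}$ is minimal, and smallest as $\alpha\to 1^{-}$, so the two inequalities you intend to exhibit would not come out as stated, and your grouping would not consist of genuinely monotone sums. Note also that your ``increasing'' sub-terms enter $P_{e}^{\prime}$ multiplied by the decreasing weights $P_{00}$ or $P_{11}$, so even a correctly increasing sub-term would need an extra argument before its weighted contribution could be declared monotone. With the classification corrected, the proof collapses to the paper's: the decreasing block is $P_{00}P_{e,S_{4}}+P_{11}\left(P_{e,S_{2}}+P_{e,S_{3}}\right)$, which becomes negligible as $\alpha\to 1$ because $P_{00},P_{11}$ shrink, while the increasing block $2(P_{01}+P_{10})$ goes from $\approx 0$ at $\alpha\to 0$ to $\approx 2$ at $\alpha\to 1$; the order reversal at the two extremes, together with monotonicity, gives the unique intersection.
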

\begin{proof}
Since $\alpha$ is variable, we recall Lemma~\ref{lm:P10P01_alpha} to show that $P_{01}$ and $P_{10}$ are decreasing function of $\alpha$. Further, since $P_{01}$ and $P_{10}$ are decreasing functions of $\alpha$, $P_{00}$ and $P_{11}$ are decreasing functions of $\alpha$. In addition to these $4$ probabilities, $P_{e,S_{2}}$, $P_{e,S_{3}}$, and $P_{e,S_{4}}$ are also functions of $\alpha$ in \eqref{eq:Pe_app}. On similar lines of Theorem~\ref{th:Pe_eta2}, we prove that, $P_{e,S_{2}}$, $P_{e,S_{3}}$, and $P_{e,S_{4}}$ are decreasing function of $\alpha$. Therefore, we observe that $P_{00}P_{e,S_{4}}+ P_{11}\left(P_{e,S_{2}} + P_{e,S_{3}}\right)$ is a decreasing function of $\alpha$ and since $P_{00}=P_{11}\approx 0$, when $\alpha\rightarrow 1$, $P_{00}P_{e,S_{4}}+ P_{11}\left(P_{e,S_{2}} + P_{e,S_{3}}\right)\approx 0$, when $\alpha\rightarrow 1$. Further, $2(P_{01}+P_{10})$ is an increasing function of $\alpha$ such that, $2(P_{01}+P_{10})\approx 0$, when $\alpha\rightarrow 0$ and   $2(P_{01}+P_{10})\approx 2$, when $\alpha\rightarrow 1$. Therefore, it is straightforward to observe that the increasing and decreasing terms of $P_{e}^{\prime}$ reverse their orders at extreme values of $\alpha$. Thus, they have a unique intersection point.
\end{proof}

In the next section, we use Theorem~\ref{th:Pe_eta2} and Theorem~\ref{th:Pe_alpha} to present a low-complexity algorithm to solve the optimization problem in \eqref{opt:M2}. Using this algorithm, we obtain a local minima over the variables $\eta_{2}$ and $\alpha$ for a given $\eta_{1}$.

\subsubsection{Two-Layer Greedy Descent (TLGD) Algorithm}
In this section, we present Two-Layer Greedy Descent (TLGD) algorithm, as presented in Algorithm~\ref{Algo:M2}. It first fixes $N_{C}$, $N_{B}$, and SNR and then initialise $\eta_{1} = 0$, and $\eta_{2}$ and $\alpha$ with arbitrary values $\eta_{2}^{o}$ and $\alpha^{o}$, respectively. Using the initial values, it computes $P_{e}^{o}$ using \eqref{eq:Pe_app} and then obtains $\eta_{2}^{i}$ and $\alpha^{i}$ using Theorem~\ref{th:Pe_eta2} and Theorem~\ref{th:Pe_alpha}, respectively. It then evaluates $P_{e}^{\eta_{2}}$, i.e., $P_{e}^{\prime}$ at $\left\{\eta_{1}, \eta_{2}^{i}, \alpha\right\}$ and $P_{e}^{\alpha}$, i.e., $P_{e}^{\prime}$ at $\left\{\eta_{1}, \eta_{2}, \alpha^{i}\right\}$. If for a given $\eta_{1}$,  $\left\vert P_{e}^{\alpha}-P_{e}^{\eta_{2}}\right\vert < \delta_{P_{e}^{\prime}}$, for some $\delta_{P_{e}^{\prime}}>0$, then the algorithm exits the inner while-loop with $P_{e}^{\iota}$ such that $P_{e}^{\iota} = \min\left(P_{e}^{\alpha}, P_{e}^{\eta_{2}}\right)$ else, the algorithm iteratively descents in the steepest direction with new values of $\eta_{2}$ and $\alpha$. After traversing several values of $\eta_{1}$, TLGD finally stops when for a given $\eta_{1}$, the obtained $P_{e}^{\iota}$ is within $\delta_{P_{e}^{\prime}}$, resolution of the previously computed value. The points at which $P_{e}^{\prime}$ is minimum as computed by TLGD are given by $\eta_{1}^{\star}$, $\eta_{2}^{\star}$ and $\alpha^{\star}$. We rearrange the constraint in~\eqref{opt:M2} to obtain $\epsilon_{2}^{\star}=2(1+\alpha^{\star})-\left(\eta_{1}^{\star} + \eta_{2}^{\star}\right)$. Further, from Lemma~\ref{lm:epsilon1}, we have $\epsilon_{1}=0$, therefore, $\epsilon_{1}^{\star}=0$. Thus, TLGD computes all the 5 variables, i.e., $\epsilon_{1}^{\star}$, $\epsilon_{2}^{\star}$, $\eta_{1}^{\star}$, $\eta_{2}^{\star}$, and $\alpha^{\star}$.
\begin{algorithm}
\setstretch{0.33}
\DontPrintSemicolon 
  \KwInput{$P_{e}^{\prime}$ from~\eqref{eq:Pe_app}, $\delta_{P_{e}^{\prime}}>0$, $\delta_{\eta_{1}}>0$, $\epsilon_{1}=0$}
  \KwOutput{$\left\{\eta_{1}^{\star}, \eta_{2}^{\star},\alpha^{\star}\right\}$}
  Initialize: $\eta_{1}\gets 0$, $\eta_{2}\gets \eta_{2}^{o}$, $\alpha\gets \alpha^{o}$\\
     $P_{e}^{o} \gets P_{e}^{\prime}\left(\alpha,\eta_{1},\eta_{2}\right)$\\
  \While{true} 
  {
  \While{true}
  {
  Compute $\eta_{2}^{i}$ using Theorem~\ref{th:Pe_eta2} and obtain $P_{e}^{\eta_{2}} \gets P_{e}^{\prime}\left(\eta_{1}, \eta_{2}^{i},\alpha\right)$\\
 Compute $\alpha^{i}$ using Theorem~\ref{th:Pe_alpha} and obtain $P_{e}^{\alpha} \gets P_{e}^{\prime}\left(\eta_{1}, \eta_{2},\alpha^{i}\right)$\\
  \If{$P_{e}^{\alpha}-P_{e}^{\eta_{2}} \geq \delta_{P_{e}^{\prime}}$}
  {
  $\eta_{2} \gets \eta_{2}^{i}$; continue
  }
  \ElseIf{$P_{e}^{\alpha}-P_{e}^{\eta_{2}} \leq -\delta_{P_{e}^{\prime}}$}
  {
  $\alpha \gets \alpha^{i}$; continue
  }
  \ElseIf {$\left\vert P_{e}^{\alpha}-P_{e}^{\eta_{2}}\right\vert<\delta_{P_{e}^{\prime}}$}
  {
  $P_{e}^{\iota} = \min\left(P_{e}^{\alpha}, P_{e}^{\eta_{2}}\right)$; break
  }
  }
   \If{$\left(P_{e}^{\iota}-P_{e}^{o}\right) \leq- \delta_{P_{e}^{\prime}}$}
  {
  $\eta_{1} \gets \eta_{1} + \delta_{\eta_{1}}$, $P_{e}^{o}\gets P_{e}^{\iota}$; $\alpha^{\ast}\gets \alpha$, $\eta_{2}^{\ast}\gets \eta_{2}$
  }
  \ElseIf{$\left(P_{e}^{\iota}-P_{e}^{o}\right) \geq \delta_{P_{e}^{\prime}}$}
  {
  $\eta_{1}^{\star} \gets \eta_{1} - \delta_{\eta_{1}}$, $\eta_{2}^{\star} \gets \eta_{2}^{\ast}$, $\alpha^{\star} \gets \alpha^{\ast}$; break
  }
  \ElseIf{$\left\vert P_{e}^{\iota}-P_{e}^{o}\right\vert < \delta_{P_{e}^{\prime}}$}
  {
  $\eta_{1}^{\star} \gets \eta_{1}$, $\eta_{2}^{\star} \gets \eta_{2}^{i}$, $\alpha^{\star} \gets \alpha^{i}$; break\\
  }
  }
  \caption{\label{Algo:M2} Two-Layer Greedy Descent Algorithm}
\end{algorithm}

\begin{figure}[!htb]
    \centering
    \begin{minipage}[t]{.32\textwidth}
        \centering
        \includegraphics[width = \textwidth, height = 0.9\textwidth]{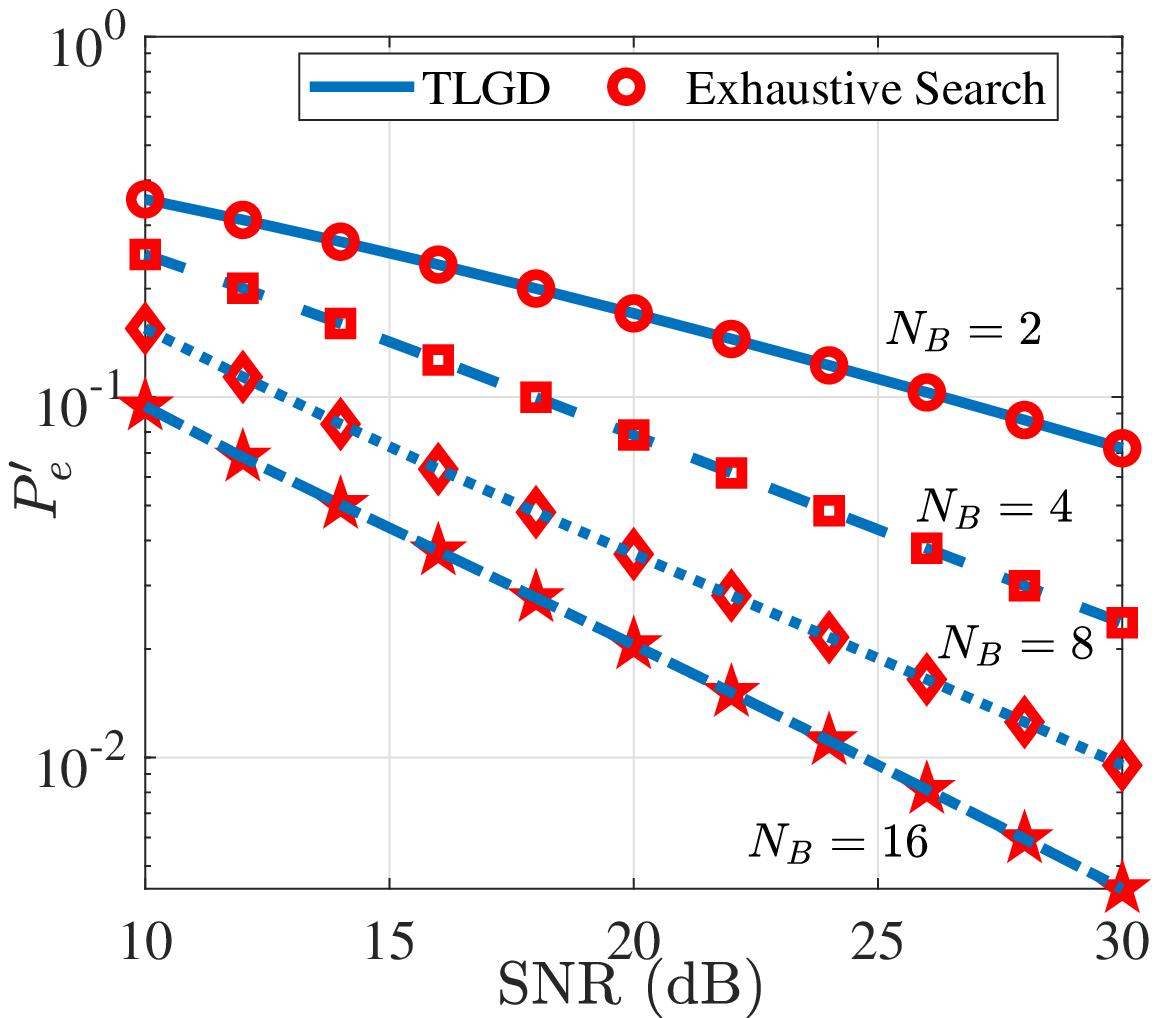}
        \caption{\label{fig:Joint_per} Performance of NC-FFFD using energy levels obtained using TLGD and the exhaustive search.}
    \end{minipage}%
        \hfill
    \begin{minipage}[t]{0.32\textwidth}
        \centering
        \includegraphics[width = \textwidth, height = 0.9\textwidth]{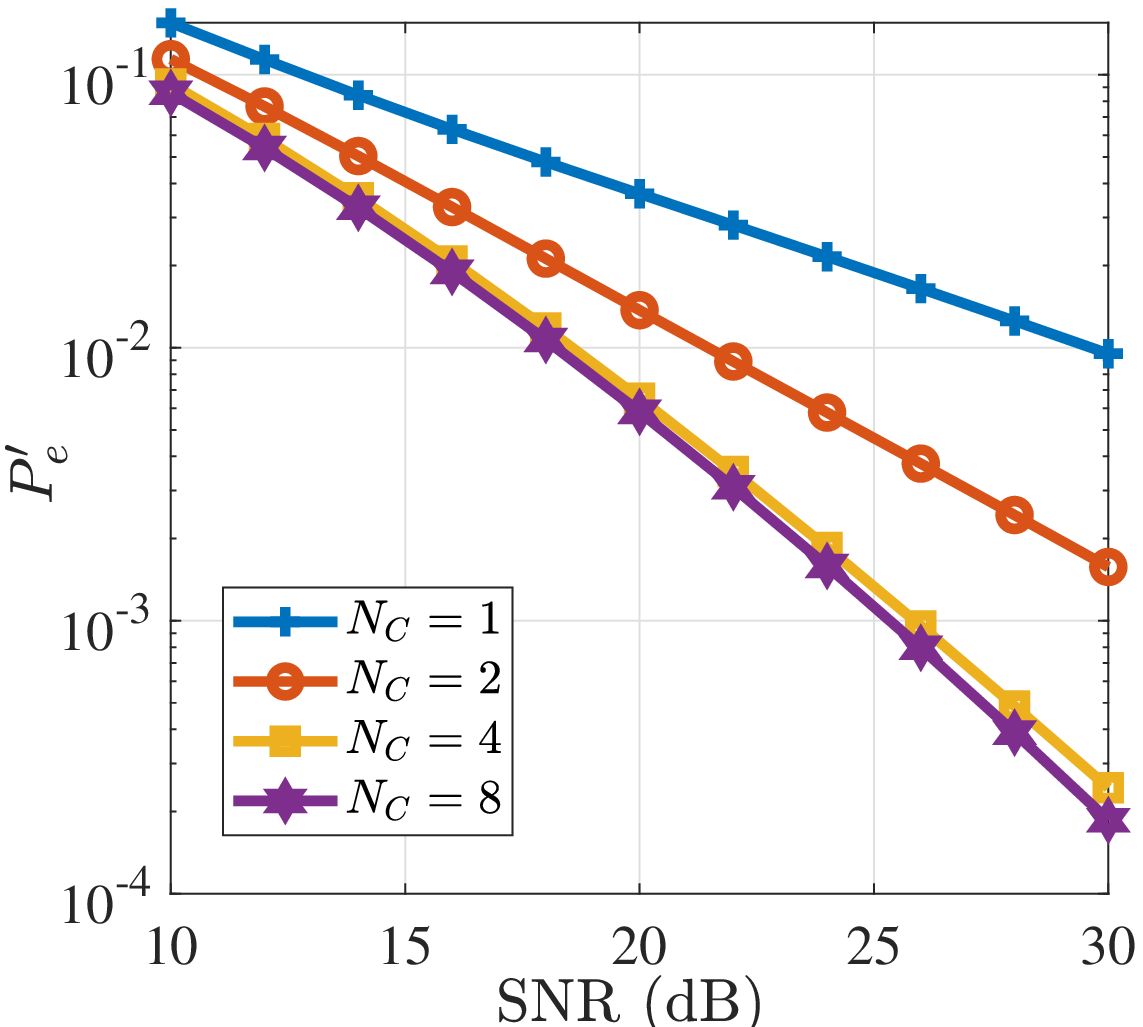}
\caption{\label{fig:Pe_OOK_varNc} Performance of NC-FFFD for fixed $N_{B}=8$ and varying $N_{C}$.}
    \end{minipage}%
    \hfill
    \begin{minipage}[t]{0.32\textwidth}
        \centering
        \includegraphics[width = \textwidth, height = 0.9\textwidth]{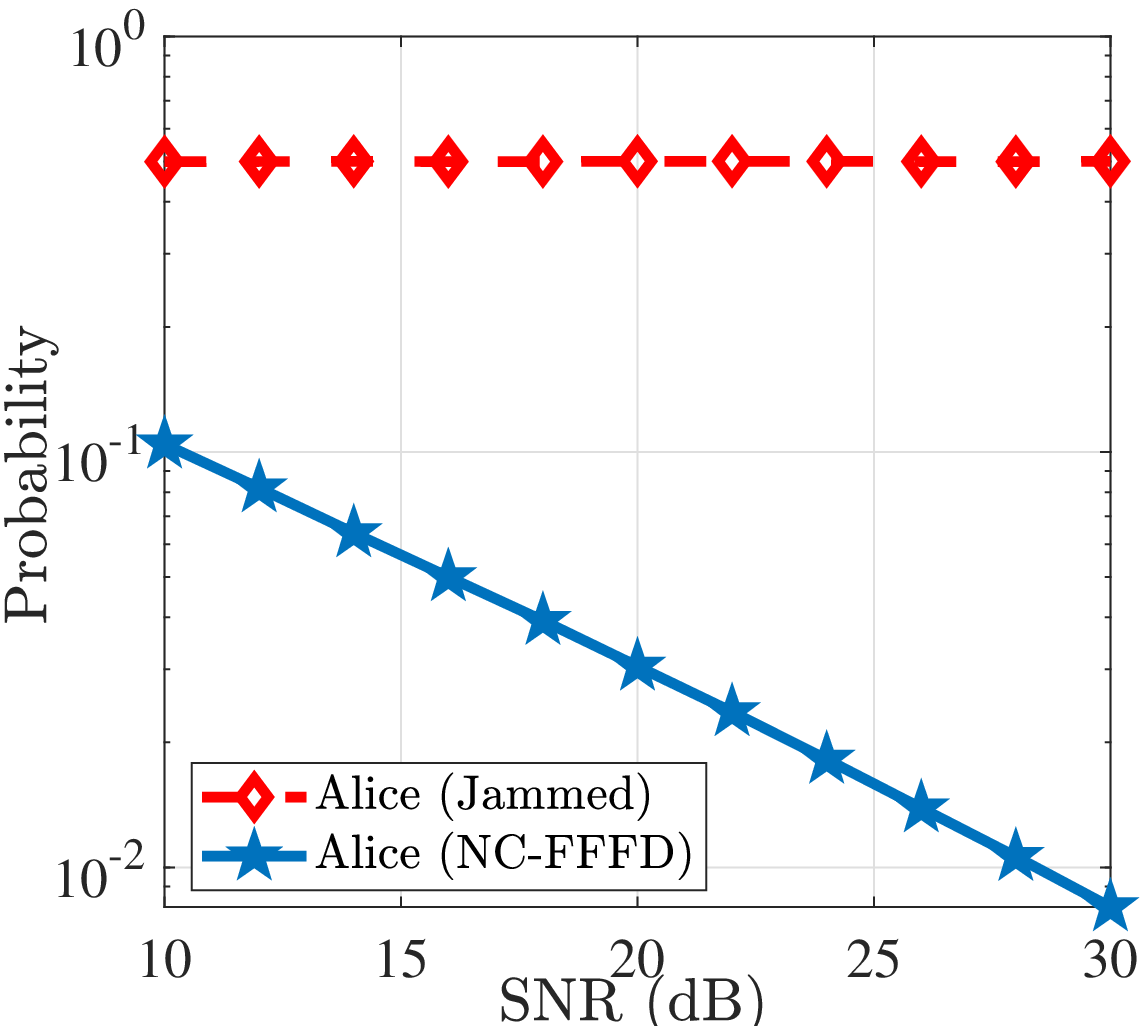}
\caption{\label{fig:Alice_per} Alice's performance when using NC-FFFD scheme for $N_{C}=1$ and $N_{B}=8$.}
    \end{minipage}     
\end{figure} 

In Fig.~\ref{fig:Joint_per}, we plot the error performance of NC-FFFD scheme as a function of SNR and $N_{B}$ using Monte-Carlo simulations. We assume, $\sigma_{AB}^{2}=\sigma_{CB}^{2}=1$, $\lambda=-50$ dB, and $N_{C}=1$. Further, due to vicinity of Alice and Charlie, we assume $\sigma_{AC}^{2}=4$, thus, providing $6$ dB improvement in SNR on Alice-to-Charlie link as compared to Alice-to-Bob link. We compute the error-rates when the optimal energy levels and $\alpha$ are obtained using exhaustive search on  \eqref{eq:Pe_M2}. We also compute the error-rates using the proposed algorithm. For both the scenarios, we observe that the error curves approximately overlap, indicating the efficacy of the proposed algorithm, as well as our approach of using \eqref{opt:M2} instead of \eqref{eq:Pe_M2}. Further, in Fig~\ref{fig:Pe_OOK_varNc}, for same parameters and $N_{B}=8$, we plot the  error performance of ND-FFFD scheme as a function of SNR for various values of $N_{C}$. We observe that, the error performance of NC-FFFD scheme improves as a function of $N_{C}$. Finally, for the same parameters and $N_{B}=8$, in Fig.~\ref{fig:Alice_per}, we show the improvement in Alice's performance when using NC-FFFD relaying scheme. In terms of feasibility of implementation, the complexity analysis of TLGD algorithm has been discussed in the conference proceedings of this work \cite{my_GCOM}. 

\subsection{Optimization of Energy Levels for $M\geq 2$}
\label{ssec:gncfffd}
In this section, we provide a solution that computes the optimal energy levels, $\{\epsilon_{j},\eta_{j}\}$, and the factor $\alpha$, when $M\geq 2$. Since the average transmit energy of Charlie is constrained to $\mathrm{E}_{C,f_{CB}}$, increasing the data-rate at Charlie results in degraded joint error performance as compared to $M=2$. One way to improve the error performance is by using a large number of receive antennas at Bob. Despite this improvement, it is important to note that the joint error performance is also a function of the SNR of Alice-to-Charlie link. Therefore, an improved Alice-to-Charlie link can help to improve the overall performance of the scheme. This is also evident from Fig.~\ref{fig:Pe_OOK_varNc}, where we observe that the error performance of the scheme improves as a function of $N_{C}$. This motivates us to solve $P_{e}$ in \eqref{opt} for optimal $\{\epsilon_{j},\eta_{j}\}$, and $\alpha$ under the assumption that Charlie has a sufficiently large number of receive-antennas. In this section, we take a similar approach as that of Sec.~\ref{ssec:Globecom}, by upper bounding the complementary error terms by $1$ to obtain an upper bound on $P_{e}$ given by,

\begin{small}
\bieee
P_{e}\leq P_{e}^{\prime} = \frac{1}{2M}\left[ \sum_{\ell_{1} = 1}^{M}P_{00}P_{e, S_{\frac{1}{2}\left(4\ell_{1}+(-1)^{\ell_{1}}-1\right)}} + \sum_{\ell_{2} = 1}^{M}P_{11}P_{e, S_{\frac{1}{2}\left((-1)^{\ell_{2}}\left(4(-1)^{\ell_{2}}\ell_{2} + (-1)^{\ell_{2}+1}-1\right)\right)}} + M\left(P_{01}+P_{10}\right)\right].\label{eq:Pe_upper} 
\eieee
\end{small}
\noindent Since $P_{e}^{\prime}$ is a function of $S_{\ell}$ and $\alpha$, besides $N_{C}$, $N_{B}$, and SNR, in the next theorem, we compute the optimal value of $\alpha\in(0,1)$, that minimises $P_{e}^{\prime}$, when $S_{1},\cdots,S_{2M}$, $N_{C}$, $N_{B}$, and SNR are fixed.
\begin{theorem}
\label{th:alpha_range}
When $S_{1},\cdots,S_{2M}$ are fixed, such that $S_{2}<1$, the optimal value of $\alpha\in(0,1)$ that minimises $P_{e}^{\prime}$ in \eqref{eq:Pe_upper} is given by, $\alpha^{\dagger} = 1-S_{2}$. 
\end{theorem}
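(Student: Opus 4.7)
My plan is to reduce the problem to a one-variable monotonicity analysis of $P_e^\prime$ as a function of $\alpha$, with the $S_\ell$'s held fixed, and then identify the smallest $\alpha$ for which the underlying constellation remains feasible.

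First, I would observe that when $S_1,\dots,S_{2M}$ are fixed, the thresholds $\rho_{\ell,\ell+1}$ from Lemma~\ref{lm:rho} are fixed, and therefore every term $P_{e,S_\ell}$ appearing in \eqref{eq:Pe_upper} is a constant (independent of $\alpha$). Consequently, the entire $\alpha$-dependence of $P_e^\prime$ is carried by Charlie's cross-over probabilities $P_{00}, P_{11}, P_{01}, P_{10}$. Using $P_{00}=1-P_{01}$ and $P_{11}=1-P_{10}$, I would group terms and rewrite
\begin{equation*}
P_e^\prime \;=\; \frac{1}{2M}\Bigl[ A + B + (M-A)P_{01}(\alpha) + (M-B)P_{10}(\alpha)\Bigr],
\end{equation*}
where $A$ and $B$ are the two fixed sums of $P_{e,S_\ell}$'s in \eqref{eq:Pe_upper}. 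Since each $P_{e,S_\ell}\in[0,1]$ and the sums run over $M$ terms, both $M-A\ge 0$ and $M-B\ge 0$. Together with Lemma~\ref{lm:P10P01_alpha}, which asserts that $P_{01}$ and $P_{10}$ are strictly increasing in $\alpha$, this shows $P_e^\prime$ is non-decreasing in $\alpha$ on any feasible subinterval of $(0,1)$.

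Having established monotonicity, I would then determine the smallest feasible $\alpha$. From \eqref{eq:map2}, the indices $\ell$ with $i=1$ satisfy $\eta_j = S_\ell - (1-\alpha) - N_o$, so the feasibility requirement $\eta_j \ge 0$ translates into $\alpha \ge 1 - S_\ell + N_o$ for every such $\ell$. The binding (smallest) constraint therefore comes from the smallest $S_\ell$ that corresponds to $i=1$. Invoking the ordering $0\le\epsilon_1<\eta_1$ (with $\epsilon_1=0$ by Lemma~\ref{lm:epsilon1}), together with the index map \eqref{eq:def_l} and the total ordering $S_1<S_2<\cdots<S_{2M}$, one sees that $S_1$ matches $(i=0,j=1)$ while $S_2$ matches $(i=1,j=1)$; hence the binding constraint is $\alpha \ge 1 - S_2 + N_o$, i.e.\ essentially $\alpha \ge 1-S_2$ in the relevant regime. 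Combined with monotonicity, the infimum of $P_e^\prime$ over the feasible set is attained at $\alpha^\dagger = 1-S_2$, and the hypothesis $S_2<1$ guarantees $\alpha^\dagger\in(0,1)$.

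The main obstacle, and the place where care is needed, is step that pins $S_2$ to the pair $(i=1,j=1)$. The one-to-one correspondence in \eqref{eq:def_l} is not self-evidently order-preserving, so I would support this step either by a direct case-check using the constraints $\epsilon_1<\eta_1<\eta_2<\epsilon_2<\cdots$ (which force $\epsilon_1+N_o<1-\alpha+\eta_1+N_o$ to be the two smallest elements whenever $\alpha$ is close to its lower bound, i.e., $\eta_1$ close to $0$), or by appealing to the illustration for $M=2,4$ in Fig.~\ref{fig:consexample}. Once that identification is fixed, the remainder is a clean monotonicity argument using only the results already proved earlier in the paper.
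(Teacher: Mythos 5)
Your proposal is correct and follows essentially the same route as the paper: with $S_1,\dots,S_{2M}$ fixed the terms $P_{e,S_\ell}$ are constants, so monotonicity of $P_e^{\prime}$ in $\alpha$ follows from Lemma~\ref{lm:P10P01_alpha} (your explicit rewriting with $M-A\geq 0$, $M-B\geq 0$ is just a sharper version of the paper's "convex combination with coefficients at most $M$" argument), and the optimum is then pinned to the feasibility boundary. The one place you deviate is the feasibility condition itself: the paper does not impose $\eta_j\geq 0$ but only non-negativity of the complementary received-energy levels, $\overline{S}_{(i=0,j)}=S_{(i=1,j)}-(1-\alpha)\geq 0$ from \eqref{eq:map2}, which gives exactly $\alpha\geq 1-S_2$ and hence the stated $\alpha^{\dagger}=1-S_2$; your condition $\eta_j\geq 0$ gives $\alpha\geq 1-S_2+N_o$, and the "essentially, in the relevant regime" step is the only soft spot in your write-up, so to recover the exact constant you should phrase the constraint as $\overline{S}_\ell\geq 0$ (or invoke the high-SNR approximation explicitly). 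Finally, your concern about identifying $S_2$ with $(i=1,j=1)$ is easily discharged: the index map \eqref{eq:def_l} assigns the $i=1$ pairs to indices $2,3,6,7,\dots$, so under the ordering $S_1<\cdots<S_{2M}$ the smallest $i=1$ level is $S_2$ by construction, which is all the paper uses.
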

\begin{proof}
We will first show that $P_{e}^{\prime}$ in \eqref{eq:Pe_upper} is an increasing function of $\alpha$. Then, we compute a lower bound on $\alpha$ considering the feasible energy levels jointly contributed by Alice and Charlie.

The expression of $P_{e}^{\prime}$ in \eqref{eq:Pe_upper} is a convex combination of $P_{00}$, $P_{01}$, $P_{10}$, and $P_{11}$. Further, we notice that $P_{00}$ and $P_{11}$ are decreasing functions of $\alpha$ (Lemma~\ref{lm:P10P01_alpha}). However, since $S_{1},\cdots,S_{2M}$ are fixed, the coefficients of $P_{00}$ and $P_{11}$ are independent of $\alpha$, such that, $\sum_{\ell_{1} = 1}^{M}P_{e, S_{\frac{1}{2}\left(4\ell_{1}+(-1)^{\ell_{1}}-1\right)}}\leq  M$ and $\sum_{\ell_{2} = 1}^{M}P_{e, S_{\frac{1}{2}\left((-1)^{\ell_{2}}\left(4(-1)^{\ell_{2}}\ell_{2} + (-1)^{\ell_{2}+1}-1\right)\right)}}\leq  M$. Further, since $P_{01}$ and $P_{10}$ are increasing functions of $\alpha$, it is straightforward that, $P_{e}^{\prime}$ is an increasing function of $\alpha$. This completes the first part of the proof.

Although, we upper bound the energy levels $\overline{S}_{\ell}$ by $1$, in practice, Bob receives these energy levels when $e=1$ at Charlie. From \eqref{eq:map2}, we have, $\overline{S}_{\frac{1}{2}\left(4\ell_{1} + (-1)^{\ell_{1}}-1\right)} = S_{\frac{1}{2}\left((-1)^{\ell_{1}}\left(4(-1)^{\ell_{1}}+(-1)^{\ell_{1}+1}-1\right)\right)}-(1-\alpha)$. It is important to note that, if $S_{\frac{1}{2}\left((-1)^{\ell_{1}}\left(4(-1)^{\ell_{1}}+(-1)^{\ell_{1}+1}-1\right)\right)}<1-\alpha$ , then $\overline{S}_{\frac{1}{2}\left(4\ell_{1} + (-1)^{\ell_{1}}-1\right)}<0$. However, since $\overline{S}_{\ell}\in\mathcal{\overline{S}}$ are energy levels, $\overline{S}_{\ell}\geq 0$. Therefore, to achieve $\overline{S}_{\frac{1}{2}\left(4\ell_{1} + (-1)^{\ell_{1}}-1\right)}\geq 0$,  we must have $S_{\frac{1}{2}\left((-1)^{\ell_{1}}\left(4(-1)^{\ell_{1}}+(-1)^{\ell_{1}+1}-1\right)\right)}\geq 1-\alpha$ or $\alpha\geq 1-S_{\frac{1}{2}\left((-1)^{\ell_{1}}\left(4(-1)^{\ell_{1}}+(-1)^{\ell_{1}+1}-1\right)\right)}$. Therefore, $\alpha\geq\max\left\{1-S_{\frac{1}{2}\left((-1)^{\ell_{1}}\left(4(-1)^{\ell_{1}}+(-1)^{\ell_{1}+1}-1\right)\right)}\right\}$, where $\ell_{1}=1,\cdots,M$. However, we know that, $S_{1}<\cdots<S_{2M}$, thus, we have $\alpha\geq 1-S_{2}$.

Finally, since $P_{e}^{\prime}$ in \eqref{eq:Pe_upper} is an increasing function of $\alpha$ and $\alpha\geq 1-S_{2}$, $P_{e}^{\prime}$ is minimised when $\alpha=\alpha^{\dagger}=1-S_{2}$. 
\end{proof}

The result of Lemma~\ref{lm:P10P01_nc} indicates that $P_{01}$ and $P_{10}$ are decreasing functions of $N_{C}$. Further, $S_{\ell}$, $\ell=1,\cdots,2M$ are independent of $N_{C}$, as a result, each convex combination in \eqref{eq:Pe_upper} decreases as $N_{C}$ increases. Therefore, it is straightforward to prove that $P_{e}^{\prime}$ is a decreasing function of $N_{C}$. 
\begin{proposition}
\label{prop:Pe_nc_dec}
For a fixed $\alpha\in(0,1)$, when $N_{C}\rightarrow\infty$, we have $P_{01}=P_{10}\approx 0$ and $P_{00}=P_{11}\approx 1$, we have, $P_{e}^{\prime}\geq P_{e,approx} = \frac{1}{2M}\!\left[ \sum_{\ell_{1} = 1}^{M}P_{e, S_{\frac{1}{2}\left(4\ell_{1}+(-1)^{\ell_{1}}-1\right)}} + \sum_{\ell_{2} = 1}^{M}P_{e, S_{\frac{1}{2}\left((-1)^{\ell_{2}}\left(4(-1)^{\ell_{2}}\ell_{2} + (-1)^{\ell_{2}+1}-1\right)\right)}}\right]$.
\end{proposition}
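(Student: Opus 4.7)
My plan is to show that the difference $P_e^{\prime} - P_{e,approx}$ is a non-negative combination of $P_{01}$ and $P_{10}$, so the inequality holds for \emph{all} $N_C$ (and the asymptotic conditions on $P_{00},P_{01},P_{10},P_{11}$ stated in the proposition merely explain why this lower bound becomes tight). The main work is one clean algebraic rearrangement; the asymptotic claims then drop out of Lemma~\ref{lm:P10P01_nc}.

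First, I would substitute $P_{00}=1-P_{01}$ and $P_{11}=1-P_{10}$ into the RHS of \eqref{eq:Pe_upper}, letting
\begin{equation*}
A \triangleq \sum_{\ell_{1}=1}^{M} P_{e,\,S_{\frac{1}{2}\left(4\ell_{1}+(-1)^{\ell_{1}}-1\right)}}, \qquad B \triangleq \sum_{\ell_{2}=1}^{M} P_{e,\,S_{\frac{1}{2}\left((-1)^{\ell_{2}}(4(-1)^{\ell_{2}}\ell_{2}+(-1)^{\ell_{2}+1}-1)\right)}}.
\end{equation*}
Then $P_{e,approx} = \tfrac{1}{2M}(A+B)$, and \eqref{eq:Pe_upper} rewrites as
\begin{equation*}
P_e^{\prime} = \frac{1}{2M}\Bigl[(1-P_{01})A + (1-P_{10})B + M(P_{01}+P_{10})\Bigr].
\end{equation*}

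Next, I would take the difference and regroup the $P_{01}$ and $P_{10}$ contributions to obtain
\begin{equation*}
P_e^{\prime} - P_{e,approx} = \frac{1}{2M}\Bigl[P_{01}(M-A) + P_{10}(M-B)\Bigr].
\end{equation*}
Since each summand in $A$ and $B$ is a probability and hence lies in $[0,1]$, it follows that $A\le M$ and $B\le M$; together with $P_{01},P_{10}\ge 0$ this makes both bracketed terms non-negative, establishing $P_e^{\prime} \ge P_{e,approx}$.

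Finally, for the asymptotic part I would invoke Lemma~\ref{lm:P10P01_nc}, which states that $P_{01}$ and $P_{10}$ are decreasing in $N_C$. Because $\nu/\Omega_{1}$ behaves like $N_C\cdot\ln(1+\theta)/\theta$ with $\theta>0$ fixed in $\alpha$ and SNR, the incomplete Gamma tails $\gamma(N_C,\nu/\Omega_{1})/\Gamma(N_C)$ and $\Gamma(N_C,\nu/\Omega_{0})/\Gamma(N_C)$ both vanish as $N_C\to\infty$, so $P_{01},P_{10}\to 0$ and consequently $P_{00},P_{11}\to 1$. Substituting these limits into the identity above shows the gap $P_e^{\prime}-P_{e,approx}$ collapses to zero, so $P_{e,approx}$ is not only a valid lower bound but an asymptotically tight one. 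I do not anticipate any serious obstacle: the argument is essentially one line of algebra followed by the probability bound $P_{e,S_\ell}\le 1$.
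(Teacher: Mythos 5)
Your proof is correct, and it takes a cleaner, slightly different route than the paper. The paper never writes out an explicit proof: it leans on the preceding observation that the $S_{\ell}$ do not depend on $N_{C}$ and on Lemma~\ref{lm:P10P01_nc}, so that $P_{e}^{\prime}$ is a decreasing function of $N_{C}$, and then treats the proposition as the statement that the $N_{C}\rightarrow\infty$ limit of $P_{e}^{\prime}$ (where $P_{01},P_{10}\approx 0$, $P_{00},P_{11}\approx 1$) is $P_{e,approx}$, so the bound follows from monotone decrease toward that limit. You instead substitute $P_{00}=1-P_{01}$, $P_{11}=1-P_{10}$ into \eqref{eq:Pe_upper} and obtain the exact identity $P_{e}^{\prime}-P_{e,approx}=\tfrac{1}{2M}\left[P_{01}(M-A)+P_{10}(M-B)\right]$, then use $A\leq M$, $B\leq M$ (each $P_{e,S_{\ell}}$ is a probability; the paper itself invokes exactly these bounds inside the proof of Theorem~\ref{th:alpha_range}). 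What your version buys: the inequality holds for every finite $N_{C}$ and every $\alpha$, with no appeal to monotonicity in $N_{C}$, and the explicit gap formula immediately quantifies how tight the bound is, with tightness as $N_{C}\rightarrow\infty$ following from $P_{01},P_{10}\rightarrow 0$. What the paper's route buys is the monotonicity statement itself, which is what the EB algorithm actually exploits when it increments $N_{C}$ until the relative error falls below $\Delta_{RE}$. One small remark on your asymptotic step: Lemma~\ref{lm:P10P01_nc} alone gives only that $P_{01},P_{10}$ decrease in $N_{C}$, not that they vanish; your supplementary tail argument is what closes this, and it is sound because $\nu/\Omega_{1}=N_{C}\ln(1+\theta)/\theta<N_{C}$ while $\nu/\Omega_{0}=N_{C}(1+\theta)\ln(1+\theta)/\theta>N_{C}$ for fixed $\theta>0$, so both regularized incomplete gamma ratios tend to zero by concentration of a Gamma$(N_{C},1)$ variable around $N_{C}$; it would be worth stating these two inequalities explicitly.
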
 

Motivated by the result of Proposition~\ref{prop:Pe_nc_dec}, instead of solving \eqref{opt} for a sufficiently large $N_{C}$ using the first principles, we take an alternate approach, where we first compute $S_{1},\cdots,S_{2M}$ that minimises $P_{e,approx}$ and then compute the respective $\{\epsilon_{j},\eta_{j}\}$, and $\alpha$ using the relation in \eqref{eq:map2}.

Towards computing the optimal $S_{1},\cdots,S_{2M}$, we observe that since an energy level $S_{\ell}$ corresponds to the sum of energies contributed by Alice, Charlie, and the AWGN at Bob on $f_{CB}$, the sum energies contributed by Alice and Charlie will be $S_{\ell}-N_{o}$. Furthermore, since the average energy on $f_{CB}$ is $1$, we have the following constraint of $S_{\ell}$:
\bieee
\dfrac{1}{2M}\sum_{\ell=1}^{2M}\left(S_{\ell} - N_{o}\right) = 1.\label{eq:sum_const}
\eieee
Finally, we formulate the following optimization problem of computing optimal $S_{1}, \cdots, S_{2M}$ so as to minimise $P_{e,approx}$, subject to \eqref{eq:sum_const}.
\bieee
S_{1}^{\star},\cdots,S_{2M}^{\star} = \arg\underset{S_{1},\cdots,S_{2M}}{\min} \quad &  & P_{e,approx}\label{opt2}\\
 \text{subject to:} \quad & &\dfrac{1}{2M}\sum_{\ell=1}^{2M}\left(S_{\ell} - N_{o}\right) = 1, S_{1}<\cdots < S_{2M}.\nn
\eieee 

While \eqref{opt2} can be solved using the first principles, \cite{ranjan} provides a near-optimal solution for \eqref{opt2}. Therefore, we use the results of \cite{ranjan} to compute $S_{1}^{\star},\cdots,S_{2M}^{\star}$. In the next lemma, we prove that, when we use $S_{1},\cdots,S_{2M}$ to obtain $\{\epsilon_{j},\eta_{j}\}$, such that $S_{1},\cdots, S_{2M}$ follows \eqref{eq:sum_const}, $\{\epsilon_{j},\eta_{j}\}$ satisfies \eqref{eq:new_constaint}.

\begin{lemma}
If $S_{1},\cdots,S_{2M}$ are fixed such that \eqref{eq:sum_const} is satisfied, then the average transmit energy of Charlie is given by \eqref{eq:new_constaint}.
\end{lemma}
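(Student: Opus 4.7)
The plan is to evaluate the sum $\sum_{\ell=1}^{2M} S_\ell$ in two different ways and equate the results. First I would use the one-to-one correspondence in \eqref{eq:def_l} between the index $\ell \in \{1,\ldots,2M\}$ and the transmit pair $(i,j) \in \{0,1\} \times \{1,\ldots,M\}$. This bijection lets me rewrite a sum indexed over $\ell$ as a double sum over $i$ and $j$, which is useful because the expression in \eqref{eq:map2} for $S_\ell$ is naturally written in terms of $(i,j)$.

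Next I would split the resulting double sum on the two values of $i$. From \eqref{eq:map2}, when $i=0$ we get $S_\ell = \epsilon_j + N_o$, and when $i=1$ we get $S_\ell = 1-\alpha+\eta_j+N_o$. Summing over $j \in \{1,\ldots,M\}$ in each case, the total becomes
\begin{equation*}
\sum_{\ell=1}^{2M} S_\ell \;=\; \sum_{j=1}^{M}\bigl(\epsilon_j + N_o\bigr) \;+\; \sum_{j=1}^{M}\bigl(1-\alpha+\eta_j+N_o\bigr) \;=\; \sum_{j=1}^{M}(\epsilon_j+\eta_j) + M(1-\alpha) + 2MN_o.
\end{equation*}
Subtracting $2MN_o$ (one $N_o$ for each of the $2M$ indices), I obtain $\sum_{\ell=1}^{2M}(S_\ell - N_o) = \sum_{j=1}^{M}(\epsilon_j+\eta_j) + M(1-\alpha)$.

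Finally, substituting the hypothesis \eqref{eq:sum_const}, which says $\sum_{\ell=1}^{2M}(S_\ell - N_o) = 2M$, gives $\sum_{j=1}^{M}(\epsilon_j+\eta_j) = 2M - M(1-\alpha) = M(1+\alpha)$. Dividing by $2M$ yields exactly \eqref{eq:new_constaint}. There is no real obstacle here; the only thing to be careful about is bookkeeping of the $N_o$ terms (there are $2M$ of them in $\sum S_\ell$, not $M$) and correctly splitting the sum over $\ell$ into the $i=0$ and $i=1$ halves according to the bijection \eqref{eq:def_l}.
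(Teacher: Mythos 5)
Your proposal is correct and follows essentially the same route as the paper: both split the sum $\sum_{\ell=1}^{2M} S_\ell$ into the $i=0$ and $i=1$ halves via \eqref{eq:map2}, obtaining $\sum_{\ell=1}^{2M}(S_\ell - N_o) = \sum_{j=1}^{M}(\epsilon_j+\eta_j) + M(1-\alpha)$, and then invoke \eqref{eq:sum_const} and divide by $2M$ to recover \eqref{eq:new_constaint}. Your bookkeeping of the $2M$ noise terms and the use of the bijection \eqref{eq:def_l} match the paper's argument.
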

\begin{proof}
From \eqref{eq:map2}, we have $S_{\frac{1}{2}\left(4\ell_{1} + (-1)^{\ell_{1}}-1\right)} = \epsilon_{\ell_{1}}+N_{o},$ and $S_{\frac{1}{2}\left((-1)^{\ell_{1}}\left(4(-1)^{\ell_{1}}+(-1)^{\ell_{1}+1}-1\right)\right)} = 1-\alpha + \eta_{\ell_{1}} + N_{o}$ for $i=0,1$, respectively, where $\ell_{1}=1,\cdots,M$. Rearranging and summing LHS and RHS of both the equations, we get, $\sum_{\ell=1}^{2M}(S_{\ell} - N_{o}) = \sum_{\ell_{1}=1}^{M}\left(\epsilon_{\ell_{1}}+\eta_{\ell_{1}} + (1-\alpha)\right)$. Dividing both sides by $2M$ and rearranging, we get \eqref{eq:new_constaint}.
\end{proof}
In the next section, we propose the energy backtracking algorithm, where we first solve \eqref{opt2} using \cite{ranjan} to obtain $S_{1}^{\star},\cdots,S_{2M}^{\star}$ and then compute corresponding $\{\epsilon_{j},\eta_{j}\vert j=1,\cdots,M\}$, and $\alpha$.  It is important to note that, since Charlie cannot have $N_{C}\rightarrow\infty$, we must bound the number of receive-antennas at Charlie. Thus, we use a parameter $0<\Delta_{RE}\ll 1$ to bound $N_{C}$. Therefore, we compute the minimum number of receive-antennas at Charlie, such that the relative error between $P_{e,approx}^{\star}$ and $P_{e,eval}$ is within $\Delta_{RE}$, where $P_{e,approx}^{\star}$ is $P_{e,approx}$ evaluated at $S_{1}^{\star},\cdots,S_{2M}^{\star}$ and $P_{e,eval}$ is $P_{e}$ evaluated at optimal $\{\epsilon_{j},\eta_{j}\vert j=1,\cdots,M\}$, and $\alpha$.

\subsection{Energy Backtracking (EB) Algorithm}
The Energy Backtracking (EB) Algorithm, first computes energy levels $S_{1}^{\star},\cdots,S_{2M}^{\star}$ using the semi-analytical results of \cite{ranjan}. It then computes $\alpha^{\dagger}$, and $\epsilon_{j}^{\dagger}$ and $\eta_{j}^{\dagger}$ based on Theorem~\ref{th:alpha_range} and the relation in \eqref{eq:map2}, respectively. It then sets $N_{C}=1$ and computes $P_{e,eval}$, i.e., $P_{e}$ at $\alpha^{\dagger}$, $\epsilon_{j}^{\dagger}$, $\eta_{j}^{\dagger}$ for the given $N_{B}$. The algorithm increments $N_{C}$ until relative error between the $P_{e,approx}^{\star}$ and $P_{e,eval}$ is within $\Delta_{RE}$. The algorithm exits the while-loop when the relative error is less than or equal to $\Delta_{RE}$. The pseudo-code for the proposed  EB algorithm is given in Algorithm~\ref{Algo:Generalised}. 
\begin{algorithm}
\setstretch{0.32}
\DontPrintSemicolon 
  \KwInput{$P_{e}$ \eqref{eq:Pe}, $P_{e,approx}$, $\Delta_{RE}>0$, $M$, $N_{B}$, $N_{o}$}
  \KwOutput{$\epsilon_{1}^{\dagger},\cdots,\epsilon_{M}^{\dagger}$, $\eta_{1}^{\dagger},\cdots,\eta_{M}^{\dagger}$, $N_{C}^{\dagger}$, $\alpha^{\dagger}$}
Compute $S_{1}^{\star},\cdots,S_{2M}^{\star}$ using \cite{ranjan} and evaluate $P_{e,approx}^{\star}$.\\
$\alpha^{\dagger} = 1-S_{2}^{\star}$\\
  ; $\epsilon_{j}^{\dagger} = S_{\frac{1}{2}\left(4j + (-1)^{j}-1\right)}^{\star}-N_{o}$; $\eta_{j}^{\dagger} = S_{\frac{1}{2}\left((-1)^{j}\left(4(-1)^{j}+(-1)^{j+1}-1\right)\right)}^{\star} - (1-\alpha^{\dagger})-N_{o}, \ j=1,\cdots,M$\\
 Set: $N_{C}=1$, $P_{e,eval}=1$\\
  \While{$\left\vert\dfrac{P_{e,approx}^{\star}-P_{e,eval}}{P_{e,approx}^{\star}}\right\vert\geq\Delta_{RE}$} 
  {
  Substitute $S_{1}^{\star},\cdots,S_{2M}^{\star}$, $\alpha^{\dagger}$, $N_{C}$, and $N_{B}$ in \eqref{eq:Pe} and obtain $P_{e,eval}$

   \If{$\left\vert\dfrac{P_{e,approx}^{\star}-P_{e,eval}}{P_{e,approx}^{\star}}\right\vert >\Delta_{RE}$}
  {
  $N_{C}=N_{C}+1$; continue
  }
  \Else
  {
  $N_{C}^{\dagger}=N_{C}$; break
  }
  }
  \caption{\label{Algo:Generalised} Energy Backtracking Algorithm}
\end{algorithm}

\begin{figure}[t]
\vspace{-0.15in}
    \centering
    \begin{minipage}[t]{0.32\textwidth}
       \centering
       \includegraphics[width = \textwidth, height = 0.9\textwidth]{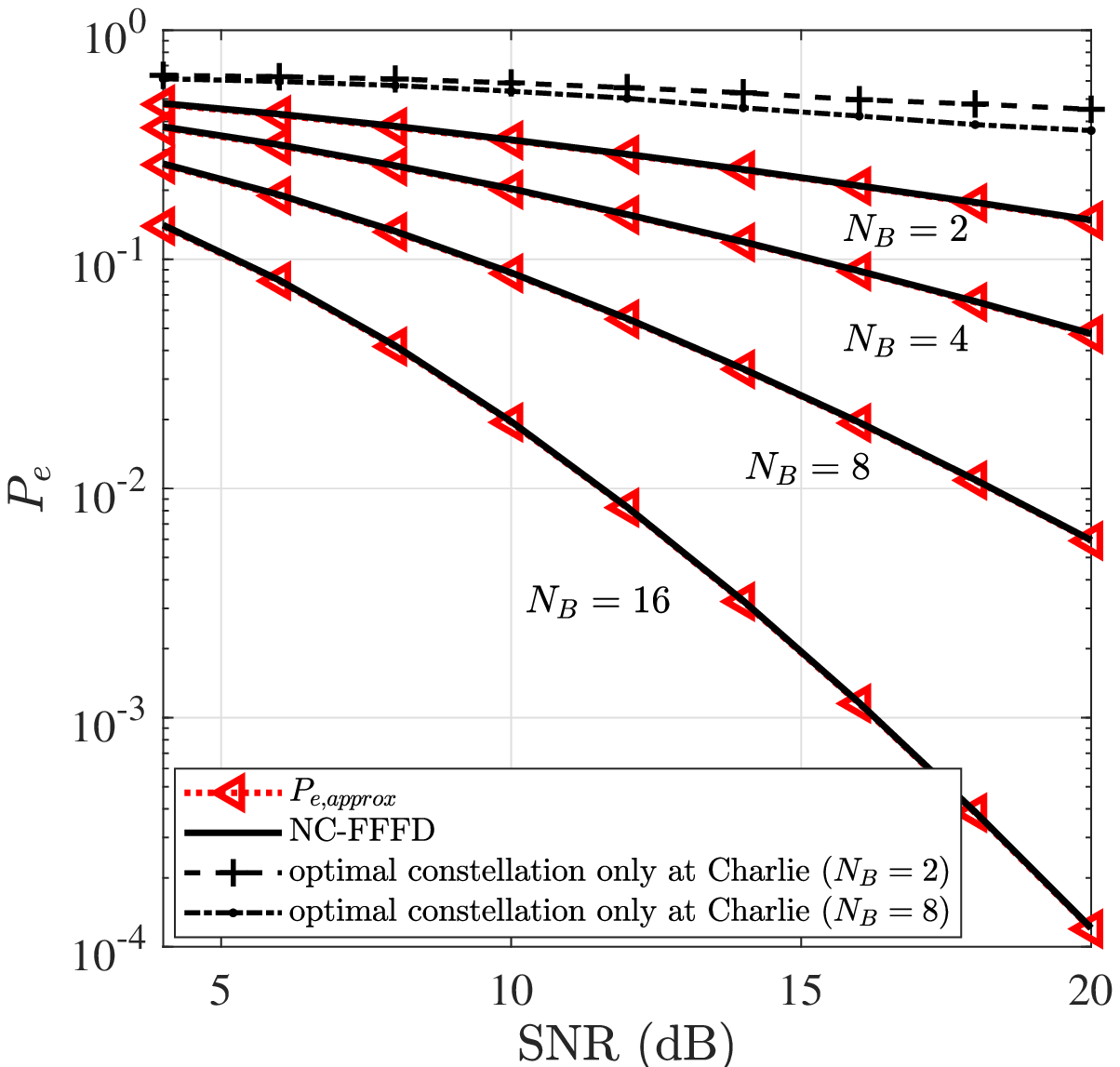}
        \caption{\label{fig:Algo2_M2} Error performance of NC-FFFD when energy levels are computed using EB algorithm for $M=2$.}
     \end{minipage}%
        \hfill
    \begin{minipage}[t]{0.32\textwidth}
        \centering
       \includegraphics[width = \textwidth, height = 0.9\textwidth]{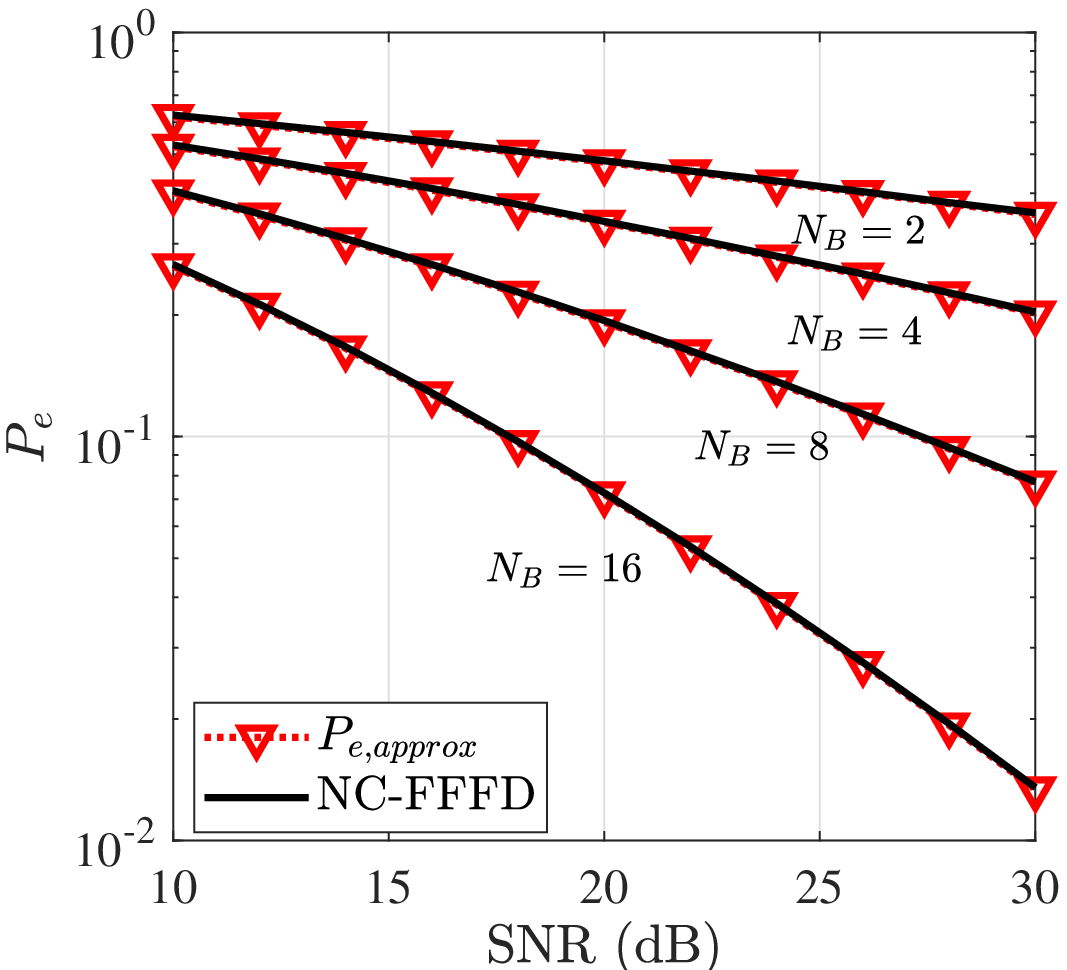}
\caption{\label{fig:Algo2_M4} Error performance of NC-FFFD when energy levels are computed using EB algorithm for $M=4$.}
    \end{minipage}%
    \hfill
    \begin{minipage}[t]{0.32\textwidth}
        \centering
       \includegraphics[width = \textwidth, height = 0.9\textwidth]{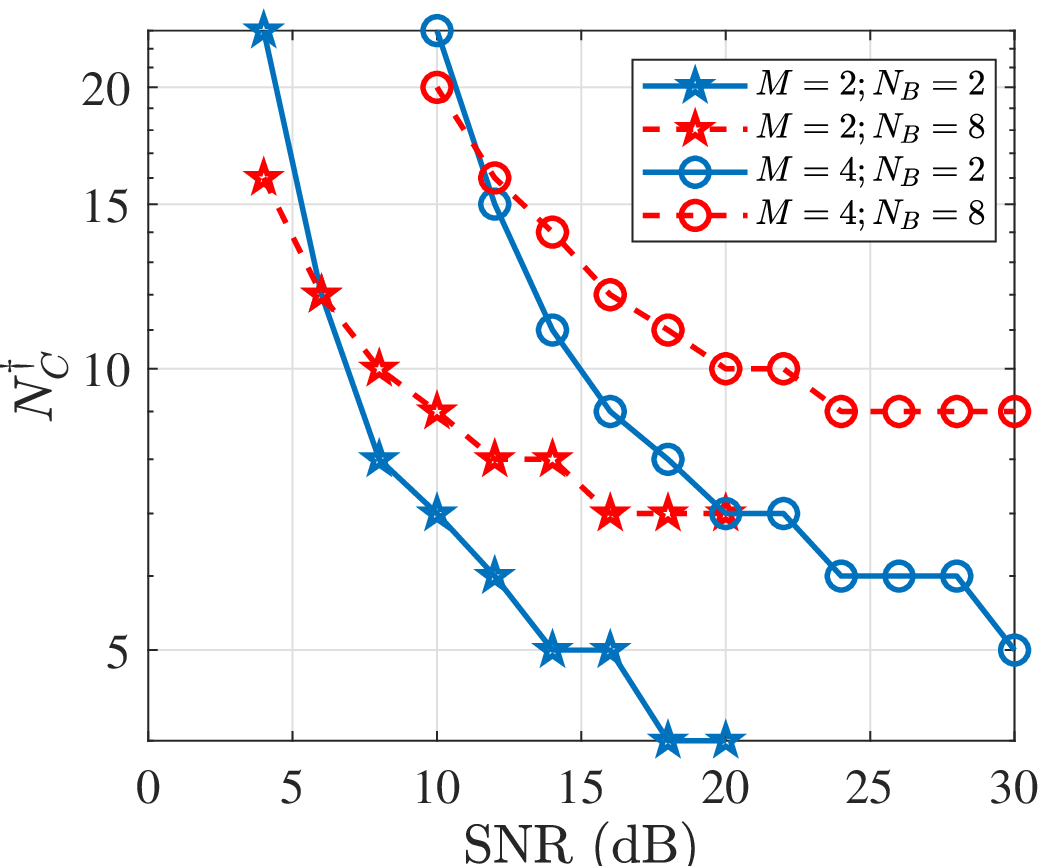}
\caption{\label{fig:opt_ant}$N_{C}^{\dagger}$ as a function of SNR for $M=2$ and $M=4$.}
    \end{minipage}
\end{figure}

In Fig.~\ref{fig:Algo2_M2} and Fig.~\ref{fig:Algo2_M4}, we plot the error performance of NC-FFFD relaying scheme when using the EB Algorithm for $M=2$ and $M=4$ for various values of $N_{B}$. In addition to the simulation parameters assumed above, we assume $\Delta_{RE}=10^{-2}$ for both the cases. For both the cases, we observe that the error performance improves as a function of SNR. In Fig.~\ref{fig:Algo2_M2}, we also plot the performance of NC-FFFD scheme when Charlie uses optimal energy levels for point-to-point communication obtained using \cite{ranjan} for $N_{B}=2,8$. From the plot it is clear that, although Charlie is using optimal energy levels for point-to-point communication, the corresponding error performance of NC-FFFD is poor. This reinforces that to minimise $P_{e}$, energy levels at both the users must be jointly optimised as proposed in Algorithm~\ref{Algo:Generalised}. Finally, in Fig.~\ref{fig:opt_ant}, we also plot $N_{C}^{\dagger}$ as a function of SNR, for various combinations of $M$ and $N_{B}$ and observe that with only tens of antennas at the helper, we can achieve the performance close to its large-antenna counterpart.

If NC-FFFD scheme provides performance close to $P_{e,approx}^{\star}$, it assumes that fast-forwarding at Charlie is perfect. Therefore, the symbols on the direct link, i.e., Alice-to-Bob link and relayed link, i.e., Charlie-to-Bob link, arrive during the same time instant, thereby resulting in the signal model in \eqref{eq:rb}. In the next section, we discuss the case when fast-forwarding at Charlie is imperfect. In particular, we discuss the consequences related to this problem and a possible solution.

\section{Delay-Tolerant NC-FFFD (DT NC-FFFD) Relaying Scheme}
\label{sec:DT_NC-FFFD}
If $nT$ denotes the delay on the relayed link w.r.t. the direct link, such that $n\geq 0$ and $T$ is the symbol duration, then $nT=0$, when fast-forwarding is perfect. However, when fast-forwarding is imperfect, $nT\neq 0$ and $\mathbf{r}_{B}$ must be a function of $nT$. In particular, when $nT\neq 0$, the symbol received at Bob is a function of Alice's current symbol, Charlie's current symbol, and Alice's symbol delayed by $nT$. Although, Charlie's current symbol and Alice's symbol delayed by $nT$ are captured by $E_{C}$, the current symbol of Alice creates an interference in the symbol decoding, thereby degrading the error performance. To illustrate this behaviour, we plot the error performance of NC-FFFD scheme in Fig.~\ref{fig:DT1}, when the symbols on the relayed link arrive one symbol period after the symbols on the direct link. The plot shows that, the error performance degrades as the energy on the direct link interferes when Bob tries to decode symbols using the relayed link. 
 
Towards computing the optimal energy levels at Alice and Charlie when $nT\neq 0$, one can formulate a new signal model, where $\mathbf{r}_{B}$ is a function of $nT$ and then compute the optimal energy levels using the first principles. However, we note that, Alice contributes \emph{zero} and $1-\alpha$ energies on the direct link, when she transmits symbol $0$ and symbol $1$, respectively. Thus, in order to reduce the interference from the direct link, we must reduce the term $1-\alpha$. Therefore, if we upper bound the contribution $1-\alpha$ by small value, then we can continue to use the same signal model on $\mathbf{r}_{B}$ as given in \eqref{eq:rb}, thereby making NC-FFFD scheme \emph{Delay Tolerant}. To this end, we propose an upper bound on $1-\alpha$ as, $1-\alpha\leq \Delta_{\text{DT}}N_{o}$, where $0<\Delta_{\text{DT}}\ll 1$ is the design parameter. Since $1-\alpha\leq \Delta_{\text{DT}}N_{o}$, we have the relation $\alpha\geq 1-\Delta_{\text{DT}}N_{o}$. Further, the result of Theorem~\ref{th:alpha_range} shows that $P_{e}^{\prime}$ is an increasing function of $\alpha$, therefore, the optimal choice of $\alpha$ would be, $\alpha= 1-\Delta_{\text{DT}}N_{o}$. However, since $\Delta_{\text{DT}}\ll 1$, $1-S_{2}<1-\Delta_{\text{DT}}N_{o}$ and therefore, using $\alpha=1-\Delta_{\text{DT}}N_{o}$ will degrade the error performance. In the following discussion, we show that we can achieve the same error performance at $\alpha = 1-\Delta_{\text{DT}}N_{o}$ as achieved in Sec.~\ref{ssec:gncfffd} at $\alpha=1-S_{2}$, by increasing the receive-diversity at Charlie.
\begin{figure}[!htb]
\begin{center}
    \begin{minipage}[t]{0.32\textwidth}
       \centering
      \includegraphics[width = \textwidth, height = 0.9\textwidth]{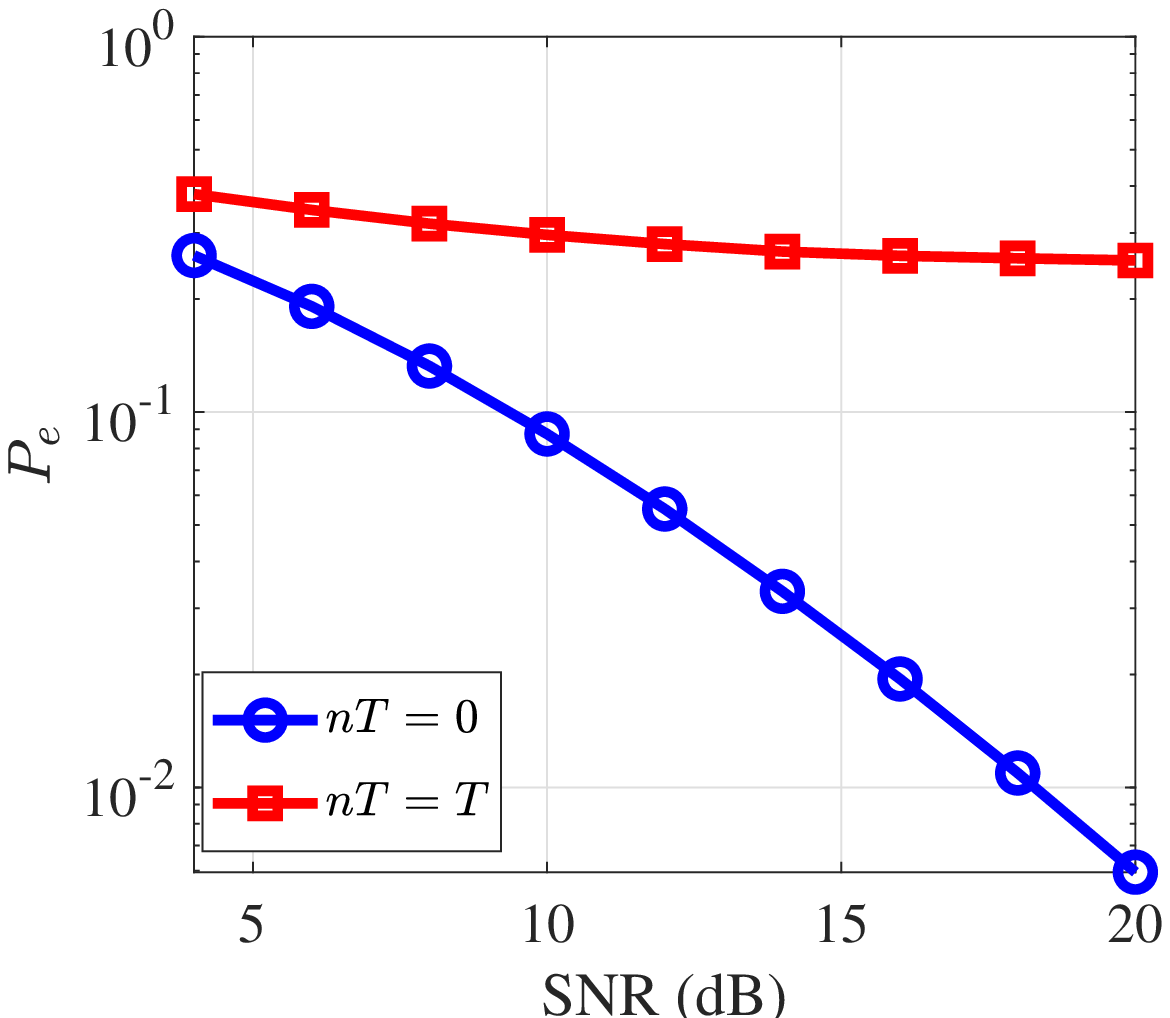}
        \caption{\label{fig:DT1} Error performance for $nT=0$ and $nT=T$.}
     \end{minipage}%
        \hfill
    \begin{minipage}[t]{0.32\textwidth}
        \centering
        \includegraphics[width = \textwidth, height = 0.9\textwidth]{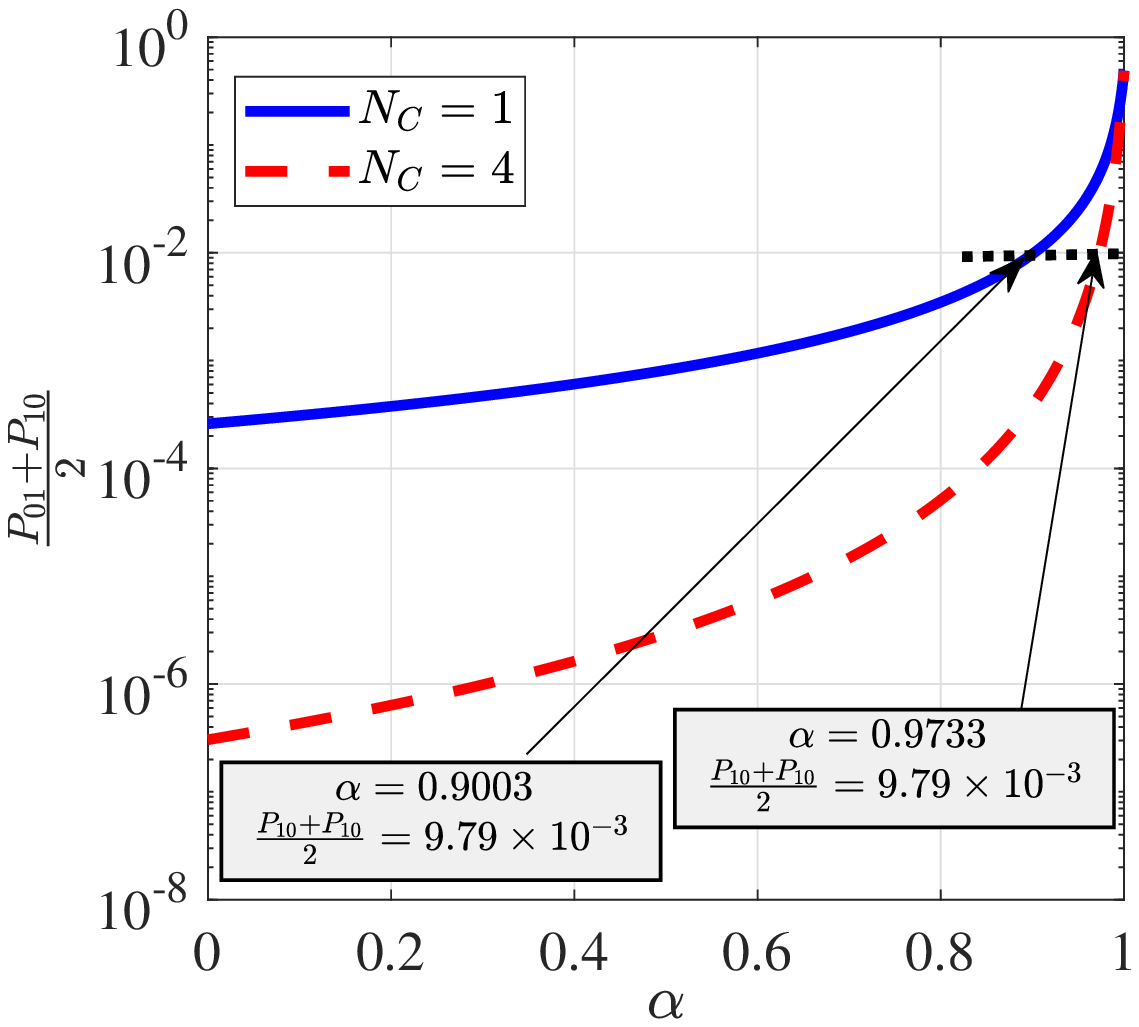}
\caption{\label{fig:DT2} Variation of $\frac{P_{01}+P_{10}}{2}$ as a function of $N_{C}$ and $\alpha$.}
    \end{minipage}%
    \hfill
    \begin{minipage}[t]{0.32\textwidth}
        \centering
     \includegraphics[width = \textwidth, height = 0.9\textwidth]{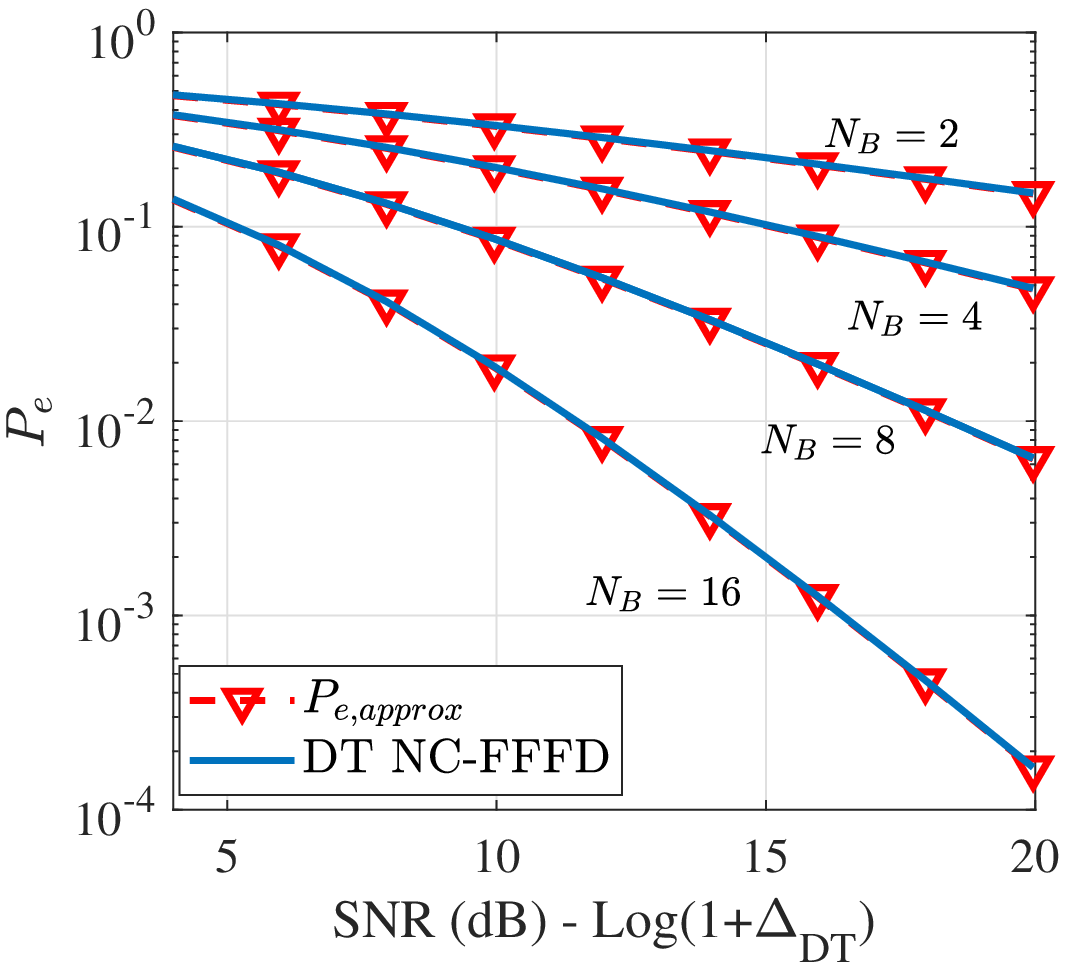}
        \caption{\label{fig:DT_M2}Performance of DT NC-FFFD when energy levels are computed using DT-EB algorithm for $M=2$.}
    \end{minipage}
    \end{center}
\end{figure}

Since the terms $P_{00}$, $P_{01}$, $P_{10}$, and $P_{11}$ are  functions of $\alpha$ and $N_{C}$ in $P_{e}^{\prime}$, we show that  one can achieve the same $P_{00}$, $P_{01}$, $P_{10}$, and $P_{11}$ at different combinations of $\alpha$ and $N_{C}$. The results of Lemma~\ref{lm:P10P01_alpha} show that for a fixed $N_{C}$, $P_{01}$ and $P_{10}$ are increasing functions of $\alpha$. Subsequently, from Lemma~\ref{lm:P10P01_nc}, for a fixed $\alpha$, $P_{01}$ and $P_{10}$ are decreasing functions of $N_{C}$. In Fig.~\ref{fig:DT2}, we plot $\frac{P_{01}+P_{10}}{2}$ as a function of $\alpha$ for various $N_{C}$ at $25$ dB and observe that, for $N_{C}=1$ and $\alpha = 0.9003$, the average probability of error of Alice-to-Charlie link is $9.79\times 10^{-3}$. However, to obtain the same error performance at larger $\alpha$, i.e., $\alpha=0.9733$, we must use $N_{C}=4$.  

Based on the above discussion, in the next section, we propose a variant of EB algorithm, where we bound the interference from the direct link by $\Delta_{\text{DT}}N_{o}$ and obtain $\{\epsilon_{j},\eta_{j}\}$ and the minimum $N_{C}$, such that the error performance is close to $P_{e,approx}$.

\subsection{Delay Tolerant Energy Backtracking (DT-EB) Algorithm}
In the Delay Tolerant Energy Backtracking (DT-EB) algorithm, we obtain the optimal energy levels at Alice and Charlie, such that the energy level on the direct link is bounded by $\Delta_{\text{DT}}N_{o}$. To facilitate this, we use the EB algorithm with two variations, i) we set $\alpha=1-\Delta_{\text{DT}}N_{o}$, instead of $\alpha = 1-S_{2}^{\star}$, ii) the effective SNR to compute $S_{1}^{\star},\cdots,S_{2M}^{\star}$ is $\left(N_{o}+\Delta_{\text{DT}}N_{o}\right)^{-1}$. 
\begin{figure}[!htb]
\begin{center}
    \begin{minipage}[t]{0.32\textwidth}
        \centering
        \includegraphics[width = \textwidth, height = 0.9\textwidth]{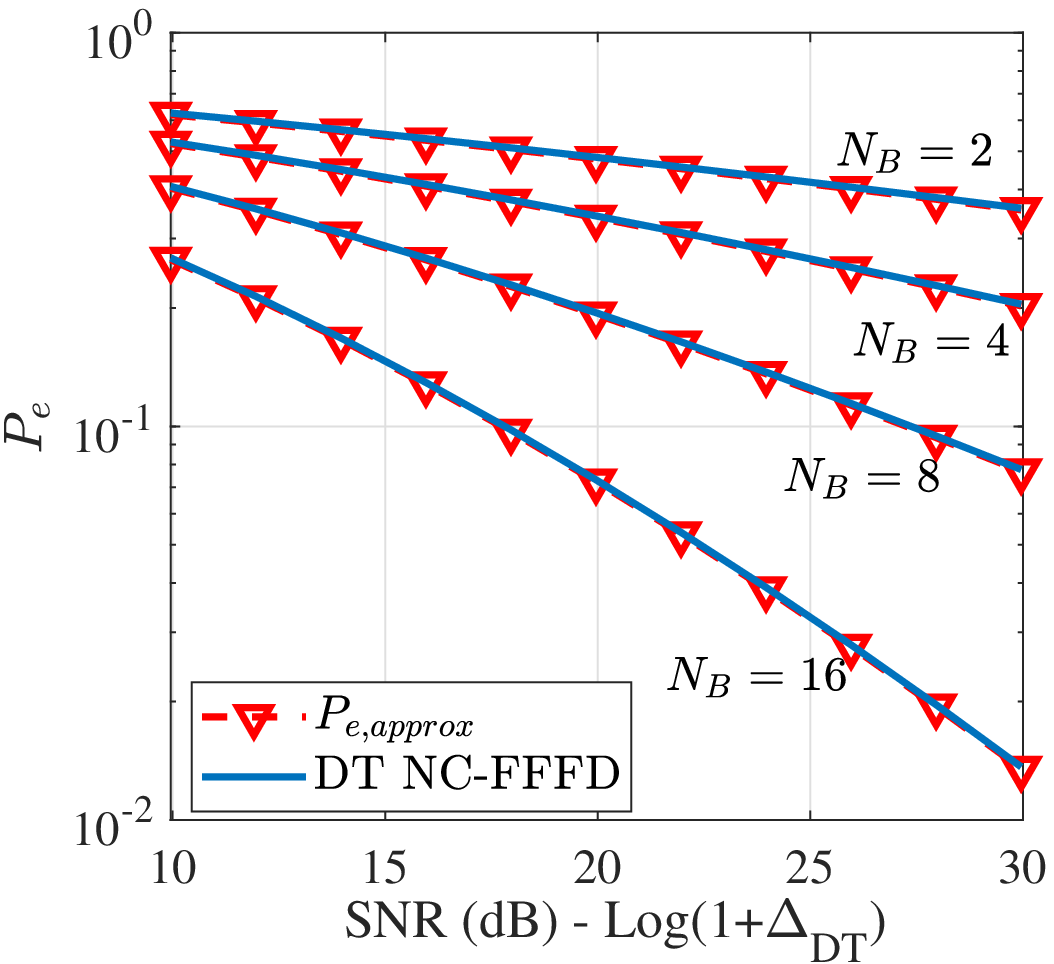}
\caption{\label{fig:DT_M4}Performance of DT NC-FFFD when energy levels are computed using DT-EB algorithm for $M=4$.}
    \end{minipage}%
    \hfill
    \begin{minipage}[t]{0.32\textwidth}
       \centering
      \includegraphics[width = \textwidth, height = 0.9\textwidth]{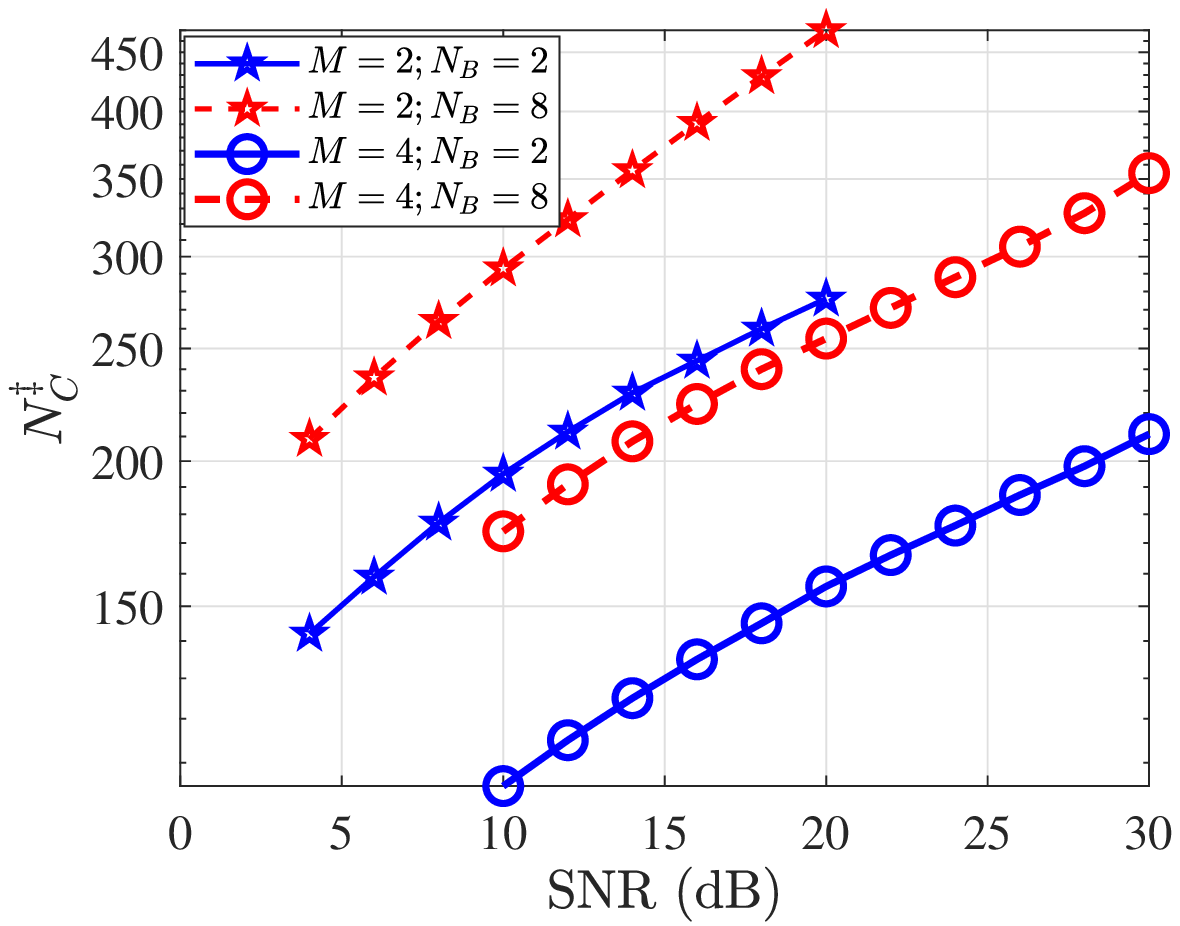}
        \caption{\label{fig:opt_ant_DT} $N_{C}^{\ddagger}$ as a function of SNR for $M=2$ and $M=4$.}
     \end{minipage}%
     \hfill
    \begin{minipage}[t]{0.32\textwidth}
        \centering
      \includegraphics[width = \textwidth, height = 0.9\textwidth]{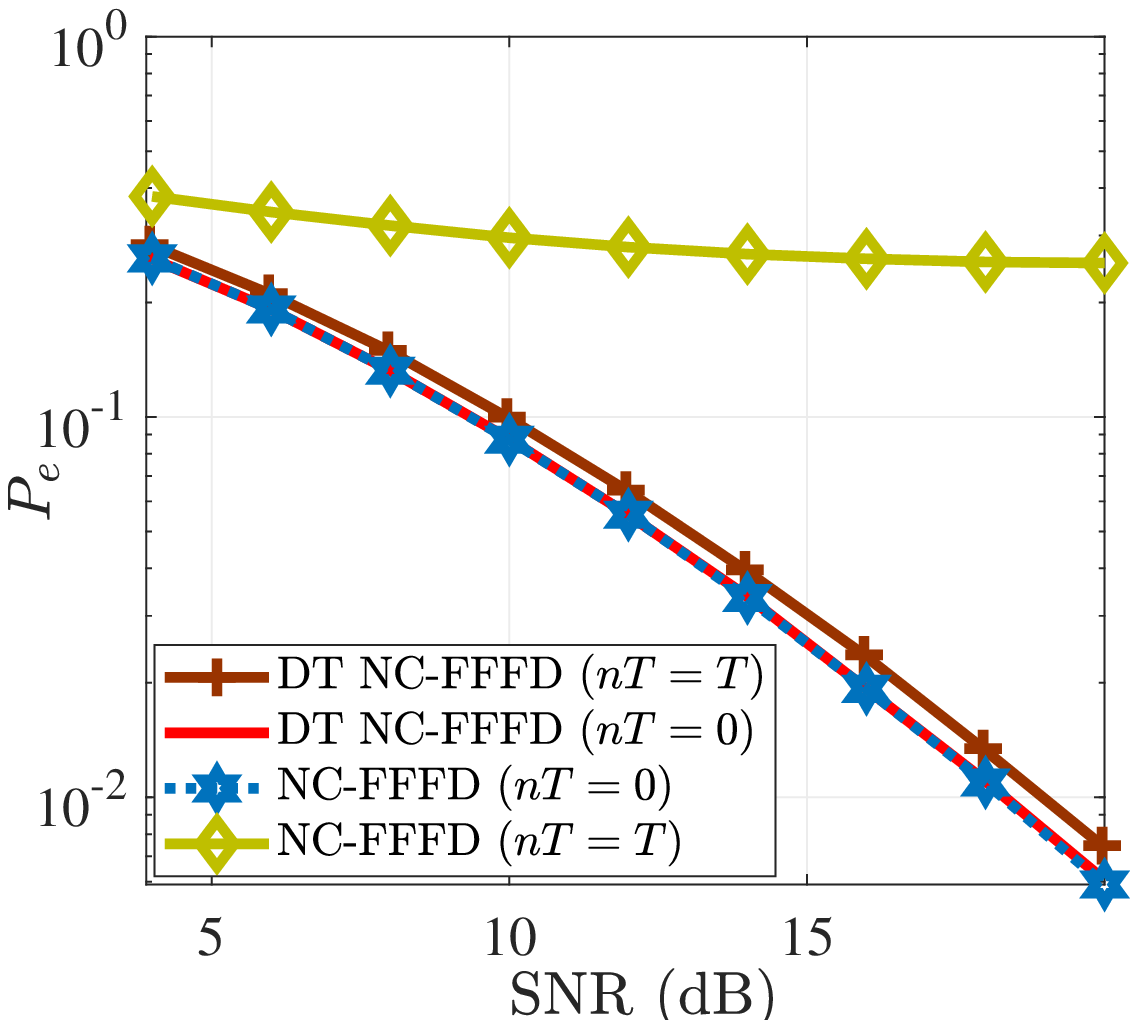}
\caption{\label{fig:DT_comp}DT NC-FFFD scheme, when $nT=0$ and $nT=T$ for $M=2$, $N_{B}=8$, $\Delta_{RE}=10^{-2}$, and $\Delta_{\text{DT}}=10^{-1}$.}
    \end{minipage} 
    \end{center}
\end{figure}

We now demonstrate the performance of DT NC-FFFD scheme. For all simulation purposes, we assume  $\Delta_{RE}=10^{-2}$, and $\Delta_{\text{DT}}=10^{-1}$, in addition to simulation parameters considered in the previous sections. Further, the effective SNR at Bob, denoted by SNR\textsubscript{eff}, is given by SNR\textsubscript{eff} (dB) = $\text{SNR (dB)}-\log\left(1+\Delta_{\text{DT}}\right)$. In Fig.~\ref{fig:DT_M2} and Fig.~\ref{fig:DT_M4}, we plot the error performance of DT NC-FFFD scheme as a function of SNR\textsubscript{eff} for $M=2$ and $M=4$, respectively, when $N_{B}=2,4,8,16$. From these plots, we show that the error performance of DT NC-FFFD improves as a function of SNR\textsubscript{eff}. However, to achieve this performance Charlie must use more receive-antennas as compared to its NC-FFFD counterpart. In Fig.~\ref{fig:opt_ant_DT}, we plot the optimal receive-antennas at Charlie, denoted by $N_{C}^{\ddagger}$, as a function of SNR for various combinations of $M$ and $N_{B}$, and observe that since $\alpha$ is a function of $N_{o}$, the number of receive-antennas required by Charlie is an increasing function of SNR. Further, it is clear from the plot that we need to mount more receive-antennas at Charlie for DT NC-FFFD scheme as compared to NC-FFFD scheme. Furthermore, we also plot the error performances of NC-FFFD and DT NC-FFFD schemes in Fig~\ref{fig:DT_comp}, for the case when $nT=0$ and $nT=T$, when $M=2$ and $N_{B}=8$. From the plots, we find that, when $nT=0$, the error performance of NC-FFFD and DT NC-FFFD exactly overlaps. However, when $nT=T$, the error-rates of DT NC-FFFD are better than the error-rates of NC-FFFD scheme. We also notice a marginal degradation in the performance of DT NC-FFFD when $nT=T$ compared to $nT=0$ due to lower effective SINR in the former case. 
  
\section{Covertness Analysis of NC-FFFD Relaying Scheme}
\label{sec:Covert}
When communicating in the presence of a reactive jamming adversary, it becomes imperative that the communication is covert. In the context of this work, covertness is the ability of Alice and Charlie to communicate without getting detected by Dave's ED or CD. Henceforth, we discuss Dave's capability to detect the proposed  countermeasures by focusing on the communication over $f_{AB}$. 

\subsection{Energy Detector (ED)}
After executing the jamming attack, Dave collects a frame of $L$ symbols on $f_{AB}$ and computes their average energy. A countermeasure is detected when the difference between the computed average energy (after the jamming attack) and the average energy (before the jamming attack) is greater than the tolerance limit $\tau$, where $\tau\geq 0$ is a small number of Dave's choice.

When no countermeasure is implemented, Dave receives symbols from Alice on $f_{AB}$. Since Dave has single receive-antenna, the $l^{th}$ symbol received by Dave on $f_{AB}$ is, $r_{D}(l) = h_{AD}(l)x(l) + n_{D}(l),\ l = 1,\cdots , L$, where, $h_{AD}(l)\sim{\cal CN}\left(0,1\right)$ is the fading channel on the $l^{th}$ symbol on Alice-to-Dave link, $n_{D}(l)\sim{\cal CN}\left(0, \tilde{N}_{o}\right)$ is the effective AWGN at Dave, such that $\tilde{N}_{o}=N_{o}+\sigma_{DD}^{2}$, where $\sigma_{DD}^{2}$ is the variance of the residual SI at Dave and $N_{o}$ is the variance of the AWGN at Dave. Further, the scalar $x(l)\in\{0,1\}$ is the $l^{th}$ symbol transmitted by Alice. Due to uncoded communication over fast-fading channel, $r_{D}(l)$ is statistically independent over $l$. The average energy received by Dave on $f_{AB}$ corresponding to $r_{D}(l)$, $l\in\{1,\cdots,L\}$ is given by, $E_{D,f_{AB}}$, where $E_{D,f_{AB}} = \frac{1}{L}\sum_{l=1}^{L}\left\vert r_{D}(l)\right\vert^{2}$. Since $h_{AD}(l)$ and the AWGN $n_{D}(l)$ are Random Variables (RV), $E_{D,f_{AB}}$ is also a RV. Using weak law of large numbers, $\frac{1}{L}\sum_{l=1}^{L}\left\vert r_{D}(l)\right\vert^{2}\rightarrow E_{f_{AB}}$ in probability, where, $E_{f_{AB}} = \tilde{N}_{o} + 0.5$ denotes the expected energy of $r_{D}(l)$ on $f_{AB}$, before the jamming attack. Since low-latency messages typically have short packet-length, Dave cannot collect a large number of observation samples. Therefore, $L$ is generally small, and with probability $1$, $E_{D,f_{AB}}\neq E_{f_{AB}}$. If $\mathcal{H}_{0}$ and $\mathcal{H}_{1}$ denote the hypothesis of no countermeasure and countermeasure, respectively, then, given $\mathcal{H}_{0}$ is true, false-alarm is an event when $E_{D,f_{AB}}$ deviates from $E_{f_{AB}}$ by an amount greater than $\tau$. We now formally define the probability of false-alarm.

\begin{definition}\label{def:pfa}
The probability of false-alarm denoted by, $\mathbf{P}_{FA}$ is given as, $\Pr\left(\left.\left\vert E_{D,f_{AB}}- E_{f_{AB}}\right\vert > \tau\right\vert \mathcal{H}_{0}\right)$, for $\tau>0$.
\end{definition}
\noindent If $u_{l}$ denotes the energy of $l^{th}$ symbol on $f_{AB}$ without any countermeasure, then the RV corresponding to the average energy of $L$ symbols is denoted by, $\mathcal{U}_{L} = \frac{1}{L}\sum_{l=1}^{L}u_{l}$. In order to compute $\mathbf{P}_{FA}$, first we compute the distribution of $\mathcal{U}_{L}$ in the next theorem.
\begin{theorem}\label{th:pdf_Um}
Given $\mathcal{H}_{0}$ is true, if $\tilde{N}_{o}<<1$, then the  PDF of $~\mathcal{U}_{L}$, i.e., $p_{\mathcal{U}_{L}}(\varsigma)$ is $\left(\frac{1}{2}\right)^{L}\sum_{l=0}^{L}{L \choose l}\frac{L^{l} e^{-L\varsigma} \varsigma^{l-1}}{\Gamma(l)}$, $\varsigma>0$. \cite[Theorem 5]{my_TCCN}
\end{theorem}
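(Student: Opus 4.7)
The plan is to compute the distribution of $\mathcal{U}_L$ by conditioning on Alice's transmit symbols and exploiting the approximation $\tilde{N}_o \ll 1$. First I would write $u_l = |r_D(l)|^2 = |h_{AD}(l)x(l) + n_D(l)|^2$. Conditioned on $x(l)=0$, the term $u_l = |n_D(l)|^2$ is exponentially distributed with mean $\tilde{N}_o$; under the assumption $\tilde{N}_o \ll 1$ this is treated as a point mass at zero. Conditioned on $x(l)=1$, the term $h_{AD}(l)+n_D(l)$ is zero-mean complex Gaussian with variance $1+\tilde{N}_o \approx 1$, so $u_l$ is exponentially distributed with unit mean.

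Next, since Alice's OOK symbols are equiprobable and the symbols received by Dave are statistically independent across $l$, I would introduce $K$ as the (random) number of indices $l$ with $x(l)=1$, so that $K \sim \mathrm{Binomial}(L,1/2)$. Given $K=k$, the sum $\sum_{l=1}^{L} u_l$ reduces to the sum of $k$ i.i.d.\ unit-mean exponential random variables, which is $\mathrm{Gamma}(k,1)$ with density $y^{k-1}e^{-y}/\Gamma(k)$ on $y>0$. Applying the scale transformation $\mathcal{U}_L = Y/L$ then yields the conditional density $\frac{L^{k}\varsigma^{k-1}e^{-L\varsigma}}{\Gamma(k)}$ for $\varsigma>0$.

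Finally, I would use the law of total probability over $K$:
\begin{equation*}
p_{\mathcal{U}_L}(\varsigma) = \sum_{k=0}^{L}\Pr(K=k)\, p_{\mathcal{U}_L\mid K}(\varsigma\mid k) = \left(\tfrac{1}{2}\right)^L \sum_{k=0}^{L}\binom{L}{k}\frac{L^{k}\varsigma^{k-1}e^{-L\varsigma}}{\Gamma(k)},
\end{equation*}
which matches the claimed expression after relabeling $k \to l$. The $k=0$ contribution corresponds to a point mass at the origin and vanishes for $\varsigma>0$ under the convention $1/\Gamma(0)=0$, which is the standard way to absorb that term into a unified summation.

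The main obstacle is justifying the $\tilde{N}_o \ll 1$ approximation cleanly: strictly speaking $u_l$ when $x(l)=0$ is $\mathrm{Exp}(\tilde{N}_o^{-1})$ rather than a Dirac mass, and when $x(l)=1$ it is $\mathrm{Exp}((1+\tilde{N}_o)^{-1})$ rather than $\mathrm{Exp}(1)$; these contribute $O(\tilde{N}_o)$ corrections that are dropped. The referenced result \cite[Theorem 5]{my_TCCN} presumably formalizes this limit, so I would cite it for this step rather than re-deriving the bound, and focus the exposition on the binomial mixture and Gamma scaling that yield the stated closed form.
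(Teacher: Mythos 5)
Your derivation is correct, and it reaches the stated formula by a genuinely different route than the paper. The paper does not re-derive this result (it cites an external theorem), but its own proof of the companion result for $\mathcal{V}_{L}$ (Theorem~\ref{th:pdf_Vm}) shows the intended technique: approximate the per-symbol scaled energy density by $\tfrac{1}{2}\left(\delta(\varsigma)+L e^{-L\varsigma}\right)$ under $\tilde{N}_{o}\ll 1$, then compute the $L$-fold convolution in the Laplace domain and expand $\left[1+\tfrac{L}{s+L}\right]^{L}$ binomially, which produces the same $\binom{L}{l}$-weighted sum of Gamma densities term by term. You instead condition on the binomial count $K$ of ``on'' symbols, use the fact that a sum of $k$ i.i.d.\ unit-mean exponentials is $\mathrm{Gamma}(k,1)$, rescale by $1/L$, and average over $K\sim\mathrm{Binomial}(L,\tfrac{1}{2})$; your handling of the $k=0$ term as the atom at the origin (absorbed via the convention $1/\Gamma(0)=0$) matches how the paper's formula should be read, since the stated density integrates to $1-(1/2)^{L}$ on $\varsigma>0$. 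Your approach is more elementary and makes the probabilistic structure (binomial mixture of Gamma laws, plus the point mass) explicit, whereas the transform approach is mechanical and carries over verbatim to the unequal-variance case of $\mathcal{V}_{L}$ with scale $\mathcal{A}$ --- though your conditioning argument would handle that case just as easily by replacing $\mathrm{Gamma}(k,1)$ with $\mathrm{Gamma}(k,\mathcal{A})$. Your acknowledgement that the $\mathrm{Exp}(\tilde{N}_{o})$ and $\mathrm{Exp}(1+\tilde{N}_{o})$ laws are replaced by a Dirac mass and $\mathrm{Exp}(1)$ up to $O(\tilde{N}_{o})$ corrections is exactly the same approximation the paper makes, so there is no gap.
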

From Definition~\ref{def:pfa}, $\mathbf{P}_{FA} = \Pr\left(E_{D,f_{AB}}>E_{f_{AB}} + \tau\right) + \Pr\left(E_{D,f_{AB}}\leq E_{f_{AB}} -\tau\right)$. Therefore, using the PDF of $\mathcal{U}_{L}$ from Theorem~\ref{th:pdf_Um}, the closed-form expression of $\mathbf{P}_{FA}$ is given by,
\bieee
\mathbf{P}_{FA} &=& \dfrac{1}{2^{L}}\left(\sum_{l=0}^{L}{L \choose l}\dfrac{\Gamma\left(l, L(E_{f_{AB}}+\tau)\right)}{\Gamma(l)} + \sum_{l=0}^{L}{L \choose l}\dfrac{\gamma\left(l, L(E_{f_{AB}}-\tau)\right)}{\Gamma(l)}\right).\label{eq:pfa}
\eieee
When using NC-FFFD relaying scheme, at the $l^{th}$ symbol instant, Alice and Charlie synchronously transmit dummy OOK bit $b(l)\in\{0,1\}$ with energies $\alpha$ and $1-\alpha$, respectively, on $f_{AB}$, where $b(l)$ is the least significant bit of the pre-shared Gold sequence. The baseband symbol received at Dave is, $r_{D}(l) = h_{AD}(l)\sqrt{\alpha}b(l) + h_{CD}(l)\sqrt{1-\alpha}b(l) + n_{D}(l)$, where, for $l^{th}$ symbol, $h_{AD}(l)\sim{\cal CN}\left(0, 1\right)$ and $h_{CD}(l)\sim{\cal CN}\left(0, (1+\partial)\right)$ are Rayleigh fading channels for Alice-to-Dave and Charlie-to-Dave links, respectively. Since the location of Dave can be arbitrary, the variances of Alice-to-Dave and Charlie-to-Dave links are not identical. Thus, $\partial$ captures the relative difference in the variance. If $E_{D,f_{AB}}^{FF}$ denotes the average energy received at Dave, when Alice and Charlie use NC-FFFD scheme, then due to change in the signal model, $E_{D,f_{AB}}^{FF}\neq E_{D,f_{AB}}$. Along the similar lines of $\mathbf{P}_{FA}$, we now formally define the probability of detection at Dave when NC-FFFD scheme is implemented.
\begin{definition}\label{def:pd}
If $\mathbf{P}_{D}$ denotes the probability of detection at Dave when $\mathcal{H}_{1}$ true, then for any $\tau>0$, $\mathbf{P}_{D} = \Pr\left(\left.\left\vert E_{D,f_{AB}}^{FF}-E_{f_{AB}}\right\vert > \tau\right\vert \mathcal{H}_{1}\right)$.
\end{definition}
Further, if $v_{l}$ denotes the energy of $l^{th}$ received symbol when using the countermeasure, then $\mathcal{V}_{L}$ denotes the average energy of $L$ symbols, where, $\mathcal{V}_{L} = \frac{1}{L}\sum_{l=1}^{L}v_{l}$. We provide the closed-form expression of PDF of $v_{l}$ and $\mathcal{V}_{L}$ in the next theorem.
\begin{theorem}
\label{th:pdf_Vm}
When $\tilde{N}_{o}<<1$ and $\mathcal{H}_{1}$ is true, the PDF of $\mathcal{V}_{L}$, i.e., $p_{\mathcal{V}_{L}}(\varsigma)$, is given by, 
\bieee
\left(\frac{1}{2}\right)^{L}\sum_{l=0}^{L}{L \choose l}\frac{\left(\frac{L}{\mathcal{A}}\right)^{l} e^{-\frac{L}{\mathcal{A}}\varsigma} \varsigma^{l-1}}{\Gamma(l)}, \text{ where, $\varsigma>0$ and $\mathcal{A} = \alpha + (1-\alpha)(1+\partial)$.}\label{eq:pfd_Vm}
\eieee
\end{theorem}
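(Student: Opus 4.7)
The plan is to mirror the two-stage structure already used to derive the PDF of $\mathcal{U}_{L}$ in Theorem~\ref{th:pdf_Um}, namely: first determine the conditional distribution of $v_{l}$ given the dummy bit $b(l)$, then aggregate over $l$ and average out $b(l)$. Because the pre-shared Gold sequence used to generate $\{b(l)\}$ is approximately balanced and random, I will treat the bits as i.i.d.\ Bernoulli$(1/2)$, so that conditioning on the number $k$ of ones in the length-$L$ block is equivalent to conditioning on $k\sim\mathrm{Bin}(L,1/2)$ and then picking any $k$ of the $L$ slots to carry a one.

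Next, I will compute the per-symbol conditional density of $v_{l}=|r_{D}(l)|^{2}$. When $b(l)=0$, we have $r_{D}(l)=n_{D}(l)\sim\mathcal{CN}(0,\tilde{N}_{o})$, so $v_{l}$ is exponential with mean $\tilde{N}_{o}$, which under the assumption $\tilde{N}_{o}\ll 1$ is taken to be effectively $0$ (exactly the same simplification used in Theorem~\ref{th:pdf_Um}). When $b(l)=1$, the three independent zero-mean circularly symmetric Gaussian contributions $h_{AD}(l)\sqrt{\alpha}$, $h_{CD}(l)\sqrt{1-\alpha}$, and $n_{D}(l)$ add to give $r_{D}(l)\sim\mathcal{CN}\bigl(0,\alpha+(1-\alpha)(1+\partial)+\tilde{N}_{o}\bigr)\approx\mathcal{CN}(0,\mathcal{A})$, so $v_{l}$ is exponential with mean $\mathcal{A}$, i.e.\ rate $1/\mathcal{A}$.

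Conditioning on exactly $k$ of the $L$ bits being $1$, the ``noise-only'' symbols contribute negligibly and the sum $\sum_{l=1}^{L}v_{l}$ is a sum of $k$ i.i.d.\ Exp$(1/\mathcal{A})$ random variables, hence Gamma-distributed with shape $k$ and scale $\mathcal{A}$. Applying the change of variable $\mathcal{V}_{L}=\tfrac{1}{L}\sum v_{l}$ rescales this to Gamma$(k,\mathcal{A}/L)$, whose density at $\varsigma>0$ is $\dfrac{(L/\mathcal{A})^{k}\varsigma^{k-1}e^{-L\varsigma/\mathcal{A}}}{\Gamma(k)}$. Finally, I will uncondition on $k$ using $\Pr(k)=\binom{L}{k}(1/2)^{L}$, and the law of total probability gives exactly the claimed expression \eqref{eq:pfd_Vm}; the $k=0$ term is handled automatically because $1/\Gamma(0)=0$, which is consistent with the pointwise density statement on $\varsigma>0$ (the missing atom at $\varsigma=0$ from the all-zero bit pattern is absorbed by the $\tilde{N}_{o}\ll 1$ approximation).

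The main obstacle is the $b(l)=0$ case: strictly speaking those symbols contribute exponential noise energy rather than zero, so the result is an approximation in the same regime used for Theorem~\ref{th:pdf_Um}. I will make the approximation explicit by noting that the noise-only contribution concentrates at $0$ on the scale $\tilde{N}_{o}\ll 1$ compared with $\mathcal{A}=\alpha+(1-\alpha)(1+\partial)$, which legitimises dropping those terms from the sum before applying the Gamma convolution. With that simplification in hand, the remaining steps (Gaussian sum, exponential-to-Gamma convolution, linear rescaling, and binomial unconditioning) are standard and yield the stated PDF.
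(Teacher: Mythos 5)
Your proposal is correct and reaches exactly the stated mixture-of-Gamma density, but it gets there by a somewhat different route than the paper. The paper works analytically: it writes the per-symbol density of $v_{l}$ as the two-component exponential mixture $\frac{1}{2}\bigl(\tilde{N}_{o}^{-1}e^{-\varsigma/\tilde{N}_{o}}+(\tilde{N}_{o}+\mathcal{A})^{-1}e^{-\varsigma/(\tilde{N}_{o}+\mathcal{A})}\bigr)$, invokes $\tilde{N}_{o}\ll 1$ to replace the noise-only component by $\delta(\varsigma)$ and $\tilde{N}_{o}+\mathcal{A}$ by $\mathcal{A}$, and then obtains the $L$-fold convolution of $p_{v_{l}/L}$ through the Laplace transform, expanding $\bigl[1+\frac{L/\mathcal{A}}{s+L/\mathcal{A}}\bigr]^{L}$ binomially and inverting term by term. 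You instead condition on the number $k$ of ones in the dummy-bit block (i.i.d.\ Bernoulli$(1/2)$, the same implicit assumption as the paper's equal mixture weights), note that the $k$ active symbols yield a Gamma$(k,\mathcal{A}/L)$ sum after rescaling by $1/L$, and then average over $k\sim\mathrm{Bin}(L,1/2)$; this is the probabilistic reading of the same binomial expansion, so the two derivations are term-for-term equivalent. What your route buys is transparency: the binomial-mixture-of-Gammas structure and the vanishing $k=0$ term (the atom at $\varsigma=0$ from the all-zero pattern) are explicit rather than emerging from an inverse transform, and you make the underlying approximation — dropping the $\mathrm{Exp}(\tilde{N}_{o})$ contributions of the silent symbols — explicit where the paper buries it in the delta-function replacement. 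What the transform route buys is mechanical generality: it would carry through unchanged if one chose not to neglect $\tilde{N}_{o}$, whereas your conditioning argument would then need to convolve two nontrivial Gamma blocks. Both hinge on the same $\tilde{N}_{o}\ll 1$ step, and neither has a gap.
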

\begin{proof}
When $\mathcal{H}_{1}$ is true, the received symbol at Dave on $f_{AB}$ is given as, $r_{D}(l) = h_{AD}(l)\sqrt{\alpha}b(l) + h_{CD}(l)\sqrt{1-\alpha}b(l) + n_{D}(l)$. Thus, the PDF of $v_{l}$ is given as, $p_{v_{l}}\left(\varsigma\right) = \frac{1}{2}\left(\dfrac{1}{\tilde{N}_{o}}e^{-\frac{\varsigma}{\tilde{N}_{o}}} + \frac{1}{\tilde{N}_{o}+\mathcal{A}}e^{-\frac{\varsigma}{\tilde{N}_{o}+\mathcal{A}}}\right)$, where $\mathcal{A} = \alpha + (1-\alpha)(1+\partial)$. Computing $\mathcal{V}_{M}$ requires us to sum  $L$ independent $v_{l}$ random variables each scaled by $L$, i.e., sum of $L$ independent ${v_{m}}/{M}$ random variables. Therefore,
\bieee
p_{v_{l}/L}(\varsigma) = \dfrac{1}{2}\left(\dfrac{1}{\tilde{N}_{o}}e^{-\frac{L\varsigma}{\tilde{N}_{o}}} + \dfrac{L}{\tilde{N}_{o}+\mathcal{A}}e^{-\frac{L\varsigma}{\tilde{N}_{o}+\mathcal{A}}}\right)\approx \dfrac{1}{2}\left(\delta(\varsigma) + \dfrac{L}{\mathcal{A}}e^{-\frac{L}{\mathcal{A}}\varsigma}\right),\nn
\eieee
\noindent where, the approximation is because $\tilde{N}_{o}<<1$. The pdf of $\mathcal{V}_{L}$ is equivalent to computing $L$-fold convolution ($*L$) of $p_{v_{l}/L}(\varsigma)$. For simplification, we use the properties of Laplace transform ($\mathscr{L}[\cdot]$) and inverse Laplace transform ($\mathscr{L}^{-1}[\cdot]$) to compute the pdf of $\mathcal{V}_{L}$ as,  
\bieee
p_{\mathcal{V}_{L}}(\varsigma) = \left(p_{v_{l}/L}(\varsigma)\right)^{*L} &=& \mathscr{L}^{-1}\left[\left(\mathscr{L}\left[p_{v_{l}/L}(\varsigma)\right]\right)^{L}\right] = \mathscr{L}^{-1}\left[\left(\dfrac{1}{2}\right)^{L}\left[1 + \frac{L/\mathcal{A}}{\left(s+L/\mathcal{A}\right)}\right]^{L}\right].\nn
\eieee
Using binomial expansion we expand $\left[1 + \frac{L/\mathcal{A}}{\left(s+L/\mathcal{A}\right)}\right]^{L}$ and substitute in above to obtain \eqref{eq:pfd_Vm}.
\end{proof}

Overall, from Definition~\ref{def:pd} we have, $\mathbf{P}_{D} = \Pr\left(E_{D,f_{AB}}^{FF}>E_{f_{AB}} + \tau\right) + \Pr\left(E_{D,f_{AB}}^{FF}\leq E_{f_{AB}} -\tau\right)$. Thus, the probability of miss-detection, $\mathbf{P}_{MD}$ is given by $1-\mathbf{P}_{D}$. From the result of Theorem~\ref{th:pdf_Vm},
\bieee
 \mathbf{P}_{MD} &=& 1 -\dfrac{1}{2^{L}}\sum_{l=0}^{L}{L \choose l}\dfrac{\Gamma\left(l, \frac{L}{\mathcal{A}}(E_{f_{AB}}+\tau)\right)}{\Gamma(l)} - \dfrac{1}{2^{L}}\sum_{l=0}^{L}{L \choose l}\dfrac{\gamma\left(l, \frac{L}{\mathcal{A}}(E_{f_{AB}}-\tau)\right)}{\Gamma(l)}.\label{eq:pmd}
\eieee
Ideally, a low $\mathbf{P}_{FA}$ and a high $\mathbf{P}_{D}$ allows Dave to detect a countermeasure. However, the legitimate nodes would like to drive the sum $\mathbf{P}_{FA}+\mathbf{P}_{MD}$ close to $1$ for any value of $\tau$.
\begin{rem}
If $\partial=0$ in \eqref{eq:pmd}, then $\mathbf{P}_{FA} + \mathbf{P}_{MD} = 1$, for all $\alpha\in (0,1)$ and arbitrary $M$ and $\tau$.
\end{rem}
\begin{figure}[!htb]
    \centering
    \begin{subfigure}[b]{0.48\textwidth}
       \centering
        \includegraphics[width = 0.8\textwidth, height = 5cm]{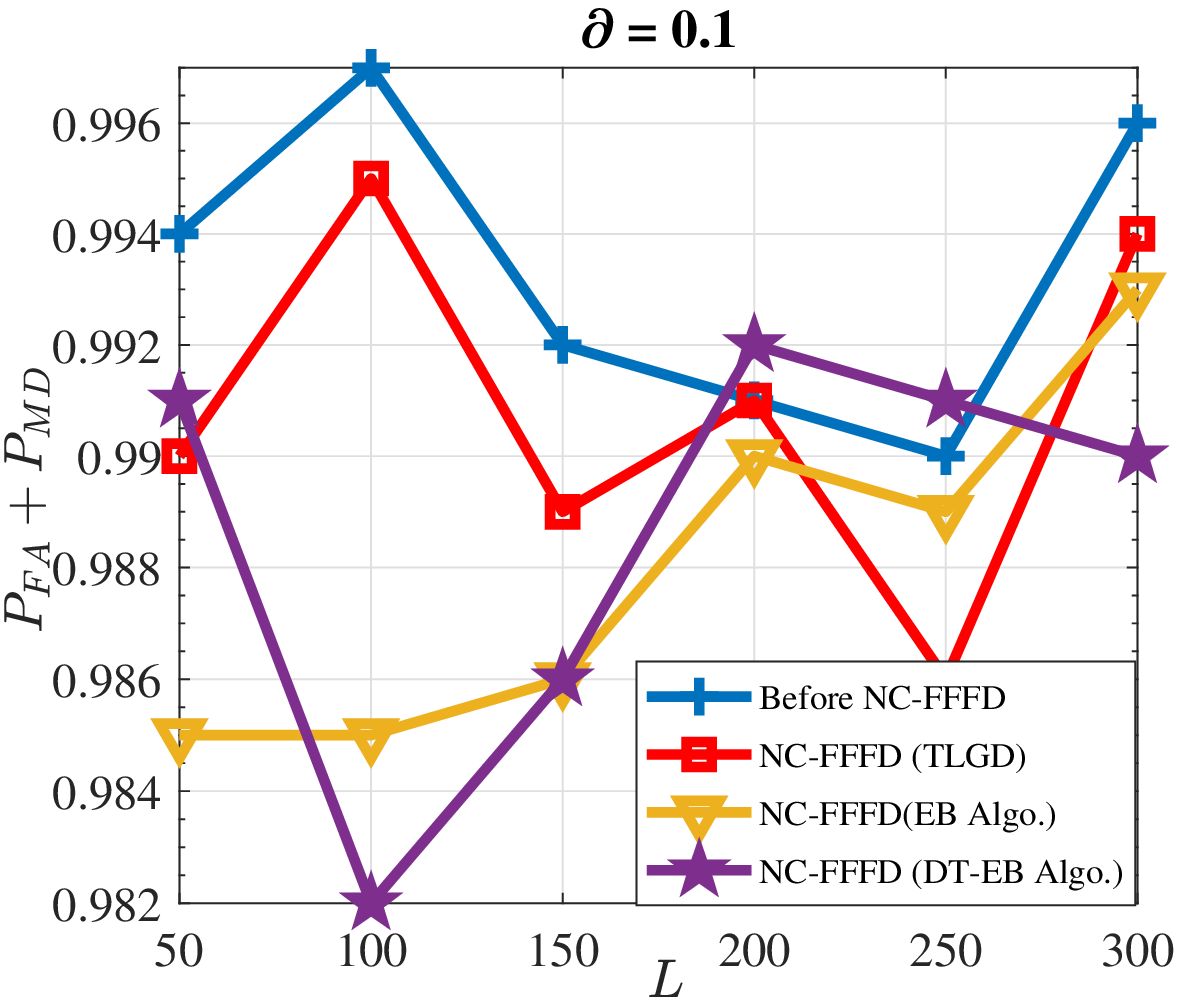}
     \end{subfigure}%
        \hfill
    \begin{subfigure}[b]{0.48\textwidth}
        \centering
        \includegraphics[width = 0.8\textwidth, height = 5cm]{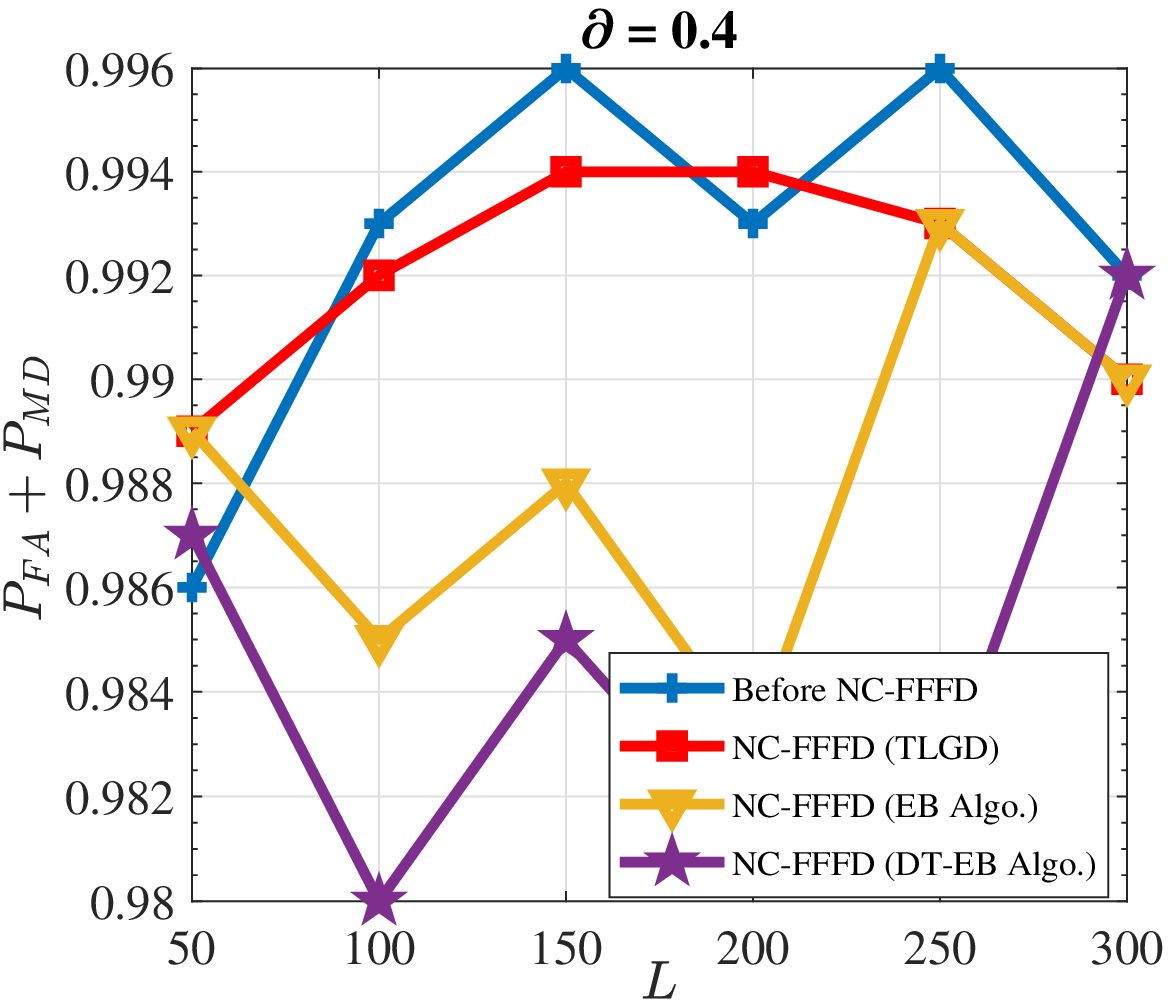}
    \end{subfigure}
    \caption{\label{fig:PFA_PMD} $\mathbf{P}_{FA} + \mathbf{P}_{MD}$ as a function of $L$ and $\partial$ at 25 dB (including the residual SI), $N_{B}=8$, and $\Delta_{\text{DT}}=0.1$.}
\end{figure}
Although, the above result theoretically guarantees $\mathbf{P}_{FA} + \mathbf{P}_{MD} = 1$ for specific value of $\partial$, Fig.~\ref{fig:PFA_PMD} shows the simulation results for  $\partial=0.1$ and $\partial=0.4$. Since Alice and Charlie transmit $b(l)$, with energy $\alpha$ and $1-\alpha$, respectively, the communication on $f_{AB}$ is independent of $\{\epsilon_{j},\eta_{j}\vert j=1,\cdots,M\}$ and only depends on the value of $\alpha$. In the previous sections, we have computed the values of $\alpha$ using TLGD algorithm, EB algorithm, and DT-EB algorithm. Therefore, in Fig.~\ref{fig:PFA_PMD}, we plot the sum $\mathbf{P}_{FA} + \mathbf{P}_{MD}$ at Dave as a function of $L$, when the optimal value of $\alpha$ is computed using TLGD, EB, and DT-EB algorithm. For $\text{SNR}=25$ dB, $N_{B}=8$, and $\Delta_{\text{DT}}=0.1$, we observe that the sum $\mathbf{P}_{FA} + \mathbf{P}_{MD}\approx 1$, despite a large number of samples at Dave. These simulation results indicate that the ED at Dave is oblivious to the countermeasures implemented by Alice and Charlie.

\subsection{Correlation Detector (CD)}
In order to prevent Alice and Charlie from using repetitive coding across the frequencies \cite{my_PIMRC} and \cite{my_TCCN}, Dave deploys a CD to capture the correlation between the symbols on the jammed frequency and other frequencies in the network. Amongst  several methods to estimate the correlation, Dave uses a CD that estimates the correlation in terms of Mutual Information (MI) to capture both, linear as well as non-linear correlation between the samples. However, estimating MI requires estimating the underlying marginal and joint PDFs, which is hard in general. Therefore, Dave needs a non-parametric method of MI estimation that does not require him to know the underlying joint and marginal PDFs. KSG estimators \cite{KSG} based on \textbf{k}-nearest neighbours (\textbf{k}-NN) are well known for non-parametric MI estimation for their ease of implementation. Thus, Dave uses a KSG estimator to detect the proposed countermeasures. Since the information symbols on the frequency bands other than $f_{CB}$ are implicitly independent of symbols on $f_{AB}$, we only focus on estimating the correlation between the symbols on $f_{AB}$ and $f_{CB}$ as $f_{CB}$ is the helper's frequency band. In the context of this work, Dave estimates the MI between the energies of the samples on $f_{AB}$ and $f_{CB}$.

We will first show the effect of transmitting dummy OOK bit $b\in\{0,1\}$ on $f_{AB}$, from the pre-shared Gold sequence. When no countermeasure is used, and Alice and Charlie transmit independent symbols on $f_{AB}$ and $f_{CB}$, respectively, the energy scatter-plot is as shown in Fig.~\ref{fig:CD}a. If Alice and Charlie  use repetitive coding across the frequencies, the energy samples are clustered only around few centres as shown in Fig.\ref{fig:CD}b. Further, when Alice and Charlie cooperatively use Gold-sequence bits, the scatter-plot is as shown in Fig.~\ref{fig:CD}c. It can be observed that the scatter-plot in Fig~\ref{fig:CD}a and Fig~\ref{fig:CD}c are more randomised as compared to Fig.~\ref{fig:CD}b. This suggests that, when Alice and Charlie transmits bits from a Gold-sequence based scrambler on $f_{AB}$, the observations at Dave are similar to when they were transmitting independent symbols.
\begin{figure}[!hbt]
\vspace{-0.25in}
\centering
\includegraphics[width = 0.73\textwidth, height = 8.5cm]{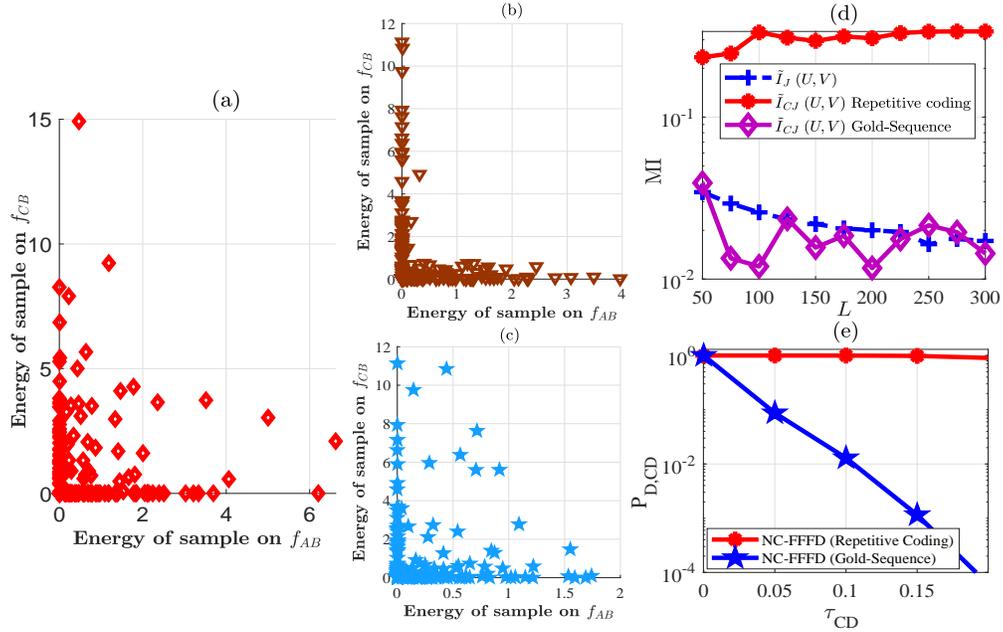}
\caption{\label{fig:CD}Scatter-plots representing the energy pairs received at Dave for SNR = 25 dB, $N_{B}=8$, $L=50$, when (a) Dave is not jamming. (b)Alice and Charlie use repetitive coding across $f_{AB}$ and $f_{CB}$. (c) Alice and Charlie cooperatively use Gold-sequence. (d) MI before jamming and after using  NC-FFFD with Gold-sequence and with Repetitive coding as a function of $L$ at SNR = 25 dB, $\mathbf{k}=2$, and $N_{B}=8$. (e) $\mathbf{P}_{\text{D,CD}}$ when NC-FFFD is implemented with repetitive coding and with Gold-sequence, for $L=150$ at 25 dB, $N_{B}=8$, and $\mathbf{k}=2$.}
\end{figure}

To formally measure the correlation, let $U$ and $V$ denote the RVs corresponding to the energy of samples on $f_{AB}$ and $f_{CB}$, respectively. Before the jamming attack, since Alice and Charlie transmit independent symbols, the MI estimate denoted by, $\tilde{I}(U,V)$, should be zero. However, due to small number of samples, i.e., $L$, $\tilde{I}(U,V)$ is a small non-zero value which approaches zero as the number of samples increases. In Fig.~\ref{fig:CD}d, for 25 dB, $N_{B}=8$, and $\mathbf{k}=2$, we plot the estimated MI at Dave, before the jamming attack,  and after implementing NC-FFFD scheme with repetitive coding and Gold-sequence. We use $10^{3}$ iterations for each value of $L$ to compute $\tilde{I}(U,V)$  before the jamming attack because Dave wants to have a good estimate of MI. However, after the jamming attack, Dave cannot use multiple iterations for estimating $\tilde{I}(U,V)$ for a given $L$. From Fig.~\ref{fig:CD}d, it is clear that the estimated MI for NC-FFFD is high and an increasing function of $L$ when repetitive coding is used. In contrast, when Gold-sequence is used, the estimated MI oscillates near the MI estimate  before the jamming attack.

If $\tau_{\text{CD}}$ denotes the resolution of Dave's CD, then probability of detection denoted by, $\mathbf{P}_{\text{D,CD}}$ is given as, $\mathbf{P}_{\text{D,CD}} = \Pr\left(\left.\left\vert\mathbb{E}\left[\tilde{I}_{\text{J}}(U,V)\right]-\tilde{I}_{\text{CJ}}(U,V)\right\vert \geq \tau_{\text{MID}}\right\vert\mathcal{H}_{1}\right)$, where, $\mathbb{E}\left[\tilde{I}_{\text{J}}(U,V)\right]$ denotes the long term estimate of MI before the jamming attack and $\tilde{I}_{\text{CJ}}(U,V)$ denotes the estimate of MI after implementing NC-FFFD with pre-shared Gold-sequence bits. In Fig.~\ref{fig:CD}e, we plot $\mathbf{P}_{\text{D,CD}}$ at SNR = 25 dB, $N_{B}=8$, and $L=150$ samples, where the optimal value of $\alpha$ is computed using the EB algorithm. Since $\tau_{\text{CD}}$ determines the accuracy of the CD, we assume $0\leq\tau_{\text{CD}}\leq 0.20$, because a very large resolution results in poor detection. When NC-FFFD is implemented and Alice uses  repetitive coding, $\mathbf{P}_{\text{D,CD}}$ is close to $1$ for the given range of $\tau_{\text{CD}}$. However, when NC-FFFD is implemented using Gold-sequence bits $b$, $\mathbf{P}_{\text{D,CD}}$ reduces as a function of $\tau_{\text{CD}}$. Thus, when Alice and Charlie use Gold-sequence bits to transmit on $f_{AB}$, the probability of detecting the countermeasure at Dave is small, as the symbols on $f_{AB}$ and $f_{CB}$ are independent by design.

\section{Conclusion} 
In this work, we have envisaged a strong FD adversary who injects jamming energy on the low-latency messages of the victim in a fast-fading environment. Unlike the reactive adversaries in the literature, the adversary in our model, uses an energy detector and a correlation detector to prevent the use of any pre-existing countermeasures. Against this threat model, we have proposed NC-FFFD relaying scheme, where the victim seeks help from a helper to fast-forward its symbols to the base station. Based on the helper's data-rate we have derived analytical results on the joint error performance, and then have proposed a family of algorithms to compute the near-optimal energy levels at the victim and the helper nodes. Further, we have also shown that, with high probability, the proposed scheme successfully engages the reactive adversary to the jammed frequency. Overall, this is the first work of its kind to address security threats from a reactive adversary in a fast-fading environment.

\ifCLASSOPTIONcaptionsoff
  \newpage
\fi
\bibliography{Ref} 
\bibliographystyle{IEEEtran}

\end{document}